\newcommand{\given}{\,|\,}
\newcommand{\prob}[2][]{\text{\bf Pr}\ifthenelse{\not\equal{}{#1}}{_{#1}}{}\!\left[{\def\givenn{\middle|}#2}\right]}
\newcommand{\expect}[2][]{\text{\bf E}\ifthenelse{\not\equal{}{#1}}{_{#1}}{}\!\left[{\def\givenn{\middle|}#2}\right]}
\newcommand{\tparen}{\big}
\newcommand{\tprob}[2][]{\text{\bf Pr}\ifthenelse{\not\equal{}{#1}}{_{#1}}{}\tparen[{\def\given{\tparen|}#2}\tparen]}
\newcommand{\texpect}[2][]{\text{\bf E}\ifthenelse{\not\equal{}{#1}}{_{#1}}{}\tparen[{\def\given{\tparen|}#2}\tparen]}
\newcommand{\sprob}[2][]{\text{\bf Pr}\ifthenelse{\not\equal{}{#1}}{_{#1}}{}[#2]}
\newcommand{\sexpect}[2][]{\text{\bf E}\ifthenelse{\not\equal{}{#1}}{_{#1}}{}[#2]}
\newcommand{\rbr}[1]{\left(\,#1\,\right)}
\newcommand{\reals}{{\mathbb R}}
\newcommand{\posreals}{\reals_+}
\newcommand{\opt}[1]{#1^{\star}}
\newcommand{\rev}{\mathrm{Rev}}
\newcommand{\OPT}{\mathrm{OPT}}
\newcommand{\E}{\mathbb{E}}
\newcommand{\R}{\mathbb{R}}
\newcommand{\indicator}[1]{\mathbbm{1}[#1]}
\newcommand{\VAL}{\mathrm{VAL}}
\newcommandx{\improvement}[2][1=]{\todo[linecolor=Plum,backgroundcolor=Plum!25,bordercolor=Plum,#1]{#2}}
\DeclarePairedDelimiter\ceil{\lceil}{\rceil}
\newtheorem{theorem}{Theorem}[section]
\newtheorem*{theorem*}{Theorem}
\newtheorem{lemma}[theorem]{Lemma}
\newtheorem*{lemma*}{Lemma}
\newtheorem{definition}[theorem]{Definition}
\newtheorem{claim}[theorem]{Claim}
\newtheorem{proposition}[theorem]{Proposition}
\newtheorem*{observation*}{Observation}
\newcommand{\nonl}{\renewcommand{\nl}{\let\nl\oldnl}}
\newcommand*{\rom}[1]{\expandafter\@slowromancap\romannumeral #1@}
\newcommand{\tal}[1]{{\color{blue}[TA: #1]}}
\DeclareMathOperator{\argmax}{\arg\max}
\title{Multi-Project Contracts}
\begin{document}


\author{
Tal Alon \\
\small{alontal@campus.technion.ac.il} \\
{Technion--Israel Institute of Technology} \\
\and
{Matteo Castiglioni} \\
\small{matteo.castiglioni@polimi.it} \\
{Politecnico di Milano} \\
\and 
{Junjie Chen} \\
\small{chen.junjie@inf.kyushu-u.ac.jp} \\
{Kyushu University} \\
\and
{Tomer Ezra} \\
\small{tomer@cmsa.fas.harvard.edu} \\ 
{Harvard University} \\
\and
{Yingkai Li} \\
\small{yk.li@nus.edu.sg} \\
{National University of Singapore} \\
\and
{Inbal Talgam-Cohen } \\
\small{inbaltalgam@gmail.com} \\
{Tel Aviv University} \\
}

\date{}

\begin{titlepage}
	\clearpage\maketitle
	\thispagestyle{empty}

\begin{abstract}

We study a new class of contract design problems where a principal delegates the execution of multiple projects to a set of agents. The principal's expected reward from each project is a combinatorial function of the agents working on it. Each agent has limited capacity and can work on at most one project, and the agents are heterogeneous, with different costs and contributions for participating in different projects. The main challenge of the principal is to decide how to allocate the agents to projects when the number of projects grows in scale. 

We analyze this problem under different assumptions on the structure of the expected reward functions. As our main result, for XOS functions we show how to derive a constant approximation to the optimal multi-project contract in polynomial time, given access to value and demand oracles. Along the way (and of possible independent interest), we develop approximate demand queries for \emph{capped} subadditive functions, by reducing to demand queries for the original functions. Our work paves the way to combinatorial contract design in richer settings. 
\end{abstract}

\end{titlepage}







\section{Introduction}

Contract theory addresses a fundamental question: \emph{how should we incentivize individuals to exert effort?} In practice, contracts structure 
market interactions, such as pricing agreements between cloud computing providers and clients, or revenue-sharing contracts between online platforms and sellers. 

The theory of contracts, recognized by the 2016 Nobel Prize, tackles {\it moral hazard} challenge arising in principal-agent settings. In the most basic setting,  a principal delegates the execution of a project to an agent, where the agent can (privately) choose a costly action to complete the project. The action taken by the agent leads to a stochastic outcome that is observable to the principal. To incentivize a desirable action, the principal designs a payment scheme (a.k.a.~contract) that pays according to the stochastic outcome. Such {\it performance-based} incentives are widely adopted in practice, with empirical evidence---such as studies on online labor markets---demonstrating its effectiveness \cite{kaynar2023estimating}. 
In recent years, contract theory is gaining attention from computer scientists \cite{babaioff2006combinatorial,HoSV16,dutting2020simple}, and a rapidly growing literature on algorithmic contract design has emerged. 
A key driver of the increasing interest is the shift from classic contracts to online digital contracts, with applications like online labor markets \cite{fest2020motivation,wang2022contract} or online content creation \cite{yao2024rethinking}. 
See \citet{dutting2024algorithmic} for a survey.

\vspace{1mm}
{\it The need for new models.} 
Large-scale, sophisticated contractual environments call for new models that capture their added complexity, as well as any structure that can be utilized to design approximately-optimal contracts despite this complexity. 
Such models complement classic economic ones, which are often more focused and apply assumptions that allow for clean, closed-form optimal solutions. 
To pave the road towards practical implementation of algorithmic contracts, increasingly general models are being introduced by the computational literature, which uses approximation to curb their complexity.
Two recent examples include models with a heterogeneous agent population~\cite{AlonDT21,alon2023bayesian,castiglioni2025reduction,castiglioni2022designing,guruganesh20,guruganesh2023power}, or with information structures in combination with contracts~\cite{castiglioni2025hiring,babichenko2022information,garrett2023optimal}.

We are particularly interested in \emph{combinatorial contracts}, the design of which requires finding a suitable combination---e.g., of agents to form a team, outcomes to serve as a performance measure, or actions to advance a project. 
Previous works have studied the combinatorial aspects of a single principal-agent pair \cite{DuttingEFK21,DuttingRT21,dutting2024combinatorial,deo2024supermodular,ezra2023inapproximabilitycombinatorialcontracts}, as well as combinatorial aspects that arise from forming a team of multiple agents~\cite{babaioff2006combinatorial,DuettingEFK23,duetting2024multiagent,CastiglioniEtAl23}.
Identifying new combinatorial dimensions of contracts is highly relevant to understanding their computational aspects and possible sources of complexity.

\vspace{1mm}
{\it Our model: Multi-project contracts.} 
We introduce {\it multi-project contracts}, a new class of combinatorial contract settings that further enriches the field of combinatorial contract design. In a multi-project setting, there is a single principal, $n$ agents and $m$ projects. The principal must match agents to projects, possibly leaving some agents/projects unallocated. Multiple agents can work on the same project, but agents have limited capacity and
each agent can be assigned to at most one project. Each project $j\in[m]$ has a \emph{success function} $f_j$, which maps possible sets of assigned agents to the project's success probability. We assume monotone non-decreasing success functions.


Each agent $i\in[n]$ has a cost $c_{ij}$ for working on project $j$---e.g., this cost might be higher for projects that require the agent to acquire new skills, or for projects that require tedious legwork. To incentivize work despite this cost, the principal designs a linear contract for each agent. Say a team $S$ is assigned to a project $j$, and consider an agent $i\in S$; then if the project succeeds, the agent is paid $\alpha_{ij}$ of the reward generated for the principal, where $\alpha_{ij}$ is the agent's cost $c_{ij}$ divided by her marginal contribution to $S$. 
We know from \cite{babaioff2006combinatorial} that given an assignment of agents to projects, the best way to incentivize the agents is by paying each agent the cost over marginals.
Thus, the main challenge is to decide on the allocation. 
We study this question for different classes of set functions to which the projects' success functions belong, with our main focus on XOS functions.  


The multi-project contract setting presents 
new opportunities and fresh challenges compared to previous models. 
Many practical scenarios fit into this new type of problem. For example, a company usually has many departments (i.e., projects), and employees are assigned to different departments according to their {\it abilities}. These abilities can be interpreted as the agents' contributions to the success of the different projects in our model, as well as their respective costs. 
Going from a single project to multiple projects is akin to going from selling items to a single buyer, to selling them to multiple buyers in auction design: 
in the former setting, a subset of items is chosen and allocated to the buyer, whereas in the latter setting, the items must be (partially) partitioned among the buyers.  
Similarly, when designing multi-project contracts, the principal faces two fundamental challenges: The principal needs to {\it simultaneously} determine i)~which subset of agents will be selected for the projects, and ii)~how to allocate this subset of agents to different projects. Single-project contract settings pose only the first challenge.





\subsection{Our Contributions and Techniques}
Our main result is a polynomial-time approximation algorithm for the multi-project contract problem with submodular and XOS reward functions.

\begin{theorem*}
When all projects have \emph{XOS} reward functions, there exists a polynomial-time algorithm that computes a contract that is an $O(1)$-approximation to the optimal contract using value and demand queries. For \emph{submodular} reward functions, the same guarantee holds using only value queries.
\end{theorem*}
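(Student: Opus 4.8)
The plan is to recast the problem as combinatorial welfare maximization and then apply demand-oracle-based rounding. Regard the $m$ projects as bidders and the $n$ agents as items. For a project $j$ and a team $S$, let $u_j(S) = R_j f_j(S)\bigl(1-\sum_{i\in S} c_{ij}/(f_j(S)-f_j(S\setminus i))\bigr)$ be the principal's utility from $j$ when $S$ is assigned and paid by the cost-over-marginals linear contract; by \citet{babaioff2006combinatorial} this is the best a fixed team can do, so the optimal multi-project contract is exactly a welfare-maximizing assignment of disjoint teams $S_1,\dots,S_m$ under the ``valuations'' $u_j$. Were the $u_j$ monotone XOS with demand oracles, a constant-factor approximation would be immediate from the classical LP-plus-rounding results for XOS combinatorial auctions. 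The obstacle is that $u_j$ is in general neither monotone nor subadditive, and we are only given value/demand access to $f_j$, not to $u_j$.

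To bypass this I would construct, for each project $j$, a \emph{proxy valuation} $\hat u_j$ that ``sandwiches'' $u_j$ in the sense needed by the reduction: (i) $\hat u_j$ is monotone, subadditive, and nonnegative, so rounding results apply and the output is always feasible; (ii) $\hat u_j(S)\le \max\{u_j(S),0\}$ for every $S$, so an assignment found for $\hat u_j$, after discarding projects whose true utility is negative, has true value at least its $\hat u_j$-value; and (iii) for every team $S$ there is a sub-team $S'\subseteq S$ with $\hat u_j(S')\ge \Omega(1)\cdot u_j(S)$, so the optimal assignment with each team pruned to the corresponding $S'$ is still feasible and witnesses that the $\hat u_j$-welfare optimum is $\Omega(1)\cdot\OPT$. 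The construction rests on two facts from the single-project analysis: pruning any team to the sub-team on which the total payment fraction is below a constant loses only a constant factor in $u_j$, and on such a sub-team $u_j$ is within a constant of $R_jf_j$; and, to remove the residual non-monotonicity, one \emph{caps} the reward contribution at a cost-dependent level $t_j$, i.e.\ sets $\hat u_j(S)$ proportional to $\min\{R_jf_j(S),t_j\}$ for an appropriate $t_j$ (which can be guessed among polynomially many candidates). This is precisely where the lemma on \emph{approximate demand queries for capped subadditive functions} is used: answering a demand query for $\hat u_j$ amounts to a demand query for $\min\{R_jf_j,t_j\}$, which that lemma reduces (approximately) to demand queries for $f_j$ itself; and when $f_j$ is submodular, truncation keeps it submodular and a constant-factor demand oracle can in turn be simulated from value queries via unconstrained submodular maximization, which is why value queries alone suffice in that case.

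With the proxies and their (approximate) demand oracles in hand, I would run a welfare-maximization routine for subadditive/XOS valuations that tolerates \emph{approximate} demand oracles --- either solving the configuration LP through the approximate oracle and rounding it, or an online/greedy primal-dual allocation --- to obtain disjoint teams $S_1,\dots,S_m$ with $\sum_j \hat u_j(S_j)\ge \Omega(1)\cdot(\text{$\hat u$-welfare optimum})\ge\Omega(1)\cdot\OPT$. Assigning $S_j$ to project $j$ with the cost-over-marginals payments, and replacing any project of negative utility by the empty team, then yields $\sum_j \max\{u_j(S_j),0\}\ge \sum_j\hat u_j(S_j)\ge\Omega(1)\cdot\OPT$, i.e.\ an $O(1)$-approximate multi-project contract.

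I expect the main obstacle to be the proxy/capped-oracle step. One must engineer $\hat u_j$ so that (i)--(iii) hold simultaneously, and establish the reduction from capped demand queries to ordinary ones: the natural attempt --- scaling the prices and/or binary-searching the cap level, making one demand query for $f_j$ per candidate --- only returns an \emph{approximately} optimal bundle, and one has to show this approximation is good enough and, crucially, that the slacks introduced by pruning, capping, the approximate oracle, and the LP-rounding all compose into a single $O(1)$ factor. A secondary subtlety is that the textbook XOS-welfare rounding presumes an \emph{exact} demand oracle, so it must be re-derived (or swapped for a more robust scheme) to absorb the multiplicative and additive error of the capped oracle; subadditivity of $f_j$ is what makes the capped reduction possible in the first place, since it controls how much $f_j$ can drop when agents are removed, which is why the approach is phrased for subadditive/XOS rather than arbitrary monotone success functions.
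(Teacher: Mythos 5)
Your high-level strategy---cast the problem as welfare maximization over proxy valuations, use capped demand queries, prune teams so that cost-over-marginals payments stay small---is the right spirit and overlaps substantially with the paper's plan. But your specific sandwich requirements on $\hat u_j$ cannot all hold at once for the proxy you propose. You want $\hat u_j(S)\le\max\{u_j(S),0\}$ for \emph{every} $S$ while also $\hat u_j(S')\ge\Omega(1)\cdot u_j(S)$ for some $S'\subseteq S$. The trouble is that $u_j$ depends on the costs $c_{ij}$ and the marginals $f_j(i\mid S\setminus\{i\})$, whereas $\hat u_j(S)\propto\min\{R_jf_j(S),t_j\}$ depends only on $f_j(S)$: there are teams with large capped reward and deeply negative revenue, so any fixed proportionality constant either breaks (ii) on those teams or is so small that (iii) fails on the optimal team. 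The paper avoids this dilemma by never forming such a proxy. Its LP objective for a triple $(j,x,S)$ is $\min\{f_j(S),x\}-\Theta(\delta)x-\sum_{i\in S}\sqrt{c_{ij}x}/(2\sqrt2)$, i.e.\ capped reward \emph{minus explicit cost-dependent prices}; this expression is not a monotone valuation of $S$ alone, and the algorithm additionally prunes the LP support (removing agents whose marginal falls below $\sqrt{c_{ij}x}/(2\sqrt2)$) and then applies a generalized scaling lemma so that marginals are large enough to invoke the ``revenue $\ge$ reward$/2$'' lemma. That final step is what substitutes for your condition (ii); there is no single valuation that both rounds well and sits below $\max\{u_j,0\}$.

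Two further issues. First, the paper's approximate capped-demand lemma (\cref{lemma:appx-demand}) crucially requires that no single element be large relative to the cap, $f(\{i\})\le\delta x$; this is exactly the ``no dominant agent'' condition. Teams containing a dominant agent are handled by an entirely separate, elementary argument (a max-weight bipartite matching with one agent per project, \cref{alg:appx-opt-minus}), and the two solutions are compared at the end. Your proxy-welfare plan does not address this split, and the capped-demand reduction you lean on would not be available for those teams. Second, you note the cap $t_j$ ``can be guessed among polynomially many candidates,'' but with $m$ projects the guess is an $m$-vector, which is an exponential search; the paper resolves this by indexing the LP variables $y_{j,x,S}$ by the guess $x$ as well as the team $S$, so that the LP itself coordinates the per-project guesses. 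You would need an analogous device, and with it your clean ``single proxy per project'' picture collapses into something much closer to the paper's LP anyway.
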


Our approximation algorithm distinguishes between two types of projects in the optimal allocation: \emph{projects with dominant agents}, where at least one agent has a significant contribution to the project's revenue, and \emph{projects without dominant agents}, where all agents have relatively small contributions. For each type, we design a separate constant-approximation algorithm. Combining these two algorithms yields a constant approximation to the grand optimal allocation.

For projects with dominant agents, we construct a weighted complete bipartite graph, where one side represents agents and the other represents projects. We set each edge's weight $w_{i,j}$ to be the revenue generated by assigning agent $i$ to project $j$ (or zero if the revenue is negative). We then approximate the contribution of these projects by simply finding the maximum weighted matching in this graph. See \cref{alg:appx-opt-minus} for detailed constructions.

The main challenge of the multi-project contract problem is approximating the revenue of projects without dominant agents. 
The first natural attempt is to extend the approach of \cite{DuettingEFK23}, which applies to single-project case, to our multi-project setting. This approach would proceed as follows: 
Iterate over all estimates of the rewards of projects in the optimal allocation, denoted as $x=(x_1, \dots, x_m) \sim f(S^{\star}_1), \dots, f(S^{\star}_m)$, ensuring that at least one vector contains estimates within a constant factor of the true values.\footnote{An estimate $x_j$ can for example be $n$ discrete values increasing by factors of $2$, i.e., $2^i \max_i f_j(\{i\})$ for $i \in [\log n]$.} For each such vector of estimates, compute an allocation that maximizes \begin{equation}\label{eq:intro-esti}
    (S^x_1,\ldots,S^x_m)=S^x \in \arg\max_{S_1,\ldots,S_m} \left\{\sum_{j\in M} \left( f_j(S_j) - \sum_{i\in S_j} p_{i,x_j} \right) \right\},
    \end{equation}
where $p_{i,x_j} = \sqrt{\frac{x_j c_{ij}}{2}}$ for every agent $i$ and estimate $x_j$ for project $j$. Then, for each allocation $S^x$ and each project $j$, we scale $S_j^x$ by removing agents to reduce the reward by at most a constant factor while keeping the payments necessary to incentivize the allocated agents low. \citet{DuettingEFK23} demonstrate this for a single project, showing that with a close estimate of $x$ (i.e., near the optimal reward), scaling the set maximizing \eqref{eq:intro-esti} results in a high-reward allocation with low payments to the agent, achieving revenue close to the optimal. 

However, this approach fails in several steps along the way when considering multiple projects. First, going over all potential vectors of $x$ is intractable for two reasons. First is the exponential search space, since the number of possible options for $x$ grows exponentially in the number of projects. 
Second, while in the single project case,  maximizing \eqref{eq:intro-esti} can be done using a single demand query, in the multi-project case, it is equivalent to maximizing welfare given personalized prices which is more evolved, and requires new techniques. 

\vspace{1mm}
{\it LP formulation of the problem.}  To address these challenges, in \cref{sub:opt+-approx}, we formulate a linear program (LP) that defines a variable $y_{j,x,S}$ for each project $j$, each possible reward estimate $x$ and each set of agents $S$. 
The feasibility constraint of the LP is such that no agent is (fractionally) allocated to more than one project, and no project is (fractionally) allocated with more than one set of agents. We then rewrite \eqref{eq:intro-esti} using the new variables, capping $f_j(S_j)$ at its corresponding estimate $x_j$. This capping is crucial in our setting. Unlike \cite{DuettingEFK23}, which can iterate over estimates $x$, compute $S^x$ from \eqref{eq:intro-esti}, scale $S^x$, and select the best solution, our LP produces a solution corresponding to a single estimate. Scaling is effective when the estimate is close to the true function value, and capping prevents obtaining a reward that is too high to be properly scaled.

This LP has a polynomial number of constraints but exponentially many variables. In order to solve it in polynomial time, we use the ellipsoid method with an approximate separation oracle on its dual formulation. 

\vspace{1mm}
{\it Computing capped demand.} One of the key tools in our analysis, which may be of independent interest, is the construction of an approximate demand oracle for \emph{capped} subadditive function, assuming demand oracle access to the original function. See \cref{sub:approximate-demad}. 
Specifically, given a subadditive function $f:2^A \rightarrow \R_{\geq 0}$ and a parameter $x >0$, we define the capped function as $\hat{f}_x(S) = \min\{f(S),x\}$ for every $S\subseteq A$. 
Note that a demand oracle for the original function $f$ does not immediately extend $\hat{f}_x$. We provide a polynomial time algorithm for computing an approximate demand for $\hat{f}_x$. 
Exact demand computation is challenging---even when $f$ is additive, where demand queries are otherwise straightforward. The difficulty arises because $\hat{f}_x$, corresponding to an additive $f$, becomes \textit{budget-additive}, reducing exact demand computation to solving a \textit{knapsack problem}, which is NP-hard. 
Equipped with this key lemma, we can compute an approximate separation oracle for the dual LP and use it to solve the problem efficiently via the ellipsoid method. This allows us to find an approximately optimal fractional solution to the LP, ensuring that the solution has a polynomial sizes support.

\vspace{1mm}
{\it Rounding procedure for XOS.} After solving the LP, we face another challenge, that does not arise in the single-project problem. The solution of the LP is fractional and may not correspond to a feasible integral allocation. To address this, our next step is to round the fractional solution to obtain an integral allocation of agents to projects. We provide a simple and intuitive rounding technique for XOS functions (see \cref{alg:rounding}). Our rounding ensures that (1) the integral allocation is feasible; and (2) the rounded value is a $2$-approximation to the fractional value.  

\vspace{1mm}
{\it Generalized scaling.}
Lastly, in \cref{sec:scaling}, we generalize the scaling lemma of \citet{DuettingEFK23} to work for our setting, and apply it to the rounded allocation to reduce the fraction of payments while maintaining a constant fraction of the reward. This final step ensures that the expected revenue from the scaled allocation provides a constant approximation to the optimal contribution for projects without dominant agents.

\subsection{Additional Related Work}
\label{appx:related-work}

\paragraph{(Algorithmic) Contract Design.} Over several decades, a vast literature on contract theory has been  developed \cite{bolton2004contract,carroll2015robustness,bajari2001incentives,gottlieb2015simple,chade2019disentangling}. Recently, there has been growing interest within the computer science community in studying the computational aspects of contract design. For example, \citet{dutting2020simple} study the efficiency of contracts with specific interests in the gap between optimal linear contracts and optimal contracts in non-Bayesian settings. A rich body of literature extends contract design problem to Bayesian settings. Specifically, \citet{AlonDT21, alon2023bayesian} study single-parameter Bayesian contract design where the agent’s type is represented by a single parameter, while \citet{guruganesh20, castiglioni2022designing, CastiglioniMG22, guruganesh2023power} study Bayesian contract design where agent’s type is multi-parameter, i.e., arbitrarily defining costs and action-outcome mapping matrices. \citet{castiglioni2025reduction} draw a connection between these two settings providing a reduction from multi-parameter to single-parameter Bayesian contract design. 
Other extensions studied by recent works include agent-designed contracts with a typed principal~\citet{bernasconi2024agent},  ambiguous contracts~\citep{dutting2024ambiguous}, learning  contracts~\cite{chen2024bounded, han2024learning,zhu2023sample}, and scoring rules~\cite{hartline2022optimization,hartline2023optimal}. 
We refer interested readers to the recent survey by \citet{dutting2024algorithmic}. In combinatorial contracts, existing works close to ours are \cite{DuettingEFK23,ezra2023inapproximabilitycombinatorialcontracts}. We remark that their negative results  carry over to our settings. Specifically, the negative result against the more general subadditive functions of \cite{DuettingEFK23} rules out the possibility of constant approximation algorithms. Regarding lower bounds, \citet{ezra2023inapproximabilitycombinatorialcontracts} show no polynomial time algorithm can guarantee better than 
$1.42$-approximation for submodular functions with value query access, and \citet{DuettingEFK23} show that one cannot guarantee better than 
$1.136$-approximation for XOS functions with value and demand query access. Additionally, while we focus on the complement-free hierarchy settings, \citet{deo2024supermodular} present a strong impossibility result for the single-project settings when the reward function is supermodular.


\paragraph{Welfare Maximization LP formulation.}  
As discussed in the main body of the paper, our formulation of \eqref{lp:lp1} resembles the LP formulation used for the welfare maximization problems. Similar to our setting, the solution to this LP requires a rounding step at the end. Here, we elaborate on the guarantees of different known rounding techniques. \cite{welfare-maximizing-uri-feige} study welfare maximization for subadditive utility functions using demand queries to solve the LP. They provide a randomized solution that achieves a $\frac{1}{2}$-approximation for subadditive functions and a $\left(1 - \frac{1}{e}\right)$-approximation for XOS functions. \cite{Feige2006TheAP} improve upon this by giving a $\left(1 - \frac{1}{e} + \epsilon\right)$-approximation for submodular utility functions for some constant $\epsilon$. The maximum submodular welfare problem is known to be APX-hard, even when each player has a utility function of constant size. \cite{Dobzinski-Schapira06}, which predates the work of Uri Feige, also provide a $\left(1 - \frac{1}{e}\right)$-approximation for XOS functions. \cite{lehmann2-nisan} show that a simple greedy algorithm, which allocates based on the highest marginal value, yields a $\frac{1}{2}$-approximation for welfare maximization using value queries. \cite{Dobzinski-nisan-schapira} were the first to formulate the welfare maximization problem as an LP with exponentially many variables but only polynomially many constraints. They derive the dual of this LP and demonstrate how to implement the separation oracle using a demand oracle. \cite{Vondrak-08} provide a $\left(1 - \frac{1}{e}\right)$-approximation to the optimal solution.  

\section{Model}

\paragraph{Setting.} We consider a {\it multi-project contract design} model, in which a single principal assigns agents to projects.
Let $N$ denote the set of agents and $M$ the set of projects, where $|N|=n$ and $|M|=m$.
All projects have the same set of binary outcomes, $\Omega = \{0, 1\}$, where \(\omega = 1\) represents \emph{project success} and \(\omega = 0\) represents \emph{project failure}.
The principal receives a reward of 1 for each successful project and a reward of 0 for each failed project.\footnote{It is without loss of generality to assume that the reward for success is the same across all projects, as we can always normalize the rewards by scaling the success probabilities, as defined below.} 
Each agent $i\in N$ can work on at most {\it one} of the projects. The cost agent $i$ incurs from working on project $j \in M$ is $c_{ij}\geq 0$, and the cost of exerting no effort is normalized to zero. 
For any project $j$, its success probability given a set of agents working on it is represented by a combinatorial function $f_j: 2^N \to [0, 1]$, known as the project's \emph{success function}. 
We assume that the success functions are normalized and monotone, i.e., for every project~$j$, $f_j(\emptyset)=0$ and $f_j(S) \leq f_j(S')$ for any sets of agents $S\subseteq S'\subseteq N$. 

\vspace{1mm}
{\it Contracts.} In contract design environments, the principal observes only the realized outcomes and not the agents' actual effort choices. Therefore, the principal must design payment schemes (i.e., contracts) to incentivize agents to exert costly effort and achieve desirable outcomes. Since each agent can exert effort in at most one project, and each project has a binary outcome and the reward for failure is zero, it is without loss of generality to focus on linear contracts with a single positive payment for each agent (or not paying the agent at all). In the context of multi-project contracts, a linear contract is defined as follows:  

\begin{definition}[Linear Contracts]
\label{def:linear}
In the multi-project contract design model, a \emph{linear} contract is represented by a matrix $t$, 
where for every $i\in N,j\in M$, $t_{ij}\in \reals_{\geq 0}$ is the payment to agent $i$ when project $j$ is successful.
A linear contract $t$ is \emph{single-payment} if for every agent $i\in N$, there exists at most one project $j\in M$ such that $t_{ij} > 0$.
\end{definition}

Note that a single-payment linear contract encodes an \emph{allocation} (partial partition) of the agents to the projects. Since the contracts we focus on are linear, we also use the notation $\alpha$ for a contract, as common in the literature.
%

For each project $j\in M$, to incentivize a set of agents $S_j$ to work on project $j$ in equilibrium, the following incentive constraint must hold for every agent $i\in S_j$:
\begin{align*}
f_j(S_j)  \cdot t_{ij} - c_{ij} \geq f_j(S_j \setminus \{i\})  \cdot t_{ij}.
\end{align*}
That is, conditioned on all agents $S_j \setminus \{i\}$ exerting effort under the contract, agent $i$ finds it beneficial to exert effort as well.\footnote{There may be additional equilibria, but in this paper, we assume that the principal can select the realized equilibrium.} 
This implies that fixing a set $S_j$, the optimal payment for incentivizing each agent $i\in S_j$ to exert effort satisfies 
\begin{align}
t_{ij} = t_{ij}(S_j) = \frac{c_{ij}}{f_j(S_j)-f_j(S_j\setminus \{i\})}.\label{eq:payments}
\end{align}
Moreover, $t_{ij}(S_j)=0$ for all $i\not\in S_j$. 

Given any allocation of agents to projects $\{S_j\}_{j\in [M]}$, 
the expected revenue (or simply, \emph{revenue}) of the principal from project $j$ is 
\begin{align*}
\rev_j(S_j) = \rbr{1-\sum_{i\in S_j} t_{ij}(S_j)}\cdot f_j(S_j)
= \rbr{ 1-\sum_{i \in S_j} \frac{c_{ij}}{ f_j(S_j) - f_j(S_j \setminus \{i\})} } \cdot f_j(S_j).
\end{align*}
Therefore, the optimization problem of the principal can be formulated as follows:
\begin{align*}
\max_{\{S_j\}_{j\in [M]}} \quad & \sum_{j\in M} \rev_j(S_j) \\
\textnormal{s.t.}\qquad & S_j \cap S_{j'} = \emptyset, \quad\forall j,j' \in M.
\end{align*}
We conclude that solving the multi-project contract design problem boils down to finding the optimal allocation of agents to projects; the contractual payments then follow from Equation~\eqref{eq:payments}.

\vspace{1mm}
{\it Classes of success functions.} In this paper, we consider success functions that belong to one of the following increasingly-general classes of set functions: additive, submodular, XOS, and subadditive.
\begin{itemize}
\item {\it Additive functions}. The function $f$ is additive if there exist $v_1, v_2, \dots, v_n \in \reals_+$ such that $f(S) = \sum_{i \in S} v_i$ for any $S\subseteq N$ .
\item {\it Submodular functions}. The function $f$ is submodular if for any two sets $S \subseteq S'\subseteq N$ and any $i \in N$, it holds that $f(S\cup\{i\}) - f(S) \ge f(S'\cup\{i\}) - f(S')$.  
\item {\it XOS functions}. The function $f$ is XOS if there exists a collection of additive set functions $\{a_1, a_2, \dots, a_l\}$ such that for any set $S\subseteq N$, it holds that $f(S) = \max_{i\in [l]} a_i(S)$.

\item {\it Subadditive functions}. The function $f$ is subadditive if for any two subsets $S, T \subseteq N$, it holds that $f(S) + f(T) \ge  f(S\cup T)$.
\end{itemize}

A standard notation is $f(S\mid S') := f(S\cup S')-f(S')$ for any two sets $S,S'\subseteq N$,  and $ f( i \mid S) = f(S\cup\{i\} )-f(S) $.

\vspace{1mm}
{\it Oracle access.} Following the literature on optimization with combinatorial set functions, we employ two standard oracles:
\begin{itemize}
    \item A \emph{value oracle} for accessing $f$ returns the value of $f(S)$ given any input set $S\subseteq N$.
    \item A \emph{demand oracle} for accessing $f$ returns a set $S \subseteq N$ that maximizes $f(S) - \sum_{i\in S}p_i$ given any price vector $p=(p_1, p_2,\dots, p_n) \in \reals^n_+$.
\end{itemize}
Both value and demand oracles are employed and have been shown to be useful in previous contract design problems involving combinatorial optimization \citep[e.g.,][]{DuettingEFK23,DuttingEFK21,duetting2024multiagent}. 


\section{Main Theorem and Proof Outline} 
\label{sec:const-appx-diff-tasks}
In this section, we present our main result --- a polynomial-time algorithm that computes an approximately optimal allocation when the success functions are XOS. We then provide a high-level outline of the proof, with full details given in the subsequent sections.

\begin{theorem}\label{thm:const-diff-projects}
When all projects have XOS reward functions, there exists a polynomial-time algorithm that computes a contract that is an $O(1)$-approximation to the optimal contract using value and demand queries. For submodular reward functions, the same guarantee holds using only value queries.
\end{theorem}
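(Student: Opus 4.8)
The plan is to decompose the optimal allocation $(S_1^\star,\dots,S_m^\star)$ according to whether a project has a \emph{dominant agent}---an agent $i\in S_j^\star$ whose individual contribution is a constant fraction of $f_j(S_j^\star)$---and to approximate the total revenue of the two resulting groups of projects, $M^{\mathrm{dom}}$ and $M^{\mathrm{ndom}}$, separately, finally returning the better of the two solutions (losing at most a factor $2$). Crucially, the algorithm never learns this partition of $\OPT$; instead each of the two sub-routines is designed to compete with the corresponding part of $\OPT$ unconditionally.

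\textbf{Projects with dominant agents.} For these I would form the complete bipartite graph on $N\times M$ with edge weights $w_{ij}=\max\{\rev_j(\{i\}),0\}$ and compute a maximum-weight matching (\cref{alg:appx-opt-minus}). If $j$ has a dominant agent $i_j\in S_j^\star$, then $f_j(i_j\mid S_j^\star\setminus\{i_j\})$ is a constant fraction of $f_j(S_j^\star)$, so by monotonicity $f_j(\{i_j\})$ is too, while the single-agent payment $c_{i_j j}/f_j(\{i_j\})\le c_{i_j j}/f_j(i_j\mid S_j^\star\setminus\{i_j\})$ is within a constant of the fraction paid to $i_j$ under the optimal contract for $j$; hence $w_{i_j j}$ is a constant fraction of $\rev_j(S_j^\star)$. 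Since $\{(i_j,j):j\in M^{\mathrm{dom}}\}$ is itself a matching, the maximum-weight matching collects a constant fraction of $\sum_{j\in M^{\mathrm{dom}}}\rev_j(S_j^\star)$.

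\textbf{Projects without dominant agents.} This is the core, extending the single-project approach of \citet{DuettingEFK23}. The enabling facts are: (i) if $f_j(i\mid S\setminus\{i\})\ge\sqrt{q\,c_{ij}\,f_j(S)}$ for all $i\in S$ then $\rev_j(S)\ge(1-\sqrt{1/q})\,f_j(S)$; and (ii) if $S$ has no dominant agent then $\sum_{i\in S}\sqrt{c_{ij}}$ is controlled by $\sqrt{f_j(S_j^\star)}$. This motivates pricing agent $i$ on project $j$ at $p_{i,x_j}=\sqrt{x_j c_{ij}/2}$ for a power-of-two estimate $x_j$ of $f_j(S_j^\star)$, and selecting for each project a set approximately maximizing $\min\{f_j(S_j),x_j\}-\sum_{i\in S_j}p_{i,x_j}$ under the constraint that each agent is used at most once. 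Since the vector of estimates lives in an exponentially large set, I would encode this as the LP \eqref{lp:lp1} (\cref{sub:opt+-approx}) with variables $y_{j,x,S}$ over (project, estimate, agent-set) triples, packing constraints that each agent is fractionally assigned at most once and each project gets at most one unit of configuration mass, and objective $\sum_{j,x,S}y_{j,x,S}\bigl(\min\{f_j(S),x\}-\sum_{i\in S}p_{i,x}\bigr)$; the capping is exactly what lets the later scaling step go through. The LP has polynomially many constraints and exponentially many variables, so I would run the ellipsoid method on its dual: separation amounts to checking whether some triple $(j,x,S)$ is under-priced for a given dual vector, i.e.\ a demand query on the capped function $\hat f_{j,x}=\min\{f_j,x\}$ with linear agent prices. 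Here I invoke the key lemma of \cref{sub:approximate-demad}, which turns a demand oracle for a subadditive function into an approximate demand oracle for its capping; since XOS functions are subadditive this applies, and since submodular functions admit demand queries via value queries alone, the submodular case needs only value queries. The resulting approximate separation oracle yields an approximately optimal fractional LP solution supported on polynomially many triples.

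\textbf{Rounding, scaling, and the main obstacle.} The LP solution is fractional, so I would round it to a feasible integral allocation via \cref{alg:rounding}: for each project select one of its fractional configurations (proportionally), and resolve agents claimed by several projects using the XOS additive representative that attains $f_j(S)$ on the selected $S$, which keeps the allocation feasible at the cost of at most a factor $2$ in the capped objective. I would then apply the generalized scaling lemma of \cref{sec:scaling} to each rounded set: remove agents one at a time until every surviving agent's marginal is at least half its marginal before removal, while the reward stays a constant fraction of $\min\{f_j(S_j),x_j\}$; via fact (i) this certifies that $\rev_j$ of the scaled set is a constant fraction of $x_j$, which for the configuration effectively chosen by the LP is within a constant of $f_j(S_j^\star)$. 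Summing over $M^{\mathrm{ndom}}$, adding the matching bound on $M^{\mathrm{dom}}$, and taking the better solution gives the claimed $O(1)$-approximation. The main obstacle I expect is precisely the capped-demand lemma driving the separation oracle: exact demand on a capped function is NP-hard---it contains knapsack already for additive $f$---so its guarantee must take a relaxed (bicriteria-style) form that is still strong enough for the ellipsoid method to return an LP value which then survives the rounding ($\times 2$) and scaling losses, so that all the approximation factors---demand oracle, LP duality, rounding, scaling---compose to a single constant.
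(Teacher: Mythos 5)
Your proposal matches the paper's approach step for step: the split of $\OPT$ into projects with and without a dominant agent, bipartite matching for the dominant part, the configuration LP with capped rewards and square-root prices, the ellipsoid method on the dual with an approximate separation oracle driven by capped-demand queries, rounding to an integral allocation, the generalized scaling lemma, and taking the better of the two candidate allocations. Two small corrections are worth noting. First, your dominant-agent justification runs backwards: dominance is defined via the \emph{singleton} value $f_j(\{i_j\})>\delta f_j(S^\star_j)$, and the marginal $f_j(i_j\mid S^\star_j\setminus\{i_j\})$ is \emph{at most} this by subadditivity---it need not itself be a constant fraction of $f_j(S^\star_j)$. What the argument actually uses is that $c_{i_j j}/f_j(\{i_j\})\le c_{i_j j}/f_j(i_j\mid S^\star_j\setminus\{i_j\})\le 1$, the last step because the optimal contract's revenue from project $j$ is nonnegative, combined with $f_j(\{i_j\})\ge\delta f_j(S^\star_j)$. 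Second, it is \emph{not} true that submodular functions admit demand queries via value queries alone: $f(S)-\sum_{i\in S}p_i$ is a general non-monotone submodular function whose exact maximization is NP-hard, so the submodular case does not follow from your stated reason. The correct route, which the paper takes in Appendix~\ref{appx:sub-mod}, is that $\min\{f_j(\cdot),x\}$ remains monotone submodular, and a monotone submodular function minus a linear cost can be approximately maximized using only value queries (Sviridenko--Harshaw), which provides exactly the approximate capped demand that the separation oracle needs.
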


\noindent
We prove this theorem for XOS functions and defer the proof for submodular success functions to Appendix~\ref{appx:sub-mod}. 
We first introduce definitions and establish key claims.  

Given a multi-project instance, we denote the optimal contract $\opt{\alpha} = \{\opt{\alpha}_j\}_{j\in M}$ and the corresponding optimal allocation of agents to projects as $\opt{S}=\{\opt{S}_j\}_{j\in M}$. Given allocation $S=\{S_j\}_{j\in M}$,
let $\rev(S) =\sum_{j\in M} \rev_j(S_j)$ be the expected revenue when the principal sets the optimal linear contract for $S$ according to \Cref{eq:payments},
and let $\OPT=\rev(\opt{S})$ be the optimal revenue. 
We say an agent $i$ is a \emph{dominant} agent for project $j$ if its contribution to project $j$ is high compared to the optimal reward of this project, i.e., $f_j(\{i\}) > \delta  f_j(\opt{S}_j)$ for a small constant $\delta$ (defined below), and is a \emph{nondominant} agent otherwise.
Let
\begin{eqnarray*}
\delta &:=& 1/129\\
\opt{A}_j&:=&\{ i \mid f_j(\{i\})\leq \delta f_j(S_j^{\star}) \},\\
I^{\star}&:=&\{j\mid S^{\star}_j \subseteq A^{\star}_j\}.
\end{eqnarray*}
Intuitively, $\opt{A}_j$ is the set of \emph{nondominant} agents. $I^{\star}$ is the set of projects with no dominant agent according to the optimal allocation $\opt{S}$.
Next, define
\begin{eqnarray*}
    \OPT_{-}&:=& \sum_{j\not \in I^{\star}}\rev_j(\opt{S}_j),\\
    \OPT_{+}&:=& \sum_{j \in I^{\star}}\rev_j(\opt{S}_j),
\end{eqnarray*}
where $\OPT_{-}$ represents the portion of $\OPT$ contributed by projects with dominant agents, and $\OPT_{+}$ corresponds to the portion from projects without dominant agents.

We present two polynomial-time algorithms for computing contracts that achieve constant-factor approximations to $\OPT_-$ and $\OPT_+$ separately.

\paragraph{Proof outline for $\OPT_-$ approximation.} This is the less involved case. We leverage the fact that for each project $j \not\in I^{\star}$, there exists a dominant agent $i_j$ in the optimal allocation who contributes at least a $\delta$ fraction of $f_j(S^{\star}_j)$. By subadditivity, removing all agents except $i_j$ ensures that $f_j(\{i_j\}) \geq \delta f_j(S^{\star}_j)$. Therefore, assigning only this dominant agent to each project $j \not\in I^{\star}$ achieves at least $\delta$ fraction of $\OPT_-$. 
This implies that achieving a constant-factor approximation to $\OPT_-$ reduces to finding the best allocation that incentivizes at most one agent per project. This constrained optimization problem can be formulated as a maximum bipartite matching problem, which can be solved in polynomial time using \cref{alg:appx-opt-minus} (see \cref{sub:opt_-approx}).

\paragraph{Proof outline for $\OPT_+$ approximation.}  
We begin by formulating a linear program \eqref{lp:lp1} that computes a fractional allocation of agents to projects. This LP resembles linear programs used in welfare maximization problems (e.g., \cite{welfare-maximizing-uri-feige}), where the variables represent fractional relaxation of indicator functions for allocating sets of agents to projects, and the constraints ensure the feasibility of the fractional solutions. 
Note that while \eqref{lp:lp1} has a polynomial number of constraints, it contains exponentially many variables. To solve it in polynomial time and obtain a fractional (or equivalently, randomized) allocation $\tilde{S}$ that is approximately optimal for \eqref{lp:lp1}, we derive its dual \eqref{LP:lp2} and design a polynomial-time approximate separation oracle (see Claim~\ref{claim:appx-separation}). 
A key component of this oracle, which may be of independent interest, is an \emph{approximate} demand oracle for capped sub-additive functions (see \cref{sub:approximate-demad} and \cref{alg:appx-demand}). Furthermore, in \cref{alg:solve-lp1}, using the constructed approximate separation oracle, we compute an approximately optimal fractional solution to \eqref{lp:lp1} via the ellipsoid method. 
By our choice of objective function for \eqref{lp:lp1}, the expected reward of this randomized allocation $ \mathbb{E}[\sum_{j\in M} f_j(\tilde{S}_j)]$ is close to the reward of the optimal allocation, i.e.,  
$ \mathbb{E}[\sum_{j\in M} f_j(\tilde{S}_j)] = \Theta(1) \cdot \sum_{j\in I^{\star}} f(S^{\star}_j), $  
while ensuring that the agents' marginal contributions to the projects are sufficiently high. Note that since payments to the agents are determined by their marginal contributions (see \eqref{eq:payments}), high marginals imply lower payments from the principal to the agent, which is desirable for achieving a higher revenue. 
Next, we apply a deterministic rounding algorithm (\cref{alg:rounding}) to round this fractional solution to obtain a feasible integral allocation $T$ that is close to the expected reward of the fractional solution, i.e.,  
$ \sum_{j\in M}f(T) = \Theta(1)\cdot \mathbb{E}[\sum_{j\in M}f_j(\tilde{S}_j)] = \Theta(1)\cdot \sum_{j\in I^{\star}}f(S^{\star}_j) $  
(see \cref{lemma:rounding}), while maintaining high marginal contributions from the agents. As a final step, we generalize the scaling lemma of \citet{DuettingEFK23}, which allows us to scale down the value $f_j(T_j)$ for each project $j$ by any constant factor while preserving high marginal contributions from the agents. That is, $f_j(i \mid T_j\setminus \{i\}) \geq \sqrt{2f_j(T_j)}$ for any $i\in T_j$. 
These high marginal contributions preserved in all previous operations enable us to apply Lemma 3.4 from \cite{duetting2024multiagent} (see \cref{lemma:lemma34-multi}) to obtain that
$ \rev(T) \geq 0.5 \sum_{j\in M}f_j(T_j) = \Theta(1)\cdot \sum_{j\in I^{\star}}f_j(S^{\star}_j). $  
Since payments are non-negative, this establishes a constant-factor approximation to $\OPT_+$.  

Finally, \cref{alg:appx-opt} computes the contracts that are approximately optimal for $\OPT_-$ and $\OPT_+$ respectively. Comparing those contracts and selecting the one with the higher revenue yields a constant-factor approximation to the optimal revenue $\OPT$.  

\section{Projects with a Dominant Agent}
\label{sub:opt_-approx}

In this section, we provide a polynomial-time algorithm (\cref{alg:appx-opt-minus}) to approximate $\OPT_{-}$, the revenue contribution from projects that include dominant agents in the optimal allocation.

\begin{lemma}[Approximation of $\OPT_{-}$]\label{lem:opt-minus}
For multi-project contract settings with subadditive success functions, Algorithm~\ref{alg:appx-opt-minus} computes an allocation $S^{-}$
with $\rev(S^{-})\geq \delta \cdot  \OPT_{-}$ in polynomial time, using a polynomial number of value queries.
\end{lemma}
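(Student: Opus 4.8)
The plan is to show that the maximum-weight bipartite matching computed by \cref{alg:appx-opt-minus}---on the complete bipartite graph between $N$ and $M$ with edge weights $w_{ij}=\max\{0,\rev_j(\{i\})\}=\max\{0,f_j(\{i\})-c_{ij}\}$---already has weight at least $\delta\cdot\OPT_-$, and that the induced allocation $S^-$ (assign $i$ to $j$ exactly for the matched edges of positive weight, and set $S^-_j=\emptyset$ otherwise) satisfies $\rev(S^-)$ equal to that weight. Polynomial time is then immediate: each of the $nm$ weights is computed from a single value query to $f_j(\{i\})$ (the single-agent optimal payment, and hence $\rev_j(\{i\})=f_j(\{i\})-c_{ij}$, follows from \eqref{eq:payments} when $f_j(\{i\})>0$, and $w_{ij}=0$ otherwise), and a maximum-weight bipartite matching is computable in polynomial time.

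The core of the argument is to exhibit one feasible matching of weight at least $\delta\cdot\OPT_-$. Fix $j\notin I^\star$. By definition of $I^\star$ there is an agent $i_j\in\opt S_j$ with $f_j(\{i_j\})>\delta f_j(\opt S_j)$; monotonicity then also forces $f_j(\opt S_j)\ge f_j(\{i_j\})>0$, so every project outside $I^\star$ has positive optimal reward. Since the sets $\opt S_j$ are pairwise disjoint, the pairs $\{(i_j,j):j\notin I^\star\}$ form a valid matching. I will argue $\rev_j(\{i_j\})\ge\delta\cdot\rev_j(\opt S_j)$ for each such $j$. Write $\mu_j:=\sum_{i\in\opt S_j}t_{ij}(\opt S_j)$, so $\rev_j(\opt S_j)=(1-\mu_j)f_j(\opt S_j)$, and optimality forces $\rev_j(\opt S_j)\ge 0$ (else the principal could drop project $j$), hence $\mu_j\le 1$. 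From \eqref{eq:payments}, $c_{i_jj}=t_{i_jj}(\opt S_j)\cdot\big(f_j(\opt S_j)-f_j(\opt S_j\setminus\{i_j\})\big)$; subadditivity gives $f_j(\opt S_j)-f_j(\opt S_j\setminus\{i_j\})\le f_j(\{i_j\})$, and $t_{i_jj}(\opt S_j)\le\mu_j$ as one term of a sum of nonnegatives, so $c_{i_jj}\le\mu_j f_j(\{i_j\})$. Therefore
\[
\rev_j(\{i_j\})=f_j(\{i_j\})-c_{i_jj}\ge(1-\mu_j)f_j(\{i_j\})>(1-\mu_j)\,\delta\,f_j(\opt S_j)=\delta\cdot\rev_j(\opt S_j)\ge 0 .
\]
Hence $w_{i_jj}=\rev_j(\{i_j\})\ge\delta\cdot\rev_j(\opt S_j)$, and summing over $j\notin I^\star$ shows the matching $\{(i_j,j)\}$ has weight at least $\delta\sum_{j\notin I^\star}\rev_j(\opt S_j)=\delta\cdot\OPT_-$. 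The maximum-weight matching has weight at least this, and $\rev(S^-)$ equals the weight of that matching (positive-weight matched edges each contribute exactly $w_{ij}$, unmatched projects contribute $\rev_j(\emptyset)=0$), so $\rev(S^-)\ge\delta\cdot\OPT_-$.

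The step I expect to be the main obstacle is precisely the per-project inequality $\rev_j(\{i_j\})\ge\delta\cdot\rev_j(\opt S_j)$: the naive bound $f_j(\{i_j\})>\delta f_j(\opt S_j)\ge\delta\cdot\rev_j(\opt S_j)$ is \emph{not} sufficient, because one must still subtract the cost $c_{i_jj}$, which a priori could be nearly as large as $f_j(\{i_j\})$. The fix---bounding $c_{i_jj}\le\mu_j f_j(\{i_j\})$ using subadditivity of $f_j$ together with the fact that the payment fractions on any single project sum to at most $1$---is what makes the reward lost by restricting to $\{i_j\}$ and the payment lost by the principal shrink together by the common factor $(1-\mu_j)$, so the multiplicative guarantee survives.
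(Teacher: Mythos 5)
Your proof is correct and takes essentially the same route as the paper: exhibit the matching $\{(i_j,j):j\notin I^\star\}$ of dominant agents and bound each term via subadditivity and the fact that the sum of payment shares in $\opt{S}_j$ is at most one. The only cosmetic difference is that you package the paper's chain of inequalities into the shorthand $\mu_j$ and state explicitly that $\mu_j\le 1$ follows from optimality of $\opt{S}$ (which the paper leaves implicit); also note your strict inequality $(1-\mu_j)f_j(\{i_j\})>(1-\mu_j)\delta f_j(\opt{S}_j)$ should be weak to cover the edge case $\mu_j=1$, though this does not affect the conclusion.
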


\begin{proof}
We first show that there \emph{exists} an allocation $S$ such that $|S_j|\leq 1$ for any $j\in M$ and its revenue is $\rev(S)\geq \delta \cdot \OPT_{-}$. The existence is constructed as follows. Consider the following allocation $S$ such that for each project $j$, the allocation $S_j$ is obtained by allocating a dominant agent (if exists) with the highest individual contribution among the agents in the optimal allocation:
\begin{eqnarray*}
    S_j = \begin{cases}
        \arg\max_{i_j\in S_j^\star} \{ f_j(\{i_j\}) \mid i_j \not \in \opt{A}_j \} & j\not \in I^{\star},\\
        \emptyset & j\in I^{\star}.
    \end{cases}
\end{eqnarray*}
Let $i_j$ be the agent in $S_j$ if $j\notin I^{\star}$. By subadditivity and since $f_j(\emptyset)=0$, 
we have $f_j(i_j\mid \emptyset)= f_j(\{i_j\})\geq  f_j(i_j\mid \opt{S}_j \setminus \{i_j\}).$ Therefore, 
\begin{eqnarray}\label{eq:rev-opt-minus}
    \rev(S)=\sum_{j\not \in I^{\star}} f_j(\{i_j\})\left(1-\frac{c_{ij}}{f_j(i_j\mid \emptyset)}\right)\geq \sum_{j\not \in I^{\star}} f_j(\{i_j\})\left(1-\frac{c_{ij}}{f(i_j\mid \opt{S_j}\setminus \{i_j\})}\right). \nonumber
\end{eqnarray}
For every project $j\in M$, monotonicity of $f_j$ implies that $f_j(i\mid \opt{S}_j \setminus \{i\})\geq 0$ for all $i\in S^{\star}_j$. Therefore, 
\begin{eqnarray*}
\rev(S)\geq \sum_{j\not \in I^{\star}} f_j(\{i_j\}) \rbr{1-\sum_{i\in \opt{S_j}}\frac{c_{ij}}{f(i\mid \opt{S_j}\setminus \{i\})}} \geq \sum_{j\not \in I^{\star}} \delta   f_j(S_j^\star) \rbr{1-\sum_{i\in \opt{S_j}}\frac{c_{ij}}{f(i\mid \opt{S_j}\setminus \{i\})}} = \delta  \OPT_{-},
\end{eqnarray*}
where the first inequality holds since we are subtracting additional non-negative payments and the second inequality holds since $i_j$ is a dominant agent for project $j$ with $f_j(\{i_j\})\ge \delta \cdot f_j(S_j^\star$).

Even though we do not know how to compute the set $S$ constructed in the former paragraph  about existence, 
our Algorithm~\ref{alg:dom-agent} computes an allocation $S^{-}$ that satisfies $|S^{-}_j|\leq 1$ for any $j\in M$ and $ \rev(S^-) \geq \rev(S)$, using a polynomial number of value queries. 
In particular, Algorithm~\ref{alg:dom-agent} finds a maximum weighted matching, where a valid matching in this graph corresponds to an allocation of agents to projects such that each project receives at most one agent. Furthermore, by the definition of the weights, the total weight of a valid matching equals the revenue of the corresponding allocation. Since $S$ is also a valid matching, and $S^-$ is a maximum weighted matching under the same weight function, it follows that $\rev(S^-) \geq \rev(S) \geq \delta \cdot \OPT_-$.
\end{proof}

\begin{algorithm}[t]
\caption{Constant-Factor Approximation for $\OPT_-$\\
\textbf{Input:} Set of agents $N$, project success functions $f_1, \dots, f_m$, and agent costs $c_{ij}$ for every $i \in N,j \in M$\\
\textbf{Output:} Allocation $S^-$ that approximates $\OPT^-$
}\label{alg:appx-opt-minus}
\begin{algorithmic}[1]
\State Construct a bipartite graph $G = (N, M, w)$ where $N$ represents agents, $M$ represents projects, and the weight of each edge $(i,j)\in N\times M$ is $w(i,j) = \max\{f_j(\{i\}) - c_{ij}, 0\}$
\State Compute $S^-$ as the maximum weighted matching in $G$.
\State \Return $S^{-}$ 
\end{algorithmic}
\label{alg:dom-agent}
\end{algorithm}

\section{Key Tool: Approximate Demand Oracle for Capped Subadditive Functions}\label{sub:approximate-demad}

In this section, we present a key technical lemma that is central to our polynomial-time algorithm for approximating $\OPT_+$. Specifically, we construct an approximate demand oracle for a capped subadditive function, assuming access to a demand oracle for the original (uncapped) function. 
 
\begin{lemma}[Approximate demand for capped function]
\label{lemma:appx-demand}
Let $f:2^A \to \mathbb{R}_+$ be a monotone subadditive set function over a family $A$ of $n$ elements.
Consider a cap $x \in \posreals$ and a parameter $\delta\in (0,1)$ such that $f(\{i\})\leq \delta \cdot  x$ for every $i\in A$. Given a price $p_i \geq 0$ for every element $i\in A$, Algorithm~\ref{alg:appx-demand} computes a set $\bar S$ in polynomial time in $n$ using demand queries to $f$, such that $\bar S$ satisfies the following
\begin{eqnarray}\label{eq:appx-demand}
    \min\{f(\bar{S}),x 
    \} - \sum\nolimits_{i \in \bar{S}} p_i \geq \frac{1}{1+\frac{1}{1-\delta}} 
    \max_{S} \left(\min\{f(S), x 
    \} - \sum\nolimits_{i \in S} p_i \right) - \delta x.
\end{eqnarray}
\end{lemma}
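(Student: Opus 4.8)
The plan is to reduce the capped‑demand problem to a family of ordinary demand queries on $f$ — one per value of a Lagrange‑type parameter for the cap — and then to ``trim'' the returned sets using the smallness hypothesis $f(\{i\})\le\delta x$. Throughout write $p(T):=\sum_{i\in T}p_i$ and $\mathrm{OPT}:=\max_S(\min\{f(S),x\}-p(S))$; we may assume $\mathrm{OPT}\ge 0$, since otherwise $\bar S=\emptyset$ satisfies \eqref{eq:appx-demand}. The starting observation is the elementary identity $\min\{f(S),x\}=\min_{\theta\in[0,1]}\bigl(\theta x+(1-\theta)f(S)\bigr)$, which by weak duality ($\max\min\le\min\max$) gives, for every $\theta\in[0,1)$,
\[
\mathrm{OPT}\le \theta x+(1-\theta)\max_S\Bigl(f(S)-\tfrac{p(S)}{1-\theta}\Bigr)=\theta x+(1-\theta)f(D^\theta)-p(D^\theta),
\]
where $D^\theta$ is a demand set of $f$ for the inflated price vector $p/(1-\theta)$; taking $\theta=1$ gives $\mathrm{OPT}\le x$. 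So it suffices to query the demand oracle on $p/(1-\theta)$ for $\theta$ in a suitable polynomial‑size grid, post‑process each returned set, and keep the best candidate.

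The post‑processing is the trimming step, and it is where $f(\{i\})\le\delta x$ is used. If $f(D^\theta)\le x$, keep $D^\theta$ as is; its capped value is $f(D^\theta)-p(D^\theta)$. If $f(D^\theta)>x$, order $D^\theta$ by nondecreasing price as $d_1,\dots,d_k$ and look at the prefixes $D_\ell:=\{d_1,\dots,d_\ell\}$: subadditivity gives $f(D_\ell)-f(D_{\ell-1})\le f(\{d_\ell\})\le\delta x$, so $f(D_0)=0,\dots,f(D_k)=f(D^\theta)>x$ climbs past $x$ in steps of at most $\delta x$, and the largest $\ell_0$ with $f(D_{\ell_0})<x$ satisfies $f(D_{\ell_0})\ge f(D_{\ell_0+1})-\delta x\ge(1-\delta)x$. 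Hence this trimmed prefix has $f$‑value in $[(1-\delta)x,x)$ — so its capped value equals its true value, at least $(1-\delta)x$ — and price at most $p(D^\theta)$, and crucially it is the cheap part of $D^\theta$. (Scoring candidate prefixes needs value queries, which are available in our application and in general can be simulated by demand queries that price unwanted elements prohibitively.)

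For the analysis I would take $\theta$ to be (near) the crossover value $\theta^\star$ at which $f(D^\theta)$ passes through $x$, where the dual bound above is essentially tight, and argue that the corresponding trimmed set $\bar S$ has $f(\bar S)\in[(1-\delta)x,x)$ and price no larger than that of an optimal capped set. Substituting $x\le\tfrac{1}{1-\delta}f(\bar S)$ into the dual bound and using $\mathrm{OPT}\le x$, the worst case is exactly $f\approx x$, which produces the multiplicative factor $\tfrac{1}{1+1/(1-\delta)}$; the up‑to‑$\delta x$ slack incurred because the prefix can undershoot $x$ by as much as one element produces the additive $-\delta x$. Combining these, the best kept candidate has capped value at least $\tfrac{1}{1+1/(1-\delta)}\,\mathrm{OPT}-\delta x$, as required.

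I expect the main obstacle to be exactly the interaction between the Lagrange parameter and the discreteness of demand sets: there need not be any $\theta$ for which $D^\theta$ already has $f$‑value in the narrow window $[(1-\delta)x,x)$ — it can jump from well above $x$ to well below $x$ as $\theta$ crosses a breakpoint — so one is forced to trim an over‑delivering set whose total price is a priori uncontrolled, and the naive bound $p(\bar S)\le p(D^\theta)$ is too weak. The fix is to trim only after sorting by price, so that the retained prefix is the inexpensive portion of $D^\theta$, and to run the charging argument against $\tfrac{1}{1-\delta}f(\bar S)$ rather than against $x$. Verifying that the retained prefix is no more expensive than an optimal capped set, and handling the degenerate regimes (e.g.\ $\mathrm{OPT}\le 2\delta x$, where the target bound is vacuous, or $f(D^\theta)\le x$, where the cap never binds), is the delicate part of the proof, and it is precisely there that the constant $\tfrac{1}{1+1/(1-\delta)}$ and the additive $\delta x$ get pinned down.
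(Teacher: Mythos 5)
There is a genuine gap, concentrated exactly where you flag it yourself. Your plan trims the over-delivering demand set $D^\theta$ to its cheapest prefix of value in $[(1-\delta)x,\,x)$ and then claims this prefix has price ``no more expensive than an optimal capped set.'' That bound is never established, and it does not follow from sorting by price. A demand set for inflated prices $p/(1-\theta)$ only guarantees $p(D^\theta)\le(1-\theta)f(D^\theta)$, and $f(D^\theta)$ can be an arbitrarily large multiple of $x$; the cheapest prefix can therefore carry almost all of $p(D^\theta)$ while contributing only $(1-\delta)x$ of value, so its utility can be far below $\mathrm{OPT}$. Choosing $\theta$ near the ``crossover'' does not fix this, because (as you note) $f(D^\theta)$ can jump discontinuously across $x$ at a breakpoint, and there may simply be no $\theta$ at which the inflated-price demand has value near $x$ with controlled price.

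The paper closes this exact hole with two devices your sketch lacks. First, instead of one trimmed prefix it partitions the over-delivering set $S_2$ into $\ell\le\lceil f(S_2)/((1-\delta)x)\rceil$ disjoint chunks each of value in $[(1-\delta)x,\,x]$ (this is where $f(\{i\})\le\delta x$ is used) and keeps the \emph{best-utility} chunk $U^\star$, giving $\ell\bigl(f(U^\star)-p(U^\star)\bigr)\ge f(S_2)-p(S_2)$; the apparent blow-up by $\ell$ is harmless because of the second device. Second, it pairs the over-delivering demand $S_2$ (at parameter $\gamma_2$) with an under-delivering demand $S_1$ (at $\gamma_1$, found by a binary search with $\gamma_2-\gamma_1<\delta$) and takes the convex combination weight $\mu=(f(S_2)-x)/(f(S_2)-f(S_1))$ so that $\mu f(S_1)+(1-\mu)f(S_2)=x$; then $\mu+\ell(1-\mu)\le 1+\tfrac{1}{1-\delta}$ \emph{exactly because} $\mu\to 1$ as $f(S_2)\to\infty$, neutralizing the growth of $\ell$. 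The additive $\delta x$ arises from the binary-search tolerance ($\gamma_2-\gamma_1<\delta$) and from allowing $f(S^\star)\le(1+\delta)x$, not from a single undershooting prefix. Without the chunking-plus-best-chunk argument and the two-sided convex combination, I don't see how your proposal can control the price of the trimmed set, and I believe the ``cheapest prefix is cheaper than an optimal capped set'' claim is false in general. You would essentially have to reinvent the paper's partition and the $\mu$-weighting to make this go through.
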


\begin{algorithm}[t]
\caption{Constant-Factor Approximation for Capped Demand\\
\textbf{Input:} Subadditive function $f$, capping value $x > 0$, parameter $\delta \in [0, 1]$ s.t. $f(i) \leq \delta  x$ $\forall i$, and payments $(p_i)_{i\in [n]}$.\\
\textbf{Output:} A set $\bar{S}$ satisfying Inequality~\eqref{eq:appx-demand}.}
\label{alg:appx-demand}
\begin{algorithmic}[1]
\State Initialize $S_1\gets \emptyset$ and $S_2 \in \arg \max_S \{f(S)- \sum\nolimits_{i \in S} p_i\}$.
\If{$f(S_2)\le x$}
\State \Return $S_2$\label{state:demand-orig}
\EndIf
\State $\gamma_1\gets 0$, $\gamma_2 \gets 1$ 
\While {$\gamma_2-\gamma_1\ge \delta$} \label{state:start-binary-1}
\State $\bar \gamma\gets \frac{\gamma_1+\gamma_2}{2}$ 
\State Let $\bar S$ be an arbitrary set maximizing $f(S)- \sum\nolimits_{i \in S} p_{i}/\bar \gamma $
\If {$f(\bar S)> x$}
    \State $S_2\gets \bar S$; $\gamma_2\gets \bar \gamma$.
\Else
    \State $S_1\gets \bar S$; $\gamma_1\gets \bar \gamma$.
\EndIf
\EndWhile \label{state:end-binary}

\State Initialize $j=1$ and \( U_j = \emptyset \).\label{state:partition}

\For{$i\in S_2$}

\State $U_j \gets U_j \cup \{i\}$.
\If{$\sum\nolimits_{j=1}^\ell f(U_j) \geq f(S_2)$}
\State jump  line \ref{get_ustar}
\EndIf
    \If{$f(U_j) \geq (1-\delta)x$}
        \State $ j \gets j + 1 $ and initialize $ U_j = \emptyset$.
    \EndIf
\EndFor

\State $U^{\star} \gets \arg \max_{j\in [\ell]}\{f(U_j)-\sum\nolimits_{i\in U_j}p_i\}$ \label{get_ustar}

\State $\bar{S} \gets S\in \arg\max_{S \in \{S_1, U^{\star}\}} \{ f(S) - \sum\nolimits_{i \in S} p_i \}$

\State \Return $\bar{S}$

\end{algorithmic}
\end{algorithm}

In Lemma~\ref{lemma:appx-demand}, the smaller $\delta$ is, the better the approximation guarantee. 
Towards proving Lemma~\ref{lemma:appx-demand}, we start by introducing several definitions. Let 
$$S^{\star}\in \arg \max_S \rbr{\min\{f(S),x\} - \sum\nolimits_{i \in S} p_i},$$ 
be an optimal demand for the capped function. By subadditivity of $f$, we can assume without loss of generality that $f(S^{\star})\le (1+\delta) x$ and removing any item in the set leads to a value below $x$. Such a solution always exists since otherwise, we can remove agents from $S^{\star}$ maintaining that $f(S^{\star})\ge x$ and the prices paid are weakly lower. 

We next analyze  Algorithm~\ref{alg:appx-demand}.
\begin{claim}\label{claim:binary-search}
If \cref{alg:appx-demand} does not terminate at \cref{state:demand-orig}, at the end of its line~\ref{state:end-binary}, it holds that:
\begin{enumerate}
    \item $f(S_1)\le x< f(S_2)$, $S_j \in \arg \max_S (f(S)-\sum\nolimits_{i \in S} p_i/\gamma_j)$ for $j\in \{1,2\}$, and $\gamma_2-\gamma_1 \in (0,\delta)$.\label{item:obs-alg-claim}
    \item $f(S_1)\leq f(S^{\star})$. \label{item:s1-alg-claim}
    \item $(\mu\gamma_1 +(1-\mu)\gamma_2) f(S^{\star}) -\sum\nolimits_{i \in S^{\star}}p_i \le \mu (\gamma_2 f(S_1) -\sum\nolimits_{i \in S_1}p_i) + (1-\mu) (\gamma_2 f(S_2) -\sum\nolimits_{i \in S_2}p_i)$ $\forall \mu\in [0,1]$\label{item:ineq-mu-alg-claim}
\end{enumerate}
\end{claim}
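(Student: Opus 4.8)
\textbf{Proof plan for Claim~\ref{claim:binary-search}.}

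The plan is to verify the three items essentially as invariants maintained by the binary search loop (lines \ref{state:start-binary-1}--\ref{state:end-binary}), plus one convexity/optimality argument for item 3.

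\emph{Item 1.} I would argue by loop invariant. Before the loop, $\gamma_1 = 0$, $\gamma_2 = 1$, $S_1 = \emptyset$ (so $f(S_1) = 0 \le x$ by normalization), and $S_2 \in \arg\max_S (f(S) - \sum_{i\in S} p_i)$ with $f(S_2) > x$ (otherwise the algorithm would have returned at line~\ref{state:demand-orig}); note $S_1 \in \arg\max_S(f(S) - \sum_{i\in S}p_i/\gamma_1)$ holds vacuously/by convention for $\gamma_1 = 0$ (maximizing $f$ minus an infinite penalty on any nonempty set gives $\emptyset$; I would state this convention explicitly, or equivalently observe we only ever need $S_1$'s defining property once $\gamma_1 > 0$, which happens after the first iteration that sets $S_1 \gets \bar S$). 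Each iteration picks $\bar\gamma = (\gamma_1+\gamma_2)/2$ and $\bar S \in \arg\max_S(f(S) - \sum_{i\in S}p_i/\bar\gamma)$; depending on whether $f(\bar S) > x$, we update either the ``$2$''-side or the ``$1$''-side, and in both cases the new endpoint retains the property ``$S_j$ maximizes $f - \sum p_i/\gamma_j$ and $f(S_1)\le x < f(S_2)$''. The loop exits when $\gamma_2 - \gamma_1 < \delta$; since we always halve a positive gap starting from $1$, we have $\gamma_2 - \gamma_1 \in (0,\delta)$ at exit. This gives item 1.

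\emph{Item 2.} Recall $S^\star \in \arg\max_S(\min\{f(S),x\} - \sum_{i\in S}p_i)$ was chosen (WLOG, by the paragraph before the claim) so that $f(S^\star) \le (1+\delta)x$ and removing any element drops the value below $x$. I would split into cases on whether $f(S^\star) \le x$ or $f(S^\star) > x$. If $f(S^\star) \le x$, then $S^\star$ is also a maximizer of the uncapped $f - \sum p_i$ over sets of value $\le x$; but actually the cleanest route is: $S_1$ maximizes $f(S) - \sum_{i\in S}p_i/\gamma_1$. Hmm — the relation $f(S_1) \le f(S^\star)$ needs the optimality of $S_1$ for its scaled objective together with $f(S_1) \le x$. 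I would argue: since $f(S_1) \le x$, if also $f(S^\star) \le x$ then compare the capped objective values — $S^\star$ is the capped optimum so $f(S^\star) - \sum_{i\in S^\star}p_i \ge f(S_1) - \sum_{i\in S_1}p_i$; combined with $S_1$'s optimality for $f - \sum p_i/\gamma_1$ (which dominates $f - \sum p_i$ since $\gamma_1 \le 1$) one derives $f(S_1) \le f(S^\star)$ after rearranging (the price terms cancel in the right direction because $\gamma_1 \le 1$). If $f(S^\star) > x$, then trivially $f(S^\star) > x \ge f(S_1)$. I'd need to check the first case's algebra carefully but it is routine.

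\emph{Item 3.} This is the interpolation inequality and is where I'd focus. Fix $\mu \in [0,1]$. By item 1, $S_1$ maximizes $g_1(S) := \gamma_2 f(S) - \sum_{i\in S}p_i \cdot (\gamma_2/\gamma_1)$... no — rather, scale the defining inequalities of $S_1, S_2$ by $\gamma_1$ and $\gamma_2$ respectively to get $\gamma_j f(S_j) - \sum_{i\in S_j}p_i \ge \gamma_j f(S^\star) - \sum_{i\in S^\star}p_i$ for $j=1,2$ (using that $S_j$ maximizes $f(S) - \sum_{i\in S}p_i/\gamma_j$ and $\gamma_j > 0$). For $j=1$, however, I want $\gamma_2 f(S_1) - \sum_{i\in S_1}p_i$ on the RHS, not $\gamma_1 f(S_1) - \dots$; since $f(S_1) \le f(S^\star)$ (item 2) and $\gamma_2 \ge \gamma_1$, replacing $\gamma_1 f(S_1)$ by $\gamma_2 f(S_1)$ only increases the left side relative to using $\gamma_1 f(S^\star)$... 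I'd write: $\gamma_2 f(S_1) - \sum_{i\in S_1}p_i \ge \gamma_1 f(S_1) - \sum_{i\in S_1}p_i + (\gamma_2-\gamma_1)f(S_1)$, wait that's an equality. The point: from $\gamma_1 f(S_1) - \sum_{i\in S_1}p_i \ge \gamma_1 f(S^\star) - \sum_{i\in S^\star}p_i$ and $(\gamma_2 - \gamma_1) f(S_1) \ge (\gamma_2-\gamma_1)f(S^\star)$ is \emph{false} in general ($f(S_1)$ could be smaller); rather I want $(\gamma_2-\gamma_1)f(S_1) \le (\gamma_2-\gamma_1)f(S^\star)$, which by item 2 \emph{is} true, but that goes the wrong way. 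Let me instead just take the $\mu$-weighted combination of the two scaled inequalities directly: $\mu[\gamma_1 f(S_1) - \sum_{S_1}p_i] + (1-\mu)[\gamma_2 f(S_2) - \sum_{S_2}p_i] \ge \mu[\gamma_1 f(S^\star) - \sum_{S^\star}p_i] + (1-\mu)[\gamma_2 f(S^\star) - \sum_{S^\star}p_i] = (\mu\gamma_1 + (1-\mu)\gamma_2)f(S^\star) - \sum_{S^\star}p_i$. Then on the left, $\mu \gamma_1 f(S_1) \le \mu \gamma_2 f(S_1)$ since $f(S_1) \ge 0$ and $\gamma_1 \le \gamma_2$, so the left side is $\le \mu[\gamma_2 f(S_1) - \sum_{S_1}p_i] + (1-\mu)[\gamma_2 f(S_2) - \sum_{S_2}p_i]$. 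Chaining gives exactly item 3. So item 2 is \emph{not} even needed for item 3 — only $f(S_1)\ge 0$ and $\gamma_1 \le \gamma_2$ — which is a clean simplification I'd use.

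\emph{Main obstacle.} The only genuinely fiddly part is item 2 (and getting the $\gamma_1 = 0$ edge case / the convention for $S_1 \in \arg\max(f - \sum p_i/\gamma_1)$ stated cleanly); items 1 and 3 are straightforward loop-invariant and linear-combination arguments. I would make sure the initialization and the ``does not terminate at line~\ref{state:demand-orig}'' hypothesis are used exactly where needed ($f(S_2) > x$ at the start), and state the halving bound to nail $\gamma_2 - \gamma_1 \in (0,\delta)$.
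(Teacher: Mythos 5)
Your proposal follows essentially the same route as the paper for all three items: item 1 is the same loop-invariant argument, item 3 is exactly the paper's $\mu$-weighted combination of the two optimality inequalities followed by the $\gamma_1\le\gamma_2$ replacement (and you are right that item 2 is not needed there). For item 2, your case split on whether $f(S^\star)\le x$ is unnecessary---the paper handles both cases uniformly by writing $f(S^\star)-\sum_{i\in S^\star}p_i\ge\min\{f(S^\star),x\}-\sum_{i\in S^\star}p_i\ge\min\{f(S_1),x\}-\sum_{i\in S_1}p_i=f(S_1)-\sum_{i\in S_1}p_i$ and then combining with $S_1$'s optimality under prices $p_i/\gamma_1$ to get $(1-\gamma_1)\big(f(S^\star)-f(S_1)\big)\ge0$---but the algebra you leave as ``routine'' is indeed exactly that computation, so the gap is cosmetic rather than substantive.
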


\begin{proof}
    Since the binary search at line \ref{state:start-binary-1}  starts with  $\gamma_1=0$ such that $f(S_1)\le x$ and $\gamma_2$ such that $f(S_2)>x$, it stops at line~\ref{state:end-binary} outputting $\gamma_1,S_1,\gamma_2,S_2$ that satisfy \eqref{item:obs-alg-claim}.

We next show that $f(S_1)\leq f(S^{\star})$. By definition of $S^{\star}$ and $S_1$, we have
\begin{eqnarray*}
f(S^{\star})-\sum_{i\in S^{\star}}p_i\geq \min\{f(S^{\star}),x\}-\sum_{i\in S^{\star}}p_i \geq \min \{f(S_1),x\}-\sum_{i\in S_1} p_i = f(S_1)-\sum_{i\in S_1} p_i.    
\end{eqnarray*}
Therefore, $f(S^{\star})- \sum_{i \in S^{\star}} p_i-(f(S_1)-\sum_{i \in S_1} p_i)\geq 0.$ Adding $(\gamma_1-1) (f(S^{\star})-f(S_1))$ to both sides of the inequality, we have
\begin{eqnarray*}
     &(\gamma_1-1) (f(S^{\star})-f(S_1))\le  \gamma_1 f(S^{\star})- \sum_{i \in S^{\star}}p_i-(\gamma_1 f(S_1)  -\sum_{i \in S_1} p_i) \leq 0,
\end{eqnarray*}
where the last inequality is by optimality of $S_1$ with respect to payments $p_i/\gamma_1$. This implies that $f(S^{\star})-f(S_1)\ge0$ since $\gamma_1<1$. This completes the proof for \eqref{item:s1-alg-claim}.

We now prove the inequality in \eqref{item:ineq-mu-alg-claim}. By optimality of $S_1$ we have $\gamma_1 f(S_1) -\sum_{i \in S_1}p_i\ge \gamma_{1} f(S^{\star})-\sum_{i \in S^{\star}} p_i$, and multiplying it by $\mu$ we have $\mu(\gamma_1 f(S_1) -\sum_{i \in S_1}p_i)\ge \mu(\gamma_{1} f(S^{\star})-\sum_{i \in S^{\star}} p_i)$. The same holds for $\gamma_2$, $f(S_2)$, and factor $(1-\mu)$. That is, $(1-\mu)(\gamma_2 f(S_2) -\sum_{i \in S_1}p_i)\ge (1-\mu)(\gamma_{2} f(S^{\star})-\sum_{i \in S^{\star}} p_i)$. Combining these inequalities and rearranging, we have 
\begin{eqnarray*}
(\mu \gamma_1+(1-\mu) \gamma_2) f(S^{\star}) -\sum_{i \in S^{\star}} p_i  & \le &  \mu (\gamma_1 f(S_1) -\sum_{i \in S_1}p_i) + (1-\mu) (\gamma_2 f(S_2) -\sum_{i \in S_2}p_i) 
\\ & \leq & \mu (\gamma_2 f(S_1) -\sum_{i \in S_1}p_i) + (1-\mu) (\gamma_2 f(S_2) -\sum_{i \in S_2}p_i), 
\end{eqnarray*}
where the second inequality holds since $\gamma_2 >\gamma_1$.
This completes the proof.
\end{proof}

 We next prove the main lemma of this section.

\begin{proof}[Proof of Lemma~\ref{lemma:appx-demand}]
If the algorithm returns $\bar{S}=S_2$ at line \ref{state:demand-orig},
$$ \min\{f(\bar{S}),x\}-\sum\nolimits_{i \in \bar{S}} p_i= f(\bar{S})-\sum\nolimits_{i \in \bar{S}} p_i\ge \max_S (f(S)-\sum\nolimits_{i \in S} p_i) \geq \max_S (\min\{f(S),x\}-\sum\nolimits_{i \in S} p_i).$$
Otherwise, the binary search proceeds and ends satisfying the conditions in Claim \ref{claim:binary-search}. Recall that $f(S^{\star})\le (1+\delta)x$. 
Also, observe that $\frac{f(S_2)-x}{f(S_2)-f(S_1)}\cdot  f(S_1)+ \left(1- \frac{f(S_2)-x}{f(S_2)-f(S_1)}\right)\cdot f(S_2)=x$. Fixing 
\begin{eqnarray}\label{eq:mu}
\mu= \frac{f(S_2)-x}{f(S_2)-f(S_1)}, 
\end{eqnarray} in Claim~\ref{claim:binary-search}, it therefore holds that
$
f(S^{\star})\le  (1+\delta)x= (1+\delta)(\mu f(S_1)+ (1-\mu)f(S_2)).
$
Multiplying the inequality above by $(1-\gamma_2)$ and adding it to the inequality in Claim~\ref{claim:binary-search} \eqref{item:ineq-mu-alg-claim}, we have that
\begin{align*}
&(1-\gamma_2)f(S^{\star})+(\mu\gamma_1 +(1-\mu)\gamma_2) f(S^{\star}) -\sum\nolimits_{i \in S^{\star}}p_i \\
&\le (1-\gamma_2)(1+\delta)(\mu f(S_1)+ (1-\mu)f(S_2))+\mu (\gamma_2 f(S_1) -\sum\nolimits_{i \in S_1}p_i) + (1-\mu) (\gamma_2 f(S_2) -\sum\nolimits_{i \in S_2}p_i).
\end{align*}
After rearranging we get
\begin{align}\label{eq:s-star}
&f(S^{\star}) -\sum\nolimits_{i \in S^{\star}}p_i+\mu(\gamma_1 -\gamma_2) f(S^{\star}) \notag \\
&\le \mu [(1+\delta-\gamma_2\delta ) f(S_1)-\sum\nolimits_{i \in S_1}p_i]+(1-\mu)[ (1+\delta-\gamma_2\delta)f(S_2)   - \sum\nolimits_{i \in S_2}p_i] \nonumber \\
&=\mu (f(S_1)-\sum\nolimits_{i \in S_1}p_i)+(1-\mu) (f(S_2)   - \sum\nolimits_{i \in S_2}p_i)+\delta(1-\gamma_2 ) (\mu f(S_1)+(1-\mu)f(S_2)) \notag\\
&=\mu (f(S_1)-\sum\nolimits_{i \in S_1}p_i)+(1-\mu) (f(S_2)   - \sum\nolimits_{i \in S_2}p_i)+\delta(1-\gamma_2 ) x,
\end{align}
where the last inequality follows from the definition of $\mu$.

After establishing that the inequality holds at the end of the binary search, i.e., at line~\ref{state:end-binary}, we analyze the subsequent steps of the algorithm, beginning from line~\ref{state:partition}. The algorithm initializes an empty set $U_1$ and iteratively adds elements from $S_2$ to $U_j$ until the  value $f(U_j)$ falls within the interval $[(1-\delta)x, x]$. Once this condition is met, a new set is initialized, and this process continues until the cumulative function value reaches or exceeds $f(S_2)$.

We next show that the algorithm constructs at most $\ell \leq \left\lceil \frac{f(S_2)}{(1-\delta)x} \right\rceil$ 
sets, ensuring that each set $U_j$ with $j<\ell$ satisfies $f(U_j) \in [(1-\delta)x, x]$ and that  
$\sum\nolimits_{j \in [\ell]} f(U_j) \geq f(S_2)$. To see why the total function value satisfies this bound, assume for contradiction that the loop terminates with no remaining elements to allocate and that  
$\sum\nolimits_{j \in [\ell]} f(U_j) < f(S_2)$. Since the algorithm partitions $S_2$ into disjoint sets, subadditivity implies $f(S_2) \leq \sum\nolimits_{j \in [\ell]} f(U_j),$ contradicting this assumption. It remains to show that each $U_j$ with $j<\ell$ satisfies $f(U_j) \in [(1-\delta)x, x]$. By sub-additivity and the assumption that  
$f_j(\{i\}) \leq \delta x$ for all $i \in S_2$, we can always add an element to any set with $f(U_j) < (1-\delta)x$ without exceeding $x$. Finally, since each set contributes at least $(1-\delta)x$, the number of sets is at most $ \left\lceil \frac{f(S_2)}{(1-\delta)x}\right\rceil$, as required.

Let $U^{\star}$ be the set that maximizes the demand $U^{\star}\in \argmax_{U_j} ~( f(U_j) -\sum\nolimits_{i\in U_j} p_i)$. Then, we have
\begin{small}
\begin{eqnarray}\label{eq:u-star}
\ell ( f(U^{\star}) -\sum\nolimits_{i\in U^{\star}} p_i  ) \ge \sum\nolimits_{j \in [\ell]} (f(U_j)-\sum\nolimits_{i\in U_j} p_i)\ge f(S_2)  -\sum\nolimits_{i\in \bigcup_{j}U_j}p_i \ge f(S_2) - \sum\nolimits_{i \in S_2 } p_i,
\end{eqnarray}
\end{small}where the first inequality is by optimality of $U^{\star},$ the second inequality is by $\sum\nolimits_{j \in [\ell]}f(U_j)\geq f(S_2)$ and by $\sum\nolimits_{j \in [\ell]}\sum\nolimits_{i\in U_j}p_i=\sum\nolimits_{i\in \bigcup_j U_j}p_i$ since $U_{j_1}\cap U_{j_2}=\emptyset$ $\forall j_1,j_2\in [\ell]$, and the last inequality is by non-negativity of the payments and $\bigcup_j U_j\subseteq S_2.$

We now establish the following inequality:
\begin{eqnarray}\label{eq:before-beta}
f(S^{\star}) -\sum\nolimits_{i \in S^{\star}}p_i 
\le  \left(1+\frac{1}{1-\delta} \right) \Big(f(\bar S) -\sum\nolimits_{i \in \bar S}p_i \Big) +\delta x+\mu (\gamma_2-\gamma_1)f(S^{\star}).
\end{eqnarray}
Note that by Inequality~\eqref{eq:u-star}, we can replace $f(S_2)-\sum\nolimits_{i\in S_2}p_i$ in Inequality~\eqref{eq:s-star} with its upper bound $\ell(f(U^{\star})-\sum\nolimits_{i\in U^{\star}p_i})$ and obtain
$$f(S^{\star}) -\sum\nolimits_{i \in S^{\star}}p_i +\mu (\gamma_1-\gamma_2)f(S^{\star})\le  \mu  (f(S_1) -\sum\nolimits_{i \in S_1}p_i) + \ell(1-\mu) ( f(U^{\star}) -\sum\nolimits_{i \in U^{\star}}p_i) +\delta (1-\gamma_2)x.$$
Let $\bar S$ be the set achieving the maximum between $f(S_1)-\sum\nolimits_{i\in S_1} p_i$ and $f(U^{\star}) -\sum\nolimits_{i \in U^{\star}}p_i$. Since $\gamma_2\ge 0$, we have $f(S^{\star}) -\sum\nolimits_{i \in S^{\star}}p_i +\mu (\gamma_1-\gamma_2)f(S^{\star})\le  (\mu + \ell(1-\mu) )( f(\bar S) -\sum\nolimits_{i \in \bar S}p_i) +\delta x.$ 
To show Inequality~\eqref{eq:before-beta}, it suffices to prove that $\mu + \ell(1-\mu)\leq 1+ \frac{1}{1-\delta}.$ As established above, $\ell\leq \frac{f(S_2)}{(1-\delta)x}+1$. 
This implies that $\mu + \ell(1-\mu)\leq \mu+(\frac{f(S_2)}{(1-\delta)x}+1)(1-\mu)=1+\frac{f(S_2)}{(1-\delta)x}(1-\mu)$. 
Replacing $\mu$ according to Equation~\eqref{eq:mu}, we have $\mu + \ell(1-\mu)\leq 1+\frac{f(S_2)}{(1-\delta)x}(\frac{x-f(S_1)}{f(S_2)-f(S_1)})=1+\frac{1}{(1-\delta)}\frac{f(S_2)(x-f(S_1))}{(f(S_2)-f(S_1))x}.$ By showing that $\frac{f(S_2)(x-f(S_1))}{(f(S_2)-f(S_1))x}\leq 1$ we prove that $\mu + \ell(1-\mu)\leq 1+ \frac{1}{1-\delta}.$ Observe that since $f(S_2)\geq x$, we have that $1-\frac{f(S_1)}{x}\leq 1-\frac{f(S_1)}{f(S_2)}$ which is $\frac{x-f(S_1)}{x}\leq \frac{f(S_2)-f(S_1)}{f(S_2)}$ and hence $\frac{f(S_2)(x-f(S_1))}{(f(S_2)-f(S_1))x}\leq 1.$ This completes the proof of Inequality~\eqref{eq:before-beta}.

Rearranging Inequality~\eqref{eq:before-beta}, and applying $\mu\leq 1, \gamma_2-\gamma_1\leq \delta $ and $f(S^{\star})\leq (1+\delta)x$ we have
$$
\frac{1}{1+\frac{1}{1-\delta}}(f(S^{\star}) -\sum\nolimits_{i \in S^{\star}}p_i )-\frac{(\delta x+\delta (1+\delta)x)}{1+\frac{1}{1-\delta}} \le f(\bar{S})-\sum\nolimits_{i\in \bar S} p_i.
$$
By combining  the last inequality with 
$\frac{(\delta x+\delta (1+\delta)x)}{1+\frac{1}{1-\delta}}=\frac{\delta x (1-\delta)(2+\delta)}{2-\delta}=\frac{\delta x (2-\delta-\delta^2)}{2-\delta}\leq \delta x,$ we obtain 
$$
\frac{1}{1+\frac{1}{1-\delta}}(f(S^{\star}) -\sum\nolimits_{i \in S^{\star}}p_i )-\delta x\le \frac{1}{1+\frac{1}{1-\delta}}(f(S^{\star}) -\sum\nolimits_{i \in S^{\star}}p_i )-\frac{(\delta x+\delta (1+\delta)x)}{1+\frac{1}{1-\delta}} \le f(\bar{S})-\sum\nolimits_{i\in \bar S} p_i,
$$
which concludes the proof of the lemma since $f(S^{\star}) -\sum\nolimits_{i \in S^{\star}}p_i \ge \min\{f(S^{\star}), x\} -\sum\nolimits_{i \in S^{\star}}p_i$.
\end{proof}

\section{Projects Without a Dominant Agent}\label{sub:opt+-approx}
In this section, we prove that an allocation that approximates $\OPT_{+}$—the portion of $\OPT$'s revenue contributed by projects without dominant agent can be computed in polynomial times.

\begin{lemma}[Approximation of $\OPT_{+}$]
\label{lemma:opt-plus} 
For multi-project contract settings with XOS success functions, there exists a polynomial-time algorithm that, using a polynomial number of value and demand queries, computes an allocation $S^{+}$ satisfying $\rev(S^{+}) = \Theta(1) \OPT_{+}$
\end{lemma}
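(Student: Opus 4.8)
The plan is to follow the roadmap laid out in the proof outline of \cref{sec:const-appx-diff-tasks}, assembling the four ingredients that were flagged there into a single pipeline that produces the allocation $S^+$ and bounds its revenue from below by a constant times $\OPT_+$. Concretely, I would proceed in the following order.

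\textbf{Step 1: LP formulation.} First I would write down the linear program \eqref{lp:lp1} over variables $y_{j,x,S}\ge 0$ indexed by a project $j\in M$, a discretized reward estimate $x$ (drawn from a polynomial-size grid, e.g.\ powers of $(1+\epsilon)$ times $\max_i f_j(\{i\})$, so that at least one $x$ is within a constant factor of $f_j(S_j^\star)$ for every $j\in I^\star$), and a set $S\subseteq N$. The objective is to maximize $\sum_{j,x,S} y_{j,x,S}\bigl(\min\{f_j(S),x\} - \sum_{i\in S} p_{i,x}\bigr)$ with $p_{i,x} := \sqrt{x c_{ij}/2}$, subject to (i) $\sum_{x,S} y_{j,x,S}\le 1$ for each project $j$, and (ii) $\sum_{j}\sum_{x}\sum_{S\ni i} y_{j,x,S}\le 1$ for each agent $i$, i.e.\ no agent fractionally overallocated. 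I would observe that restricting attention to the integral allocation $(S_j^\star)_{j\in I^\star}$ (one block per project $j\in I^\star$ at the estimate $x$ closest to $f_j(S_j^\star)$) is feasible, and that by the capping and the choice of prices, combined with \cref{lemma:lemma34-multi} (Lemma 3.4 of \cite{duetting2024multiagent}) applied retroactively to $S_j^\star$, its objective value is $\Theta(1)\cdot \sum_{j\in I^\star} f_j(S_j^\star)$; hence $\mathrm{OPT}_{\mathrm{LP}} = \Omega(\sum_{j\in I^\star} f_j(S_j^\star))$.

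\textbf{Step 2: Solving the LP.} Next I would dualize to \eqref{LP:lp2} and argue that the separation oracle for the dual reduces to, for each project $j$ and each estimate $x$, computing $\max_S\bigl(\min\{f_j(S),x\} - \sum_{i\in S}(p_{i,x}+\lambda_i)\bigr)$ where $\lambda_i$ are the dual prices on the agent constraints. This is exactly a capped demand query, and since $x$ is an estimate for projects in $I^\star$ where every agent is nondominant ($f_j(\{i\})\le \delta f_j(S_j^\star)$, so up to a constant-factor slack in the grid, $f_j(\{i\})\le O(\delta) x$), \cref{lemma:appx-demand} (via \cref{alg:appx-demand}) gives an approximate capped demand with the stated multiplicative-plus-additive-$\delta x$ guarantee. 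Plugging this approximate separation oracle into the ellipsoid method (\cref{alg:solve-lp1}) yields, in polynomial time, a feasible fractional solution $\tilde y$ with polynomial-size support and objective $\Omega(\mathrm{OPT}_{\mathrm{LP}})$ minus an $O(\delta)\sum_{j\in I^\star} f_j(S_j^\star)$ additive loss — which, since $\delta=1/129$ is a small constant and the loss is proportional to the same quantity being approximated, still leaves $\Omega(\sum_{j\in I^\star} f_j(S_j^\star))$. I would interpret $\tilde y$ as a distribution over allocations $\tilde S = (\tilde S_j)_{j}$ and record that, in expectation, $\sum_j \min\{f_j(\tilde S_j), x_j\} - \sum_j\sum_{i\in\tilde S_j} p_{i,x_j} = \Omega(\sum_{j\in I^\star} f_j(S_j^\star))$, with the capping ensuring $f_j(\tilde S_j)$ is never "too large to scale."

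\textbf{Step 3: Rounding and scaling.} Then I would invoke the XOS rounding algorithm \cref{alg:rounding} / \cref{lemma:rounding} to turn $\tilde y$ into a single feasible integral allocation $T = (T_j)_j$ with $\sum_j f_j(T_j) \ge \tfrac12 \mathbb{E}[\sum_j \min\{f_j(\tilde S_j),x_j\}]$ and preserving the property that marginals dominate prices — specifically (using the XOS structure: pick the maximizing additive clause for $\tilde S_j$, keep each agent only with probability proportional to her clause-value, and use that to argue each retained agent's marginal is at least her price up to a factor $2$). Finally I would apply the generalized scaling lemma of \cref{sec:scaling} (extending \cite{DuettingEFK23}) project-by-project to $T$: it removes agents so that for the resulting $T'_j$ we have $f_j(i\mid T'_j\setminus\{i\})\ge \sqrt{2 f_j(T'_j)}$ for every $i\in T'_j$, while $f_j(T'_j) = \Theta(1) f_j(T_j)$. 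With this high-marginal condition in hand, \cref{lemma:lemma34-multi} gives $\rev_j(T'_j)\ge \tfrac12 f_j(T'_j)$ for each $j$, so summing, $\rev(T') \ge \tfrac12\sum_j f_j(T'_j) = \Theta(1)\cdot\sum_{j\in I^\star} f_j(S_j^\star) \ge \Theta(1)\cdot\OPT_+$, where the last step uses that $\rev_j(S_j^\star)\le f_j(S_j^\star)$. Setting $S^+ := T'$ completes the proof.

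\textbf{Main obstacle.} The delicate point — and the one I'd spend the most care on — is tracking the \emph{additive} $\delta x$ errors through the chain. They enter once per project via the approximate capped demand in the separation oracle, and again (implicitly) through the capping; I must verify that $\sum_{j\in I^\star}\delta x_j = O(\delta)\sum_{j\in I^\star} f_j(S_j^\star)$ (which holds because each estimate $x_j$ is within a constant factor of $f_j(S_j^\star)$ when $j\in I^\star$, and for $j\notin I^\star$ the LP simply assigns $y\equiv 0$), so that choosing $\delta$ a small enough absolute constant makes the cumulative additive slack a small constant fraction of $\sum_{j\in I^\star} f_j(S_j^\star)$ rather than overwhelming it. A secondary subtlety is that the ellipsoid method with an \emph{approximate} separation oracle returns a solution to a perturbed LP, so I need the standard argument that the approximation factor of the oracle carries over (up to the additive term) to the LP optimum, and that the returned solution still has polynomially bounded support after purging near-zero coordinates — both of which are routine but must be stated. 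Everything else (bipartite-matching for $\OPT_-$, taking the better of the two contracts at the end in \cref{alg:appx-opt}) is bookkeeping.
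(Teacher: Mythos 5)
Your proposal is correct and follows essentially the same pipeline as the paper: formulate \eqref{lp:lp1} over variables $y_{j,x,S}$ with capped rewards and prices $\sqrt{x c_{ij}/2}$, solve via the dual \eqref{LP:lp2} and an approximate separation oracle built from the approximate capped demand of \cref{lemma:appx-demand}, round with \cref{alg:rounding}/\cref{lemma:rounding}, apply the generalized scaling lemma (\cref{lemma:scalinglemma-new}), and finish with \cref{lemma:lemma34-multi}. Two small imprecisions worth noting, neither of which breaks the argument: (i) in Step~1 the lemma that bounds the prices by $f_j(S^{\star\star}_j)$ when verifying the feasible solution has high LP objective is Lemma~3.3 of \cite{DuettingEFK23} (i.e.\ \cref{lemma:lemma33-multi}, $\sum_{i\in S}\sqrt{c_i}\le\sqrt{f(S)}$), not Lemma~3.4; and (ii) your parenthetical description of the rounding mechanism (keeping agents with probability proportional to clause-value, preserving a factor-2 bound on marginals versus prices) does not match \cref{alg:rounding}, which is a deterministic iterative scheme whose guarantee is only that the rounded $T_j$ is a \emph{subset} of some set $S_j$ in the support and the total value is halved --- the needed marginal condition then comes from the LP's Claim~\ref{claim:solve-lp1}(3) applied to the superset $S_j$, which is exactly why the \emph{generalized} scaling lemma (marginals with respect to a superset) is required rather than the original one of \cite{DuettingEFK23}.
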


To prove \cref{lemma:opt-plus}, we introduce the following definitions. Let $S^{\star \star}$ represent the allocation derived by restricting the optimal allocation $S^{\star}$ to projects in $I^{\star}$, i.e., 
\begin{equation*}
        S^{\star \star}_j:=\begin{cases}   
       S^{\star}_j & j\in I^{\star},\\
       \emptyset & j\not \in  I^{\star}.
    \end{cases}
\end{equation*}
Note that $\OPT_{+}=\sum\nolimits_{j\in I^{\star}} \rev(S^{\star}_j)=\sum\nolimits_{j\in M} \rev(S^{\star \star}_j)$. 
For any project $j\in M$, let $D_j$ be a discretized set of potential values for $f_j(S^\star)$, and $A_{j,x}$ be the set of agents that contribute at most $\delta x$ to project~$j$ (by themselves). Formally,
\begin{align*}
D_{j} &:=  \{2^k \cdot f_j(\{i\})\mid 0\le k \le \ceil{\log n}, k \in \mathbb{N}, i\in N\}\cup \{0\} &\forall j\in M,\\
A_{j,x}&:=\{ i \mid f_j(\{i\})\leq \delta x \} & \forall j\in M, x\in D_j.
\end{align*}
Our analysis relies on the following linear program.
%
\begin{equation}\label{lp:lp1}\tag{LP1}
\begin{array}{ll@{}ll}
\max & \sum\nolimits_{j\in M} \sum\nolimits_{x\in D_j} \sum\nolimits_{ S\subseteq A_{j,x}} y_{j,x, S} \left(\min\{f_j(S),x\}-\delta\left(1+\frac{1}{1-\delta }\right) x-\sum\nolimits_{i\in S}\frac{\sqrt{c_{ij} x}}{2 \sqrt{2}}\right)\\
\text{s.t.}
&\sum\nolimits_{x\in D_j} \sum\nolimits_{S\subseteq A_{j,x}} y_{j,x, S}\leq 1, \qquad \forall j\in M,\nonumber\\
&\sum\nolimits_{j\in M} \sum\nolimits_{x\in D_j} \sum\nolimits_{  S \subseteq A_{j,x} \mid i\in S}y_{j,x, S}\leq 1,  \qquad \forall i\in N,\nonumber\\
& y_{j,x, S}\in [0,1], \qquad \forall j\in M, x\in D_j, S\subseteq A_{j,x}.\nonumber
\end{array}
\end{equation}

In the above program, each variable $y_{j,x,S}$ acts as the fractional relaxation of an indicator function, 
specifying whether the set of agents~$S$ is allocated to project~$j$ (where $x$ is a parameter capturing an estimate reward of this project). The constraints ensure that the allocation is feasible: The first set of constraints guarantees that each project is assigned to at most one set of agents, while the second set ensures that each agent is assigned to at most one project. The objective function encapsulates two goals: (i)~maximize the overall \emph{capped} reward; 
(ii)~the marginal contribution of every agent with respect to any (fractional) set allocated to project $j$ (i.e., any $y_{j,x,S}>0$ for some $x$) is sufficiently large. Intuitively, (ii)~means that we are in a ``local optimum'', where we don't want to remove any agent from their assigned project. 
The ``prices'' $\sum\nolimits_{i\in S}\frac{\sqrt{c_{ij} x}}{2 \sqrt{2}}$ is similar to the approach done by \cite{DuettingEFK23} with one important necessary modification. In \cite{DuettingEFK23}, they can a-priori guess an estimate of the reward of the (only) project, which allows defining just one set of prices. We cannot estimate the rewards of all projects (as it requires guessing too many values), so we capture all estimates by adding more variables (in particular variables $y_{j,x,S}$ for different levels of guesses $x$) and feasibility constraints to the LP. We mention that the use of a capped function is crucial to the proof. We aim for $f(S)$ to be close to the estimate $x$. As we will see in the rest of the proof, this enables further steps that transform the (approximately) optimal allocation given by LP1 into an allocation that approximates $\text{OPT}^+$.

Solving \ref{lp:lp1} directly as an integer program (if we restrict $y_{j,x,S}$ to be in $\{0,1\}$ instead of in $[0,1]$) is computationally intractable with standard techniques. Instead, we relax the integrality constraints and describe in Section~\ref{sub:frac} how to compute an approximately optimal fractional solution in polynomial time. Then, in Section~\ref{sub:rounding}, we introduce a rounding scheme that converts the fractional solution into an integral one while preserving feasibility and approximately maintaining the same objective value. Then, in Section~\ref{sec:scaling}, we generalize a scaling algorithm developed in \cite{DuettingEFK23} to be apply on each set in the allocation we derived in the former step. This step is useful to obtain a set with approximately the same reward while not paying more than half of the reward.

\subsection{Finding an Approximately Optimal Fractional Solution}\label{sub:frac}

Note that \ref{lp:lp1} consists of a polynomial number of constraints and exponentially many variables. We present an approach for computing an approximately optimal fractional solution to \ref{lp:lp1}, leveraging the dual linear program, the ellipsoid method, and an approximate separation oracle. We establish the construction of this oracle by building on the approximate demand oracle introduced in Claim~\ref{lemma:appx-demand}.

The dual of \ref{lp:lp1} is as follows.
\begin{equation}\label{LP:lp2}\tag{LP2}
\begin{array}{ll@{}ll}
\text{min}&\sum\nolimits_{j \in M} \alpha_j + \sum\nolimits_{i \in N} \beta_i  \\
\text{s.t.}&\alpha_j + \sum\nolimits_{i \in S} \beta_i \geq \min\{f_j(S),x\} -(1+\frac{1}{1-\delta})\delta x- \sum\nolimits_{i \in S} \frac{\sqrt{c_{ij}x}}{2\sqrt{2}} \text{ }\forall j \in M, x \in D_j, S \subseteq A_{j,x},\\
& \alpha_j \geq 0, \quad \forall j \in M,\\
& \beta_i \geq 0, \quad \forall i \in N.
\end{array}
\end{equation}

We first introduce a polynomial-time approximate separation oracle for \eqref{LP:lp2}, which is formally stated in the following claim. 

\begin{claim}[Approximate Separation Oracle]\label{claim:appx-separation}
Given any assignment of the dual variables $ \alpha_j $ for all $ j \in M $ and $ \beta_i $ for all $ i \in N $ corresponding to \eqref{LP:lp2}, there exists a polynomial-time approximate separation oracle that either:  
\begin{enumerate}  
    \item Verifies that all constraints of \eqref{LP:lp2} are satisfied, i.e.,
    $$
    \alpha_j + \sum\nolimits_{i \in S} \beta_i \geq \min\{f_j(S), x\} 
    - \left(1 + \frac{1}{1-\delta}\right) \delta x -  \sum\nolimits_{i \in S} \frac{\sqrt{c_{ij}x}}{2\sqrt{2}}, 
    \quad \forall j \in M, x \in D_j, S \subseteq A_{j,x}.
    $$
    \item Otherwise, it identifies an \emph{approximately} violated constraint. Specifically,
    $$
    \exists j \in M, x \in D_j, S \subseteq A_{j,x} \; \text{such that} \; 
    frac{1}{1 + 1/(1-\delta)} \alpha_j + \sum\nolimits_{i \in S} \beta_i < \min\{f_j(S), x\} 
    -  \sum\nolimits_{i \in S}\frac{ \sqrt{c_{ij}x}}{2\sqrt{2}}.
    $$    
\end{enumerate}  
\end{claim}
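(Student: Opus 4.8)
The plan is to reduce the separation problem for \eqref{LP:lp2} to a collection of (capped) demand queries, one per project $j\in M$ and per estimate $x\in D_j$, and then invoke \cref{lemma:appx-demand} to handle the capping approximately. For a fixed $(j,x)$, the constraint indexed by a set $S\subseteq A_{j,x}$ is violated exactly when
\[
\min\{f_j(S),x\}-\sum\nolimits_{i\in S}\Bigl(\beta_i+\tfrac{\sqrt{c_{ij}x}}{2\sqrt 2}\Bigr) > \alpha_j + \Bigl(1+\tfrac{1}{1-\delta}\Bigr)\delta x .
\]
So the natural move is: for each $(j,x)$ define element prices $p_i := \beta_i + \tfrac{\sqrt{c_{ij}x}}{2\sqrt 2}\ge 0$ on the ground set $A_{j,x}$, and ask for a set $S\subseteq A_{j,x}$ maximizing $\min\{f_j(S),x\}-\sum_{i\in S}p_i$. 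Since $f_j$ restricted to $A_{j,x}$ is monotone subadditive (XOS $\subseteq$ subadditive) and by definition of $A_{j,x}$ every singleton contributes at most $\delta x$, the hypotheses of \cref{lemma:appx-demand} are met, and Algorithm~\ref{alg:appx-demand} returns in polynomial time (using demand queries to $f_j$, which follow from value/demand oracle access) a set $\bar S_{j,x}$ with
\[
\min\{f_j(\bar S_{j,x}),x\}-\sum\nolimits_{i\in\bar S_{j,x}}p_i \;\ge\; \frac{1}{1+\frac{1}{1-\delta}}\,\mathrm{OPT}_{j,x} - \delta x,
\]
where $\mathrm{OPT}_{j,x}:=\max_{S\subseteq A_{j,x}}\bigl(\min\{f_j(S),x\}-\sum_{i\in S}p_i\bigr)$.

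The oracle itself is then: loop over all $(j,x)$ with $j\in M$, $x\in D_j$ (polynomially many, since $|D_j|\le n(\ceil{\log n}+1)+1$); compute $\bar S_{j,x}$ via Algorithm~\ref{alg:appx-demand}; and check whether
\[
\alpha_j + \sum\nolimits_{i\in\bar S_{j,x}}\beta_i \;<\; \min\{f_j(\bar S_{j,x}),x\} - \Bigl(1+\tfrac1{1-\delta}\Bigr)\delta x - \sum\nolimits_{i\in\bar S_{j,x}}\tfrac{\sqrt{c_{ij}x}}{2\sqrt2}.
\]
If this holds for some $(j,x)$, output that $(j,x,\bar S_{j,x})$ as an exactly violated constraint (this is even stronger than case~2). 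If it fails for every $(j,x)$, we must argue case~1 (approximately): all constraints hold in the relaxed sense. Here is the crux. Suppose for contradiction some genuine constraint $(j,x,S)$ is violated badly, i.e. $\tfrac{1}{1+1/(1-\delta)}\alpha_j+\sum_{i\in S}\beta_i < \min\{f_j(S),x\}-\sum_{i\in S}\tfrac{\sqrt{c_{ij}x}}{2\sqrt2}$. Rearranging, $\min\{f_j(S),x\}-\sum_{i\in S}p_i > \tfrac{1}{1+1/(1-\delta)}\alpha_j$, so $\mathrm{OPT}_{j,x} > \tfrac{1}{1+1/(1-\delta)}\alpha_j$. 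Plugging into the guarantee of Algorithm~\ref{alg:appx-demand},
\[
\min\{f_j(\bar S_{j,x}),x\}-\sum\nolimits_{i\in\bar S_{j,x}}p_i > \frac{1}{(1+\frac1{1-\delta})^2}\alpha_j - \delta x,
\]
and we want to conclude $\min\{f_j(\bar S_{j,x}),x\}-\sum_{i\in\bar S_{j,x}}p_i > \alpha_j + (1+\tfrac1{1-\delta})\delta x$, i.e. that the algorithm would have found a violation — contradicting that we reached case~1. This requires the inequality
\[
\frac{1}{(1+\frac1{1-\delta})^2}\,\alpha_j - \delta x \;\ge\; \alpha_j + \Bigl(1+\tfrac1{1-\delta}\Bigr)\delta x,
\]
which is false in general. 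So a direct contradiction does not close, and the statement of Claim~\ref{claim:appx-separation} as written seems to be missing a factor — one expects the correct relaxed guarantee in case~2 to be of the form $\tfrac{1}{1+1/(1-\delta)}\bigl(\alpha_j + (1+\tfrac1{1-\delta})\delta x\bigr) + \bigl(1+\tfrac1{1-\delta}\bigr)\delta x$ on the right (so that applying Algorithm~\ref{alg:appx-demand}'s multiplicative-plus-additive loss still detects it). Concretely, I would instead prove: if case~1 fails then there is $(j,x,S)$ with $\tfrac{1}{1+1/(1-\delta)}\bigl(\alpha_j + (1+\tfrac1{1-\delta})\delta x\bigr) + (1+\tfrac1{1-\delta})\delta x < \min\{f_j(S),x\} - \sum_{i\in S}\tfrac{\sqrt{c_{ij}x}}{2\sqrt2}$; equivalently, that the oracle's failure to find an exact violation at any $(j,x)$ forces $\mathrm{OPT}_{j,x}\le$ (something), which back through \cref{lemma:appx-demand} upper-bounds the true optimum, yielding the approximate feasibility.

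Thus the skeleton is: (i) define the prices $p_i=\beta_i+\sqrt{c_{ij}x}/(2\sqrt2)$ and observe the constraint-violation condition is a capped-demand question; (ii) verify the hypotheses of \cref{lemma:appx-demand} hold on ground set $A_{j,x}$; (iii) run Algorithm~\ref{alg:appx-demand} for each of the polynomially many pairs $(j,x)$ and report any exact violation found; (iv) if none is found, use the guarantee of \cref{lemma:appx-demand} contrapositively to bound the true maximum of $\min\{f_j(S),x\}-\sum_i p_i$ and conclude approximate feasibility. Everything runs in polynomial time since there are polynomially many $(j,x)$ pairs and Algorithm~\ref{alg:appx-demand} is polynomial. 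The main obstacle — and the step I would scrutinize most carefully — is matching the constants in step~(iv): the multiplicative factor $\tfrac{1}{1+1/(1-\delta)}$ in case~2 of the claim has to be exactly the one that survives composing the ellipsoid/duality argument with the $\bigl(\tfrac{1}{1+1/(1-\delta)},\delta x\bigr)$-loss of Algorithm~\ref{alg:appx-demand}, and I would want to double-check whether the claim as stated needs the additive $-(1+\tfrac1{1-\delta})\delta x$ term retained on its right-hand side (as it is in case~1) rather than dropped, and whether the $\delta x$ slack needs to be absorbed by tightening $\delta=1/129$. The rest is bookkeeping.
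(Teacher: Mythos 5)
Your high-level plan (reduce to capped-demand via Lemma~\ref{lemma:appx-demand}, loop over the polynomially many $(j,x)$, extend prices by $\beta_i$) is exactly the paper's approach, and you correctly checked the hypotheses of Lemma~\ref{lemma:appx-demand}. The gap is in step~(iii): you chose the wrong test. You test whether the \emph{original} LP2 constraint is violated at $\bar S_{j,x}$, i.e.\ whether $\alpha_j + \sum_{i\in\bar S}\beta_i < \min\{f_j(\bar S),x\} - (1+\tfrac1{1-\delta})\delta x - \sum_{i\in\bar S}\tfrac{\sqrt{c_{ij}x}}{2\sqrt2}$, and then try to deduce case~1 by contradiction when the test fails everywhere. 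As you noticed, that contradiction doesn't close, because the approximate-demand guarantee loses a multiplicative factor and the negation of your test doesn't feed cleanly into it.

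The correct test---and the one the paper uses---is simply the negation of case~2 evaluated at $\bar S_{j,x}$: for each $(j,x)$, check whether
\begin{equation*}
\frac{1}{1+\frac{1}{1-\delta}}\,\alpha_j + \sum\nolimits_{i\in\bar S_{j,x}}\beta_i \;<\; \min\{f_j(\bar S_{j,x}),x\} - \sum\nolimits_{i\in\bar S_{j,x}}\frac{\sqrt{c_{ij}x}}{2\sqrt2},
\end{equation*}
equivalently $\alpha_j < \bigl(1+\tfrac1{1-\delta}\bigr)\bigl(\min\{f_j(\bar S_{j,x}),x\}-\sum_{i\in\bar S_{j,x}}p_i\bigr)$ with $p_i=\beta_i+\tfrac{\sqrt{c_{ij}x}}{2\sqrt2}$. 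If this holds for some $(j,x)$, you are literally in case~2 with $S=\bar S_{j,x}$; nothing more to prove. If it fails for all $(j,x)$, then $\alpha_j\ge\bigl(1+\tfrac1{1-\delta}\bigr)\bigl(\min\{f_j(\bar S_{j,x}),x\}-\sum_{i\in\bar S_{j,x}}p_i\bigr)$, and multiplying the guarantee of Lemma~\ref{lemma:appx-demand} by $\bigl(1+\tfrac1{1-\delta}\bigr)$ gives $\alpha_j\ge\max_{S}\bigl(\min\{f_j(S),x\}-\sum_{i\in S}p_i\bigr)-\delta\bigl(1+\tfrac1{1-\delta}\bigr)x$, which after substituting $p_i$ is exactly case~1---no contradiction needed. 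So the claim as stated is correct; your inference that it ``seems to be missing a factor'' is mistaken, and the extra $(1+\tfrac1{1-\delta})\delta x$ term you proposed adding to case~2's right-hand side is unnecessary once you adopt the right test.
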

\begin{proof}
    We construct the approximate demand oracle using the approximate capped demand derived in \cref{sub:approximate-demad}.
Given an assignment $\{\alpha_j\}_{i\in M},\{\beta_i\}_{i\in N}$ for \eqref{LP:lp2}, for each pair of $ j\in M$ and $x\in D_j$,  we define the prices
\begin{equation*}
    p_i =
    \begin{cases}
        \frac{\sqrt{x}}{2\sqrt{2}} \sqrt{c_{ij}} + \beta_i, & \text{if } f_j(\{i\}) \leq \delta \cdot x, \\
        \infty, & \text{otherwise}
    \end{cases}
\end{equation*}
and compute an approximate demand of function $f_j$ with cap $x$ and prices $p_i$ using \cref{alg:appx-demand}. 
Note that our choice of prices $p_i=\infty$ for $i$ with $f_j(\{i\})>\delta  \cdot x$ ensures that given the prices, the demand or any approximate demand of the agent must be a subset of  $A_{j,x}$.
By \cref{lemma:appx-demand}, 
our approximate demand from \cref{alg:appx-demand} yields a set $\bar{S} \subseteq A_{j,x}$ that satisfies $\min\{f_j(\bar{S}), x\} - \sum\nolimits_{i \in \bar{S}} p_i \geq \frac{1}{1+\frac{1}{1-\delta}} \max_{S} \left(\min\{f_j(S), x_j\} - \sum\nolimits_{i \in S} p_i \right) - \delta \cdot x.$  By rearranging the terms, we obtain:
\begin{equation}\label{eq:appx-demand-rer}
    \left(1 + \frac{1}{1-\delta} \right) \left( \min\{f_j(\bar{S}), x\} - \sum\nolimits_{i \in \bar{S}} p_i \right) 
    \geq \max_{S} \left(\min\{f_j(S), x\} - \sum\nolimits_{i \in S} p_i \right) - \delta \left(1 + \frac{1}{1-\delta}\right) x.
\end{equation}
If $\alpha_j \geq \left(1 + \frac{1}{1-\delta} \right) \left( \min\{f_j(\bar{S}), x\} - \sum\nolimits_{i \in \bar{S}} p_i \right)$ for all $j\in M$ and $x\in D_j$,
all the constraints of \eqref{LP:lp2} are satisfied. This is because by combining the inequality with Inequality~\eqref{eq:appx-demand-rer}, we have 
\begin{eqnarray*}
    \alpha_j  
    \geq  \min\{f_j(S), x\} - \sum\nolimits_{i \in S} p_i - \delta \left(1 + \frac{1}{1-\delta}\right) x,
    &\quad \forall j\in M, x\in D_j, S \subseteq  A_{j,x}.
\end{eqnarray*}

Otherwise, there exists $j\in M$ and $x\in D_j$ such that $\left(1 + \frac{1}{1-\delta} \right) \left( \min\{f_j(\bar{S}), x\} - \sum\nolimits_{i \in \bar{S}} p_i \right) > \alpha_j$. In this case, the constraint for $j$ and $x$ is approximately violated for $\bar{S}$. Specifically, after replacing the prices and rearranging we have
\begin{equation*}
    \min\{f_j(\bar{S}), x\} - \sum\nolimits_{i \in \bar{S}} \frac{\sqrt{x}}{2\sqrt{2}} \sqrt{c_{ij}}  > \frac{1}{1+\frac{1}{1-\delta}} \alpha_j + \sum\nolimits_{i \in \bar{S}} \beta_i.
\end{equation*}
Since there is only a polynomial number of choices for $ j \in M$ and $x \in D_j$, this process terminates in polynomial time.

\end{proof}

\begin{algorithm}[t]
\caption{Approximately Optimal Fractional Solution for \eqref{lp:lp1}\\
\textbf{Input:} Instance of \eqref{lp:lp1}, factor $\epsilon>0$\\
\textbf{Output:} A feasible solution $y$ to \eqref{lp:lp1} satisfying the conditions in \cref{claim:solve-lp1}.
}\label{alg:solve-lp1}
\begin{algorithmic}[1]
\State Derive \eqref{LP:lp2}, the dual of \eqref{lp:lp1}
\State Initialize $\gamma_- \gets 0,\gamma_+ \gets m \cdot n \cdot \max_{i,j}f_j(\{i\})$
\While {$\gamma_+-\gamma_-\ge \epsilon$} \label{state:start-binary}
\State Set $\gamma \gets (\gamma_+ +\gamma_-)/2$ 
\State Add to \eqref{LP:lp2} the constraint $\sum\nolimits_{j \in M} \alpha_j + \sum\nolimits_{i \in N} \beta_i\leq \gamma$  
\State Solve the LP via the ellipsoid method and the approximate separation oracle (Claim~\ref{claim:appx-separation}).
\If{the LP is infeasible}
\State Update the lower bound $\gamma_- \gets \gamma$
\Else
\State Update the upper bound $\gamma_+ \gets \gamma$
\EndIf

\EndWhile\label{slg1:LP-end-binary}
\State Run the ellipsoid method for \eqref{LP:lp2} with additional constraint $\sum\nolimits_{j \in M} \alpha_j + \sum\nolimits_{i \in N} \beta_i<\gamma_-$, introducing a primal variable $y_{j,x,S}$ for each violating constraint $j, x, S$\label{state:run-ellipsoid}
\State Construct a poly-size primal LP using the violated constraints and the corresponding  $y_{j,x,S}$ 
\State Solve this LP obtaining solution $y$\label{state:end-ellipsoid}
\State Scale the solution, $y_{j,x,S}\gets  y_{j,x,S}/(1+1/(1-\delta))$ $\forall j\in M, x\in D_j, x\in A_{j,x}$ 
\If{the objective of \eqref{lp:lp1} at $y$ is negative}\label{state:neg}
\State \Return $y_{j,x,S}\gets 0$ $\forall j\in M, x\in D_j
S\subseteq A_{j,x}$
\EndIf
\While{there exists $y_{j,x,S} > 0$ with $i\in S$ such that $f(i \mid S \setminus i) < \sqrt{x c_{ij}}/2\sqrt{2}$}\label{state:start-improving}
\State Update $y_{j,x,S\setminus \{i\}} \gets y_{j,x,S\setminus \{i\}} + y_{j,x,S}$, and $y_{j,x,S} \gets 0$
\EndWhile
\While{there exists $y_{j,x,S} > 0$ such that $f(S) > (1+\delta)x$}
\State Select an arbitrary agent $i\in S $
\State update $y_{j,x,S\setminus \{i\}} \gets y_{j,x,S\setminus \{i\}} + y_{j,x,S}$, and $y_{j,x,S}\gets 0$
\EndWhile\label{state:end-improving}
\State \Return $y$
\end{algorithmic}
\end{algorithm}

Let $P^{\star} = D^{\star}$ be the optimal solutions to the primal \eqref{lp:lp1} and the dual \eqref{LP:lp2}, respectively. 
\begin{claim}\label{claim:solve-lp1} 
For every $\epsilon>0$, \cref{alg:solve-lp1} computes a feasible solution $y$ to \eqref{lp:lp1} in polynomial time that satisfies
\begin{enumerate}
   \item The number of strictly positive variables $y_{j,x,S}$ is polynomially bounded.\label{item:poly-many}

    \item $\sum\nolimits_{j \in M} \sum\nolimits_{x \in D_j} \sum\nolimits_{S \subseteq A_{j,x}}y_{j,x,S}  \left(\min\{f_j(S),x\} - \frac{\sqrt{x}}{2\sqrt{2}} \sum\nolimits_{i \in S} \sqrt{c_{ij}} \right)\geq \frac{1}{1+1/(1-\delta)} \max\{P^{\star}-\epsilon,0\}.$\label{item:opt-bound}
    
    \item For every $y_{j,x,S}>0$, $f_j(S)\leq (1+\delta) x$, $f_j(\{i\})\leq \delta x$ and $f_j(i\mid S_j\setminus \{i\})\geq \frac{\sqrt{xc_{ij}}}{2\sqrt{2}}$ $\forall i\in S.$\label{item:item-cond} 
\end{enumerate} 
\end{claim}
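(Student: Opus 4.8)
The plan is to run the classical ``ellipsoid method with an approximate separation oracle'' machinery on the dual \eqref{LP:lp2}, with the approximate oracle being exactly \cref{claim:appx-separation}; write $\kappa:=1+\tfrac{1}{1-\delta}$ for its approximation factor. The first thing I would record is a sandwich between three polytopes in the dual variables $(\alpha,\beta)$: the feasible region $\mathrm{feas}(\eqref{LP:lp2})$; the set $\widehat R$ of points on which the oracle \emph{certifies} (returns no approximately violated constraint); and the polytope $R$ cut out by the \emph{relaxed} constraints $\tfrac{1}{\kappa}\alpha_j+\sum_{i\in S}\beta_i\ge \min\{f_j(S),x\}-\sum_{i\in S}\tfrac{\sqrt{c_{ij}x}}{2\sqrt2}$ over all $j\in M$, $x\in D_j$, $S\subseteq A_{j,x}$. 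Certification implies every true constraint of \eqref{LP:lp2} holds, so $\widehat R\subseteq\mathrm{feas}(\eqref{LP:lp2})$; a point of $R$ violates no relaxed constraint so the oracle cannot output one, hence $R\subseteq\widehat R$; and since $\alpha_j\ge\tfrac1\kappa\alpha_j$ one also gets $R\subseteq\mathrm{feas}(\eqref{LP:lp2})$ directly. By LP duality, the optimum of ``$\min\sum_j\alpha_j+\sum_i\beta_i$ over $R$'' equals the optimum $P_R^\star$ of the \emph{$\kappa$-relaxed primal} — the variant of \eqref{lp:lp1} with the penalty term $\delta\kappa x$ deleted and the per-project constraint loosened to $\sum_{x,S}y_{j,x,S}\le\kappa$ — and plugging the \eqref{lp:lp1}-optimum into that LP shows $P_R^\star\ge P^\star$.

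Given this, the binary search in \cref{alg:solve-lp1} behaves as intended: running the ellipsoid method with the oracle of \cref{claim:appx-separation} together with the extra constraint $\sum\alpha_j+\sum\beta_i\le\gamma$, a ``feasible'' verdict exhibits a genuine point of $\mathrm{feas}(\eqref{LP:lp2})\cap\{\sum\le\gamma\}$ and hence $P^\star=D^\star\le\gamma$, while an ``infeasible'' verdict means $R\cap\{\sum\le\gamma\}=\emptyset$ and hence $P_R^\star>\gamma$. Thus at termination $\gamma_-<P_R^\star$, $P^\star\le\gamma_+$, and $\gamma_+-\gamma_-<\epsilon$, so in particular $\gamma_->P^\star-\epsilon$. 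The second ellipsoid run in \cref{state:run-ellipsoid} on \eqref{LP:lp2} with $\sum\alpha+\sum\beta<\gamma_-$ is again infeasible and is witnessed by only polynomially many relaxed constraints $\mathcal C$; the LP keeping only those constraints is still infeasible below $\gamma_-$, so by strong duality the restriction of the $\kappa$-relaxed primal to the (polynomially many) variables indexed by $\mathcal C$ has optimum $\ge\gamma_-$. Solving this polynomial-size LP and then scaling every variable down by $\kappa$ yields $y$ feasible for \eqref{lp:lp1} (the loosened per-project constraint tightens back to $\le 1$ and the agent constraints are only helped), with polynomially bounded support — giving item~\ref{item:poly-many} — and satisfying $\sum_{j,x,S}y_{j,x,S}\big(\min\{f_j(S),x\}-\tfrac{\sqrt x}{2\sqrt2}\sum_{i\in S}\sqrt{c_{ij}}\big)\ge\gamma_-/\kappa>\tfrac1\kappa(P^\star-\epsilon)$, a quantity that is also $\ge 0$; finally, the safeguard on \cref{state:neg} returns the all-zero solution only when the \eqref{lp:lp1}-objective of $y$ is negative, in which case one argues $P^\star\le\epsilon$ so that $\max\{P^\star-\epsilon,0\}=0$ and item~\ref{item:opt-bound} holds vacuously.

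It remains to establish item~\ref{item:item-cond} and to check that the two clean-up loops in \cref{alg:solve-lp1} (from \cref{state:start-improving} through \cref{state:end-improving}) preserve item~\ref{item:opt-bound}. Throughout, every $S$ with $y_{j,x,S}>0$ is a subset of $A_{j,x}$ (the separation oracle uses prices $p_i=\infty$ for $i$ with $f_j(\{i\})>\delta x$, and this property is closed under taking subsets), which already gives $f_j(\{i\})\le\delta x$ for all $i\in S$. The first loop moves the mass of any $y_{j,x,S}>0$ containing an agent $i$ with $f_j(i\mid S\setminus\{i\})<\tfrac{\sqrt{xc_{ij}}}{2\sqrt2}$ onto $y_{j,x,S\setminus\{i\}}$; since $\min\{f_j(S\setminus\{i\}),x\}-\min\{f_j(S),x\}\ge -f_j(i\mid S\setminus\{i\})>-\tfrac{\sqrt{xc_{ij}}}{2\sqrt2}$, deleting the now-dropped price term strictly increases the penalty-free objective, so item~\ref{item:opt-bound} is maintained. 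The second loop only removes agents from sets with $f_j(S)>(1+\delta)x$, where subadditivity and $f_j(\{i\})\le\delta x$ give $f_j(S\setminus\{i\})>x$, so the capped value $\min\{f_j(\cdot),x\}$ is unchanged while a nonnegative price term is removed; again item~\ref{item:opt-bound} cannot decrease. Each loop strictly decreases $\sum_{j,x,S:\,y_{j,x,S}>0}|S|$, so both terminate in polynomially many steps, and at termination no positive variable violates any of the three conditions of item~\ref{item:item-cond}.

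The main obstacle is the first paragraph's bookkeeping: correctly coupling \cref{claim:appx-separation} with the ellipsoid method, and in particular relating the value of the polynomial-size LP extracted from the second ellipsoid run back to $P^\star$ through the $\kappa$-factor slack (absorbed by the final scaling) and the $\epsilon$-slack of the binary search — this is precisely where the loosened per-project constraint, the $1/\kappa$ scaling, and the capped objective of \eqref{lp:lp1} are indispensable. A secondary point requiring care is that the second clean-up loop must not reintroduce a violation of the marginal condition enforced by the first; since the functions are only XOS (not submodular) this needs a separate argument, e.g.\ choosing the removed agent appropriately or re-running the first loop, after which the monovariant $\sum_{y_{j,x,S}>0}|S|$ still guarantees termination.
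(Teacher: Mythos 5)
Your proposal follows essentially the same strategy as the paper's proof: run the ellipsoid method on the dual \eqref{LP:lp2} with the approximate separation oracle of Claim~\ref{claim:appx-separation}, relate what the oracle certifies/refutes to the polytope cut out by the relaxed constraints with $\alpha_j$ scaled by $\kappa^{-1}$, identify that polytope's LP with the $\kappa$-relaxed primal (i.e.\ the reduced LP~\eqref{LP:dual-dual}), extract a poly-size primal from the witnessing hyperplanes of the second ellipsoid run, scale the resulting $y$ by $1/\kappa$ to recover feasibility for \eqref{lp:lp1}, and then run the two clean-up loops to secure item~\ref{item:item-cond}. Your sandwich $R\subseteq\widehat R\subseteq \mathrm{feas}(\eqref{LP:lp2})$ and the explicit inequality $P_R^\star\geq P^\star$ are just a more meticulous bookkeeping of what the paper does implicitly.

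Where you genuinely add something is the last paragraph. You correctly flag that the second clean-up loop (removing agents while $f_j(S)>(1+\delta)x$) can, for general XOS functions, \emph{re-violate} the marginal lower bound $f_j(i\mid S\setminus\{i\})\geq \sqrt{xc_{ij}}/(2\sqrt2)$ secured by the first loop, because marginals need not increase under element removal outside the submodular case. The paper's proof asserts that ``the final solution\ldots satisfies all required conditions'' without noticing this, and the algorithm as written runs the two loops sequentially rather than to a joint fixed point. Your fix — interleave the two conditions (or re-run the first loop after the second), and use the monovariant $\sum_{y_{j,x,S}>0}|S|$ for termination — is exactly what is needed and is sound: the first loop only shrinks sets, so it cannot create a $f_j(S)>(1+\delta)x$ violation, and each pass of either loop strictly decreases the monovariant. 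This is a real (if minor and easily repaired) gap in the paper's argument that your write-up catches; the rest of your reasoning matches the paper's.

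One small caveat: your claim that when \cref{state:neg} triggers ``one argues $P^\star\leq\epsilon$'' is not justified as stated — the sign of the \eqref{lp:lp1} objective at the scaled $y$ (which carries the extra $-\delta\kappa x$ penalty) does not directly bound $P^\star$. The paper also does not supply this step. Since item~\ref{item:opt-bound} is phrased with $\max\{P^\star-\epsilon,0\}$, the intended reading is that the penalty-free quantity on the left is nonnegative after scaling (as you note, $\gamma_-/\kappa\geq 0$), and the clean-up loops only increase it; that already gives item~\ref{item:opt-bound} without appealing to the \cref{state:neg} branch, so it is better to argue the bound directly from $\gamma_-/\kappa$ than to route through that branch.
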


\begin{proof}
Note that \eqref{lp:lp1} consists of a polynomial number of variables but an exponential number of constraints. Standard duality arguments imply that there exists an optimal solution to \eqref{lp:lp1} with at most $n + m$ nonzero variables. Let $\gamma^{\star}$ denote the minimal objective value for which the dual is feasible. After the binary search in \cref{alg:solve-lp1} ends (line~\ref{slg1:LP-end-binary}), we identify $\gamma_-$ and $\gamma_+$ such that the dual LP is infeasible when constrained to an objective value less than $\gamma_-$ and is feasible when constrained to an objective value less than $\gamma_+$. Furthermore, $\gamma_+ - \gamma_- < \epsilon$. Note that by the approximate separation oracle, any feasible solution corresponds to a feasible solution for \eqref{LP:lp2}, implying that $\gamma_+ \geq D^{\star}$.

The remainder of the algorithm (starting from line~\ref{state:run-ellipsoid}) reconstructs a feasible, approximately optimal solution to \eqref{lp:lp1} using $\gamma_-$. During the ellipsoid method's execution for \eqref{LP:lp2} with the additional constraint that the objective is less than $\gamma_-$, the algorithm identifies polynomially many \emph{approximately} violating constraints (or separating hyperplanes). These constraints correspond to tight constraints in the following program, which is similar to \eqref{LP:lp2} but expressed in the form of the approximate constraints.
\begin{equation*}
\begin{array}{ll@{}ll}
\text{min} & \sum\nolimits_{j \in M} \alpha_j + \sum\nolimits_{i \in N} \beta_i & \\
\text{s.t.} &\frac{1}{1+\frac{1}{1-\delta}} \alpha_j \geq  \min\{f_j(S),x\} -  \sum\nolimits_{i \in S} {(\frac{\sqrt{x}}{2\sqrt{2}}\sqrt{c_{ij}} + \beta_i )}&& \forall j \in M, x \in D_j, S \subseteq A_{j,x},\\
& \alpha_j \geq 0, && \forall j \in M,\\
& \beta_i \geq 0, && \forall i \in N.
\end{array}
\end{equation*}

\noindent
Since there are only polynomially many tight constraints, the dual of this program contains only polynomially many non-slack variables. This results in the following ``small'' problem with polynomially many variables and constraints, which can be solved in polynomial time.
\begin{equation*}\label{LP:dual-dual}\tag{LP3}
\begin{array}{ll@{}ll}
\text{max} &  \sum\nolimits_{j \in M} \sum\nolimits_{x \in D_j} \sum\nolimits_{S \subseteq A_{j,x}} y_{j,x,S}  \left(\min\{f_j(S),x\} - \frac{\sqrt{x}}{2\sqrt{2}} \sum\nolimits_{i \in S} \sqrt{c_{ij}} \right), \\
\text{s.t.} & \sum\nolimits_{x\in D_j}\sum\nolimits_{ S\subseteq A_{j,x}} {\frac{1}{1+1/(1-\delta)}} y_{j,x, S}\leq 1 \quad \forall j\in M,\nonumber\\
&\sum\nolimits_{j\in M}\sum\nolimits_{x\in D_j}\sum\nolimits_{  S \subseteq A_{j,x} \mid i\in S}y_{j,x, S}\leq 1  \quad \forall i\in N,\nonumber
\end{array}
\end{equation*}
By duality, since the dual is infeasible for $\gamma_-$, the reduced primal LP must have a feasible solution with an objective value of at least $\gamma_-$, which can be computed in polynomial time. To obtain a feasible solution for the original problem \eqref{lp:lp1}, we apply a scaling factor of $\frac{1}{1+1/(1-\delta)}$ to the solution obtained by solving the reduced primal LP. This ensures a feasible solution $y$ to \eqref{lp:lp1} such that
\begin{eqnarray*}
&\sum\nolimits_{j \in M} \sum\nolimits_{x \in D_j} \sum\nolimits_{S \subseteq A_{j,x}} y_{j,x,S}  \left(\min\{f_j(S), x\} - \frac{\sqrt{x}}{2\sqrt{2}} \sum\nolimits_{i \in S} \sqrt{c_{ij}} \right)\\ 
&\geq\frac{1}{1+\frac{1}{1-\delta}}\gamma_- \geq \frac{1}{1+\frac{1}{1-\delta}}(\gamma_+-\epsilon)\geq \frac{1}{1+\frac{1}{1-\delta}}(P^{\star}-\epsilon)
\end{eqnarray*}
As we now show, the final steps of the algorithm (starting for line~\ref{state:start-improving}) further improve the objective solution while preserving feasibility and at most the same number of variables. Therefore, the proof for \eqref{item:poly-many} and \eqref{item:opt-bound} is completed. To conclude, we verify that \eqref{item:item-cond} still holds after the final two loops terminate.

Since $y$ is feasible for \eqref{LP:dual-dual} and, by definition of $A_{j,x}$, we have $f_j(\{i\}) \leq \delta x$ for all $i \in S$ whenever $y_{j,x,S} > 0$. Moreover, the number of iterations in the while loops is bounded by $M \cdot N \cdot$ the number of strictly positive $y_{j,x,S}$, which is polynomially bounded.

Removing agents from $S$ when $f(S) > (1+\delta)x$ improves the objective: By subadditivity, and the fact that $f_j({i}) \leq \delta x$, removing an agent when $f(S) > (1+\delta)x$ ensures that $f(S) \geq x$. This adjustment reduces the payments $\sum\nolimits_{i\in S} \frac{\sqrt{x c_{ij}}}{2\sqrt{2}}$ without altering $\min{f(S),x}$. Consequently, we obtain $f(S) \leq (1+\delta)x$.

Finally, removing $i \in S$ such that 
$f_j(i \mid S \setminus \{i\}) < {\sqrt{x c_{ij}}}/{2\sqrt{2}}$ increases the objective, as the agent's contribution is smaller than the corresponding payment. Note that the final solution remains feasible and satisfies all required conditions, completing the proof.
\end{proof}

\begin{claim}\label{claim:opt-sol-to-opt}
The solution computed by Algorithm~\ref{alg:solve-lp1} for any $\epsilon < \delta \min_{j,i} f_j(\{i\})$, satisfies
$$
\frac{1}{5} \sum\nolimits_{j \in M} f_j(S^{\star \star}_j) \leq \sum\nolimits_{j \in M} \sum\nolimits_{x \in D_j} \sum\nolimits_{S \subseteq A_{j,x}} y_{j,x,S} f_j(S).
$$
\end{claim}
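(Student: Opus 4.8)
The plan is to bound the right-hand side from below by the optimal value $P^{\star}$ of \eqref{lp:lp1} up to a constant, and then lower bound $P^{\star}$ by exhibiting a feasible solution to \eqref{lp:lp1} built from $S^{\star\star}$. For the first part, observe that for the solution $y$ returned by \cref{alg:solve-lp1} we have $f_j(S)\ge\min\{f_j(S),x\}$ for every $j,x,S$, while the ``price'' term $\tfrac{\sqrt{x}}{2\sqrt2}\sum_{i\in S}\sqrt{c_{ij}}$ is nonnegative; hence item~2 of \cref{claim:solve-lp1} gives
\[
\sum\nolimits_{j\in M}\sum\nolimits_{x\in D_j}\sum\nolimits_{S\subseteq A_{j,x}} y_{j,x,S}\,f_j(S)\ \ge\ \tfrac{1}{1+1/(1-\delta)}\max\{P^{\star}-\epsilon,0\}.
\]
So it suffices to show $P^{\star}\ge \big(1+\tfrac{1}{1-\delta}\big)\cdot\tfrac{1}{5}\sum_{j}f_j(S^{\star\star}_j)+\epsilon$ whenever $\sum_j f_j(S^{\star\star}_j)>0$ (the other case being trivial).

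To lower bound $P^{\star}$ I would build a feasible $\hat y$ for \eqref{lp:lp1} as follows. For each $j\in I^{\star}$ with $S^{\star}_j\neq\emptyset$, let $i^{*}_j\in\arg\max_{i\in S^{\star}_j}f_j(\{i\})$; monotonicity and subadditivity give $f_j(\{i^{*}_j\})\le f_j(S^{\star}_j)\le |S^{\star}_j|\,f_j(\{i^{*}_j\})\le n\,f_j(\{i^{*}_j\})$, so the smallest power of two $x_j:=2^{k}f_j(\{i^{*}_j\})$ with $x_j\ge f_j(S^{\star}_j)$ has $k\le\lceil\log n\rceil$, hence $x_j\in D_j$ and $f_j(S^{\star}_j)\le x_j\le 2 f_j(S^{\star}_j)$. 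Since $j\in I^{\star}$ we have $S^{\star}_j\subseteq A^{\star}_j$, i.e.\ $f_j(\{i\})\le\delta f_j(S^{\star}_j)\le\delta x_j$ for all $i\in S^{\star}_j$, so $S^{\star}_j\subseteq A_{j,x_j}$. Set $\hat y_{j,x_j,S^{\star}_j}=1$ for these $j$, $\hat y_{j,0,\emptyset}=1$ for $j\in I^{\star}$ with $S^{\star}_j=\emptyset$, and $\hat y\equiv 0$ otherwise. This is feasible: every project receives at most one set, and since the sets $\{S^{\star}_j\}_{j\in I^{\star}}$ are pairwise disjoint, every agent lies in at most one chosen set.

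Next I would evaluate the objective of \eqref{lp:lp1} at $\hat y$. For $j\in I^{\star}$, since $x_j\ge f_j(S^{\star}_j)$ we have $\min\{f_j(S^{\star}_j),x_j\}=f_j(S^{\star}_j)$, and since $x_j\le 2 f_j(S^{\star}_j)$ the contribution of project $j$ is at least
\[
\Big(1-2\delta\big(1+\tfrac{1}{1-\delta}\big)\Big)f_j(S^{\star}_j)\;-\;\tfrac{\sqrt{2 f_j(S^{\star}_j)}}{2\sqrt2}\sum\nolimits_{i\in S^{\star}_j}\sqrt{c_{ij}}.
\]
The key auxiliary estimate is $\sum_{i\in S^{\star}_j}\sqrt{c_{ij}}\le\sqrt{f_j(S^{\star}_j)}$, which I would prove inline: in the optimal allocation $\rev_j(S^{\star}_j)\ge 0$ (else replacing $S^{\star}_j$ by $\emptyset$ strictly increases $\OPT$), so $\sum_{i\in S^{\star}_j} c_{ij}/f_j(i\mid S^{\star}_j\setminus\{i\})\le 1$; choosing an additive component $a$ of the XOS function $f_j$ with $a(S^{\star}_j)=f_j(S^{\star}_j)$, monotonicity of $f_j$ together with $a\le f_j$ gives $f_j(i\mid S^{\star}_j\setminus\{i\})\le a(\{i\})$, whence $\sum_{i\in S^{\star}_j} c_{ij}/a(\{i\})\le 1$, and Cauchy--Schwarz yields $\big(\sum_{i\in S^{\star}_j}\sqrt{c_{ij}}\big)^2\le\big(\sum_i c_{ij}/a(\{i\})\big)\big(\sum_i a(\{i\})\big)\le a(S^{\star}_j)=f_j(S^{\star}_j)$ (terms with $c_{ij}=0$ are dropped harmlessly). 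Substituting, the contribution of project $j$ is at least $\big(\tfrac12-2\delta(1+\tfrac{1}{1-\delta})\big)f_j(S^{\star}_j)$, hence $P^{\star}\ge\big(\tfrac12-2\delta(1+\tfrac{1}{1-\delta})\big)\sum_{j}f_j(S^{\star\star}_j)$.

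Finally I would combine this with the first display and substitute $\delta=1/129$. Since some project in $I^{\star}$ has value at least $\min_{j,i}f_j(\{i\})$, we get $\epsilon<\delta\min_{j,i}f_j(\{i\})\le\delta\sum_j f_j(S^{\star\star}_j)$, so
\[
\sum\nolimits_{j,x,S} y_{j,x,S}\,f_j(S)\ \ge\ \tfrac{1}{1+1/(1-\delta)}\Big(\tfrac12-2\delta\big(1+\tfrac{1}{1-\delta}\big)-\delta\Big)\sum\nolimits_{j}f_j(S^{\star\star}_j)\ >\ \tfrac15\sum\nolimits_{j}f_j(S^{\star\star}_j),
\]
the last step being the numerical check $\tfrac{128}{257}\cdot\big(\tfrac12-\tfrac{642}{16512}\big)\approx 0.2297>0.2$. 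The only genuinely non-routine ingredient is constructing the right estimate $x_j\in D_j$ and pairing it with the Cauchy--Schwarz cost bound; everything else is bookkeeping, and $\delta=1/129$ is chosen precisely so that these constants leave room for the factor $\tfrac15$.
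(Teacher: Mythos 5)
Your proof is correct and follows essentially the same route as the paper: lower-bound the LHS via item 2 of \cref{claim:solve-lp1}, construct a feasible LP solution from $S^{\star\star}$ with discretized estimates $x_j\in D_j$ satisfying $f_j(S^{\star\star}_j)\le x_j\le 2f_j(S^{\star\star}_j)$, bound $P^\star$ below using the cost inequality $\sum_{i\in S^{\star}_j}\sqrt{c_{ij}}\le\sqrt{f_j(S^{\star}_j)}$, absorb $\epsilon$ into a $\delta$-fraction of the reward, and compute. The only cosmetic difference is that you reprove that cost inequality inline via the XOS supporting additive function and Cauchy--Schwarz, whereas the paper simply invokes Lemma~3.3 of \citet{DuettingEFK23} (stated as \cref{lemma:lemma33-multi}).
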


The key idea is to show that $S^{\star \star}$ forms a feasible solution to \eqref{lp:lp1}. We then bound its objective in \eqref{lp:lp1} from above using the approximately optimal solution computed by \cref{alg:solve-lp1} (see \cref{claim:solve-lp1}), which corresponds to the right-hand side of the claim statement.  

Furthermore, by applying Lemma 3.3 from \cite{duetting2024multiagent} (see \cref{lemma:lemma33-multi})—which establishes a connection between the prices  
${\sqrt{x c_{ij}}}/{2\sqrt{2}} $ in \eqref{lp:lp1} and $ f_j(S^{\star\star}_j) $—we derive a lower bound on the objective of \eqref{lp:lp1} in terms of  
$ \sum\nolimits_{j\in M} f_j(S^{\star\star}_j) $, which corresponds to the left-hand side.

\begin{proof}[Proof of Claim~\ref{claim:opt-sol-to-opt}]
    First,
\begin{eqnarray*}
    \sum_{j \in M} \sum_{x \in D_j} \sum_{S \subseteq A_{j,x}}y_{j,x,S}  {f_j(S)}\geq\sum_{j \in M} \sum_{x \in D_j} \sum_{S \subseteq A_{j,x}}y_{j,x,S}  \left(\min\{f_j(S),x\} -  \sum_{i \in S}\frac{\sqrt{c_{ij} x}}{2\sqrt{2}}  \right),
\end{eqnarray*}
since $c_{ij}\geq 0$ and $x\geq 0$ for all $i\in N, j\in M$ and $x\in D_j$.
By \cref{claim:solve-lp1}, we have 
$$\sum_{j \in M} \sum_{x \in D_j} \sum_{S \subseteq A_{j,x}}y_{j,x,S}  {f_j(S)}\geq \frac{1}{1+1/(1-\delta)}(P^{\star}-\epsilon).$$
Next, we show that $S^{\star \star}$ induces a feasible solution to \eqref{lp:lp1}. For every project $j\in I^{\star}$, let $i_j=\arg\max_{i\in S^{\star \star}_j} f_j(\{i\})$. By monotonicity and subadditivity, $f_j(\{i_j\})\le f_j(S^{\star \star}_j) \le n f_j(\{i_j\})$. Therefore, for every project $j\in I^{\star}$ there exists $x'_j\in D_j$ such that 
\begin{eqnarray}\label{eq:x-up-low}
\frac{x'_j}{2}\leq f_j(S^{\star \star}_j)\leq x'_j.
\end{eqnarray}
Furthermore, by definition of $S^{\star \star}_j$, we have that $f_j(\{i\})\leq \delta f_j(S^{\star \star}_j) \leq \delta x'_j$ $\forall i\in S^{\star \star}_j$, that is, $S^{\star \star}_j \subseteq A_{j,x'_j}.$ Taking $y_{j,x,S}=1$ if $x=x'_j,$ $S=S^{\star \star}_j$, and $y_{j,x,S}=0$ otherwise gives a feasible solution to $\eqref{lp:lp1}$ with objective bounded by $P^{\star}$ as follows.
\begin{align*}
P^{\star}&\geq \sum_{j \in M}\bigg(\min\{f_j(S^{\star\star}_j),x'_j\}-\delta(1+\frac{1}{1-\delta})x'_j-\sum_{i\in S^{\star \star}_j}\frac{\sqrt{c_{ij}x'_j}}{2 \sqrt{2}}\bigg)\\
&\geq \sum_{j \in M}\bigg(f_j(S^{\star\star}_j)-\delta(1+\frac{1}{1-\delta})2f_j(S^{\star\star}_j)-\frac{1}{2}\sum_{i\in S^{\star \star}_j}\sqrt{c_{ij}f_j(S^{\star\star}_j)}\bigg)\\
&\geq\sum_{j \in M}\bigg(\frac{1}{2}f_j(S^{\star\star}_j)-\delta(1+\frac{1}{1-\delta})2f_j(S^{\star\star}_j)\bigg)=(\frac{1}{2}-2\delta (\frac{2-\delta}{1-\delta}))\sum_{j\in M}f_j(S^{\star\star}_j).
\end{align*}
where the second inequality follows from \eqref{eq:x-up-low} and the third inequality holds by Lemma 3.3 in \cite{DuettingEFK23} (see Lemma~\ref{lemma:lemma33-multi}). 
Note that if $\sum_{j\in M} f_j(S^{\star \star}_j) = 0$, then, since we always return a nonnegative solution (line~\ref{state:neg}), the condition of the claim holds trivially. If $\sum_{j\in M} f_j(S^{\star \star}_j) > 0$, then by monotonicity of $f_j$ it must also hold that $\sum_{j\in M}f_j(S^{\star \star}_j)>\min_{j,i}f_j(\{i\})$. Therefore, Since $\epsilon<\delta \min_{j,i}f_j(\{i\}),$ 
$P^{\star}-\epsilon \geq (\frac{1}{2}-2\delta (\frac{2-\delta}{1-\delta}))\sum_{j \in M}f_j(S^{\star\star}_j)-\delta \min_{i,j}f_j(\{i\}) \geq (\frac{1}{2}-2\delta (\frac{2-\delta}{1-\delta})-\delta)\sum_{j \in M}f_j(S^{\star\star}_j)$.
Then, by Claim~\ref{claim:solve-lp1} we have that 
\begin{align*}
    \sum_{j \in M} \sum_{x \in D_j} \sum_{S \subseteq A_{j,x}}y_{j,x,S} f_j(S) 
    \geq \sum_{j \in M} \sum_{x \in D_j} \sum_{S \subseteq A_{j,x}}y_{j,x,S}  \left(\min\{f_j(S),x\} - \frac{\sqrt{x}}{2\sqrt{2}} \sum_{i \in S} \sqrt{c_{ij}} \right)\\
    \geq\frac{1}{1+1/(1-\delta)} (P^{\star}-\epsilon)\geq \frac{1}{1+1/(1-\delta)}(\frac{1}{2}-2\delta (\frac{2-\delta}{1-\delta})-\delta) \sum_{j \in M}f_j(S^{\star \star}).
\end{align*}
Since $\delta =\frac{1}{129}$, we have than $\frac{1}{1+1/(1-\delta)}\geq 0.49$, and $\frac{1}{2}-2\delta (\frac{2-\delta}{1-\delta})-\delta \geq 0.46$. This implies that $\sum_{j \in M} \sum_{x \in D_j} \sum_{S \subseteq A_{j,x}}y_{j,x,S} f_j(S)\geq 0.22\cdot\sum_{j \in M}f_j(S^{\star \star})$
This completes the proof.
\end{proof}

\subsection{Rounding the Fractional Solution}\label{sub:rounding}

In this section, we present a simple and intuitive algorithm for rounding a fractional solution to \eqref{lp:lp1} into an integral solution while preserving key properties that will be crucial in later analysis. We begin by noting that any fractional solution to \eqref{lp:lp1} can be interpreted as a distribution over sets. Based on this observation, we introduce a rounding technique that converts these random sets into deterministic sets while ensuring that at least half of the reward from the fractional solution is preserved. Furthermore, each set in the integral solution remains a subset of some original set in the distribution.\footnote{There are also randomized approaches for rounding fractional solutions (e.g., \cite{welfare-maximizing-uri-feige}). While we are aware of these methods, for completeness, we present our deterministic rounding method for XOS functions.}


\begin{definition}\label{obs:frac-to-dist}
Given a feasible solution $y$ to \eqref{lp:lp1}. Define a set of distributions $F_1,\ldots,F_m$ as follows.
\begin{eqnarray*}
{\Pr}_{\tilde{S}_j\sim F_j}[\tilde{S}_j=S]=\begin{cases}
\sum\nolimits_{x\in D_j} y_{j,x,S} & S\neq \emptyset\\
1-\sum\nolimits_{S'\subseteq N \mid S'\neq\emptyset}\sum\nolimits_{x\in D_j} y_{j,x,S'} & S=\emptyset
\end{cases}
\end{eqnarray*}
\end{definition}
This is a valid distribution by the first set of constraints in \eqref{lp:lp1}. Note that the defined distribution satisfied the following properties. 
\begin{enumerate}
    \item For every agent $i \in N$,  the total probability of being selected across all projects is bounded by $1$, i.e., $\sum\nolimits_{j\in [m]} \Pr_{\tilde{S}_j \sim F_j}[i \in \tilde{S}_j] \leq 1.$ This is by the second set of constraints in \eqref{lp:lp1}
    \item For every project $j \in M$, the expected function value under $F_j$ is equal to the weighted sum over the fractional solution $\sum\nolimits_{x\in D_j}\sum\nolimits_{S\subseteq A_{j,x}} y_{j,x,S} f_j(S) = \mathbb{E}_{\tilde{S} \sim F_j}[f_j(\tilde{S})].$ This is by construction and since $f(\emptyset)=0$
    \item Every set $S\neq \emptyset$ in the support of $F_j$ corresponds to some $x \in D_j$ with $y_{j,x,S} > 0$. This is also by construction.
\end{enumerate}

We now present \cref{alg:rounding}, which receives XOS functions $f_1,\ldots,f_m$ over $N$, and distributions $F_1,\ldots,F_m$ over $N$, and returns a (partial) allocation.
Each distribution $F_j$ is supported by $k_j$ sets $\tilde{S}_{j,1}, \ldots,\tilde{S}_{j,k_j}$, where we denote $ q_{j,t} = \Pr_{S\sim F_j}[S=\tilde{S}_{j,t}]$ for every $\tilde{S}_{j,t}$. 
Our algorithm uses the following notation:
We define the marginal distribution $F_j^{-S}$ to be such that with probability $q_{j,t}$ it is $\tilde{S}_{j,t} \setminus S$. This represents a modified distribution in which we remove  $S$ from each set in the support. 
We also use the following definition given subsets $M'\subseteq M$, and $N'\subseteq N$, let 
\begin{equation}
    \VAL(M',N')  = \sum\nolimits_{j\in M'} \sum\nolimits_{t=1}^{k_j} q_{j,t} \cdot f_{j}(\tilde{S}_{j,t} \cap N') . \label{eq:val}
\end{equation} 
Note that $\VAL(\cdot,\cdot)$ can be computed in polynomial time in the sum of the sizes of the supports of $F_1,\ldots,F_m$.
The algorithm's key idea is to find $j\in M$ which can be allocated one set in its corresponding support such that we can continue recursively. The main challenge is to prove that such a $j$, together with a set in its support always exist. The process then continues by solving the smaller problem without the project and the allocated agents.
This is formalized in the following. 

\begin{algorithm}[t]
\caption{Deterministic Rounding for XOS Functions\\ \textbf{Input:} $N,M$, XOS functions $f_1,\ldots, f_m$ over $N$, distributions $F_1,\ldots,F_m$ over subsets of $N$\\
\textbf{Output:} A partial allocation $T_1\ldots,T_m$}\label{alg:rounding}
\begin{algorithmic}[1]
\If{$M=\emptyset$ \label{state:existsj}}
\State \Return the empty allocation
\EndIf
\State $\tilde{S}_{j,1}, \ldots,\tilde{S}_{j,k_j} \gets \text{Supp}(F_j)$ for every $j\in M$\label{state:tilde-s}
\State $ q_{j,t} \gets \Pr_{S\sim F_j}[S=\tilde{S}_{j,t}]$ for every $\tilde{S}_{j,t}$\label{state:q-jt}
\For{$j \in M$} 
\For{$t = 1\ldots k_j$ }
\State Let $N'=N\setminus \tilde{S}_{j,t}$, $M'=M\setminus \{j\}$ \label{state:remove-agent-project}
\If{$f_j(\tilde{S}_{j,t})  +   \VAL(M',N')/2 \geq  \VAL(M,N)/2 $\label{state:if3}}  
\State Let $T_j = \tilde{S}_{j,t}$
\State Let $F_{j'}'=F_{j'}^{-\tilde{S}_{j,t}}$ for $j' \in M' $
\State Apply Alg.~\ref{alg:rounding} recursively on $N'$, $M'$,$\{f_{j'}\}_{j'\in M'}$, $\{F_{j'}'\}_{j'\in M'}$ to get allocation $\{T_{j'}\}_{j'\in M'}$
\State \Return $T_1,\ldots,T_m$
\EndIf
\EndFor
\EndFor
\end{algorithmic}
\end{algorithm}


\begin{lemma}[Deterministic Rounding of XOS Functions]\label{lemma:rounding}
Algorithm~\ref{alg:rounding} finds a partial partition $T_1,\ldots,T_m$ in polynomial time (in $n,m$ and in the size of the union of the supports of $F_1,\ldots,F_m$) such that for all $j \in M$, there exists $S_j$ in the support of $F_j$ such that $T_j \subseteq S_j$.
Moreover, it holds that 
$$\sum\nolimits_{j\in M} f_j(T_j) 
\geq \VAL(M,N)/2.$$
\end{lemma}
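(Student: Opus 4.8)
I would prove the lemma by induction on $|M|$, concentrating all the content into one existence claim. The recursive skeleton is routine: the base case $M=\emptyset$ returns the empty allocation, which vacuously gives $\sum_{j}f_j(T_j)=0=\VAL(\emptyset,N)/2$; in the inductive step, once line~\ref{state:if3} selects a pair $(j,t)$ we set $T_j=\tilde S_{j,t}$, delete these agents everywhere (ground set $N'=N\setminus\tilde S_{j,t}$, distributions $F_{j'}'=F_{j'}^{-\tilde S_{j,t}}$), and recurse on $M'=M\setminus\{j\}$. Three bookkeeping facts make the induction legitimate. (a)~\emph{Feasibility:} every later $T_{j'}$ lies in $N'$, hence is disjoint from $T_j$, so the output is a partial partition; and since support sets only shrink along the recursion, unrolling it gives for each $j$ a set $S_j\in\mathrm{Supp}(F_j)$ with $T_j\subseteq S_j$ (every project is picked exactly once, one per recursion level). (b)~\emph{Invariants persist:} for $i\in N'$ we have $\Pr_{\tilde S\sim F_{j'}'}[i\in\tilde S]=\Pr_{\tilde S\sim F_{j'}}[i\in\tilde S]$, so $\sum_{j'\in M'}\Pr_{\tilde S\sim F_{j'}'}[i\in\tilde S]\le 1$ still holds; and the restriction of an XOS function to a smaller ground set is again XOS (restrict each additive clause). (c)~\emph{Accounting matches:} the quantity $\VAL(M',N')$ computed inside the recursive call equals $\sum_{j'\in M'}\sum_t q_{j',t}\,f_{j'}(\tilde S_{j',t}\setminus\tilde S_{j,t})=\VAL(M\setminus\{j\},N\setminus\tilde S_{j,t})$ in the notation of the current call. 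Granting the existence claim below, the value guarantee is one line: the inductive hypothesis gives $\sum_{j'\in M'}f_{j'}(T_{j'})\ge\VAL(M',N')/2$, and line~\ref{state:if3} guarantees $f_j(\tilde S_{j,t})+\VAL(M',N')/2\ge\VAL(M,N)/2$, so adding $f_j(T_j)=f_j(\tilde S_{j,t})$ yields $\sum_{j\in M}f_j(T_j)\ge\VAL(M,N)/2$. The running time is polynomial because each recursive call removes a project (at most $m$ levels), and each level runs the double loop over $j\in M$ and $t\le k_j$, every iteration costing one evaluation of $\VAL$ (polynomial in $n$, $m$, and the total support size) plus a value query.

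\textbf{The main obstacle: line~\ref{state:if3} is always triggered.} The crux is to show that, whenever $M\neq\emptyset$, there exist $j\in M$ and $t\le k_j$ with $2f_j(\tilde S_{j,t})\ge \VAL(M,N)-\VAL(M\setminus\{j\},N\setminus\tilde S_{j,t})$; everything else above is bookkeeping. I would argue by contradiction via averaging. Write $V_j:=\E_{\tilde S\sim F_j}[f_j(\tilde S)]=\sum_t q_{j,t}f_j(\tilde S_{j,t})$, so $\VAL(M,N)=\sum_{j}V_j$, and note the telescoping identity (valid for any $A\subseteq N$)
\[
\VAL(M,N)-\VAL\bigl(M\setminus\{j\},N\setminus A\bigr)=V_j+\sum_{j'\ne j}\sum_{t}q_{j',t}\bigl(f_{j'}(\tilde S_{j',t})-f_{j'}(\tilde S_{j',t}\setminus A)\bigr).
\]
If line~\ref{state:if3} never triggered, then for every $j$ and every $t$ the inequality $2f_j(\tilde S_{j,t})\ge\cdots$ fails; taking the $q_{j,\cdot}$-weighted average over $t$ (the weights sum to $1$, so strictness is preserved) and substituting the identity with $A=\tilde S_{j,t}$ gives $2V_j<V_j+\sum_{j'\ne j}\sum_t q_{j',t}\,\E_{A\sim F_j}\!\bigl[f_{j'}(\tilde S_{j',t})-f_{j'}(\tilde S_{j',t}\setminus A)\bigr]$, i.e.
\[
V_j\ <\ \sum_{j'\ne j}\sum_{t}q_{j',t}\,\E_{A\sim F_j}\!\bigl[f_{j'}(\tilde S_{j',t})-f_{j'}(\tilde S_{j',t}\setminus A)\bigr].
\]
Now I invoke XOS: for a fixed $j'$ and support set $B=\tilde S_{j',t}$, take an additive clause $a$ of $f_{j'}$ with $a(B)=f_{j'}(B)$, $a\le f_{j'}$ pointwise, and $a\ge 0$ (the latter since $f_{j'}$ is monotone); then $f_{j'}(B)-f_{j'}(B\setminus A)\le a(B)-a(B\setminus A)=\sum_{i\in B\cap A}a_i$, so
\[
\sum_{j\ne j'}\E_{A\sim F_j}\!\bigl[f_{j'}(B)-f_{j'}(B\setminus A)\bigr]\ \le\ \sum_{i\in B}a_i\sum_{j\ne j'}\Pr_{\tilde S\sim F_j}[i\in\tilde S]\ \le\ \sum_{i\in B}a_i\ =\ f_{j'}(B),
\]
where the last inequality uses the load constraint $\sum_{j}\Pr_{\tilde S\sim F_j}[i\in\tilde S]\le 1$. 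Summing the bound $V_j<\cdots$ over $j\in M$, swapping the order of summation, and applying this inequality against the weights $q_{j',t}$ for all $(j',t)$ yields $\sum_{j}V_j<\sum_{j'}\sum_t q_{j',t}f_{j'}(\tilde S_{j',t})=\sum_{j'}V_{j'}$, a contradiction. Hence some $(j,t)$ passes the test, so the algorithm always either recurses or bottoms out at $M=\emptyset$, and the induction closes.
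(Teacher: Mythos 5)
Your proof is correct and uses essentially the same argument as the paper: both derive a contradiction by summing the assumed failures of line~\ref{state:if3} against the weights $q_{j,t}$, bounding via supporting additive clauses of the XOS functions, and invoking the load constraint $\sum_{j}\Pr_{\tilde S\sim F_j}[i\in\tilde S]\le 1$. The paper packages the same bookkeeping through explicit quantities $p_i$ and $\ell_j$ rather than expectations over $A\sim F_j$, and treats the induction as immediate rather than spelling it out, but these are presentational differences only.
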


\begin{proof}
    Let $a_{j,t}:N \rightarrow \R_{\geq 0 }$ be the supporting\footnote{Such a function exists by the XOS definition. Algorithm~\ref{alg:rounding} does not need to find this supporting function, it is only used for the correctness proof.} additive function for the XOS function $ f_j$ with respect to set $\tilde{S}_{j,t}$, i.e., $ \sum\nolimits_{i\in \tilde{S}_{j,t}} a_{j,t}(i) = f_j(\tilde{S}_{j,t}) $ and for every $S\subseteq N$, $f_{j}(S) \geq \sum\nolimits_{i\in S} a_{j,t}(i)$.
    We have that 
    \begin{equation}    
\VAL(M,N) = \sum\nolimits_{j\in M}\sum\nolimits_{t=1}^{k_j} q_{j,t}  \cdot f_j(\tilde{S}_{j,t})  = \sum\nolimits_{j=1}^m \sum\nolimits_{t=1}^{k_j} q_{j,t} \sum\nolimits_{i \in \tilde{S}_{j,t}} a_{j,t}(i).  \label{eq:contra-opt}  
     \end{equation}
We define a price $p_i$ for each $i \in N$ and $p_j$ for each $j \in M$ as follow: 
\begin{eqnarray}
p_i = \sum\nolimits_{j=1}^m \sum\nolimits_{t=1}^{k_j} q_{j,t} \cdot \indicator{i\in \tilde{S}_{j,t}} \cdot a_{j,t}(i) & \forall i\in N,\label{eq:pi}\\
\ell_j = \sum\nolimits_{t=1}^{k_j} q_{j,t} \sum\nolimits_{i \in \tilde{S}_{j,t}} a_{j,t}(i) & \forall j\in M.\label{eq:ellj}
\end{eqnarray}
One can think of $p_i$ (resp. $\ell_j$) as the terms corresponding to $i$ (resp. $j$) in the RHS of Equation~\eqref{eq:contra-opt}.  
Note that, \begin{equation}
\sum\nolimits_{i\in N} p_i = \sum\nolimits_{j\in M} \ell_j = \VAL(M,N) \label{eq:contra-half}
\end{equation}

To complete the proof we show that the algorithm is well defined (which means that always for some iteration defined by $j\in M$ and $t\in\{1,\ldots,k_j\}$, the inequality of Line~\ref{state:if3} holds).
The guarantees of the lemma are immediate from the condition in the inequality of Line~\ref{state:if3}, and since by definition, $T_j$ it is a subset of a set in the support.

Assume towards contradiction that for every $j$, and every $t$ it holds that 
\begin{equation}
f_j(\tilde{S}_{j,t}) + \VAL(M\setminus\{j\},N \setminus \tilde{S}_{j,t})/2  < \VAL(M,N)/2 . \label{eq:contra} \nonumber
\end{equation}
By rearrangement we get that 
\begin{eqnarray}
     &&2\cdot f_j(\tilde{S}_{j,t})  < \VAL(M,N) - \VAL(M\setminus\{j\},N \setminus \tilde{S}_{j,t})\nonumber \\ & = & \sum\nolimits_{j'\in M}\sum\nolimits_{t'=1}^{k_{j'}} q_{j',t'} \cdot f_{j'}(\tilde{S}_{j',t'}) - \sum\nolimits_{j' \in M \setminus \{j\}}\sum\nolimits_{t'=1}^{k_{j'}} q_{j',t'} \cdot f_{j'}(\tilde{S}_{j',t'} \setminus \tilde{S}_{j,t} ) \nonumber\\ & \leq &  \sum\nolimits_{j'\in M}\sum\nolimits_{t'=1}^{k_{j'}} q_{j',t'} \sum\nolimits_{i\in \tilde{S}_{j',t'}} a_{j',t'}(i) - \sum\nolimits_{j' \in M \setminus \{j\}}\sum\nolimits_{t'=1}^{k_{j'}} q_{j',t'} \sum\nolimits_{i\in \tilde{S}_{j',t'} \setminus \tilde{S}_{j,t}  }a_{j',t'}(i)  
     \nonumber \\ & =  & \ell_j + \sum\nolimits_{j'\in M\setminus \{j\}} \sum\nolimits_{t'=1}^{k_{j'}} q_{j',t'} \sum\nolimits_{i\in \tilde{S}_{j',t'} \cap \tilde{S}_{j,t}} a_{j',t'}(i) \nonumber  \\ &  \leq &   \ell_j + \sum\nolimits_{i\in  \tilde{S}_{j,t}} \sum\nolimits_{j'\in M} \sum\nolimits_{t'=1}^{k_{j'}} q_{j',t'} \cdot   \indicator{i\in \tilde{S}_{j',t'}} \cdot a_{j',t'}(i) 
     =     \ell_j + \sum\nolimits_{i \in \tilde{S}_{j,t}} p_i , \label{eq:contra3} 
     \end{eqnarray}
     where the second inequality is by the XOS property, and the last inequality is since we added the term corresponding to $j'=j$.
We sum Inequality~\eqref{eq:contra3} over all $j$ and $t$ with coefficient $q_{j,t}$, and overall, we get that 
\begin{eqnarray}  
2\sum_{j\in M} \sum_{t=1}^{k_j} q_{j,t  }\cdot f_j(\tilde{S}_{j,t})   <  \sum_{j\in M} \sum_{t=1}^{k_j} q_{j,t  }\cdot \left(\ell_j + \sum_{i\in \tilde{S}_{j,t}} p_i \right) \leq  \sum_{j\in M} \ell_j + \sum_{j\in M} \sum_{t=1}^{k_j} q_{j,t  }\cdot \left( \sum_{i\in \tilde{S}_{j,t}} p_i \right)  \nonumber 
\\  =   \sum_{j\in M} \ell_j  + \sum_{i\in N} p_i \cdot \sum_{j\in M} \sum_{t=1}^{k_j} q_{j,t  }\cdot  \indicator{i\in \tilde{S}_{j,t}}  \nonumber  \leq   \sum_{j\in M} \ell_j  + \sum_{i\in N} p_i   \stackrel{\eqref{eq:contra-half}}{=}  2 \cdot \VAL(M,N), \nonumber 
\label{eq:contra2}
\end{eqnarray}
where the first inequality follows from summing the terms as described above, the second inequality holds because $\sum_{t=1}^{k_j} q_{j,t}=  1$, the first equality results from rearranging the terms, and the final inequality follows from Equation~\eqref{eq:contra-half}. This is a contradiction since $2\sum_{j\in M} \sum_{t=1}^{k_j} q_{j,t  }\cdot f_j(\tilde{S}_{j,t})  =2\cdot \VAL(M,N)$. Thus, the algorithm is well-defined, which concludes the proof.

\end{proof}

\subsection{Scaling the Rounded Solution}\label{sec:scaling}

In the remainder of this section, we apply a generalized scaling lemma (see Lemma~\ref{lemma:scalinglemma-new}) to the rounded solution, yielding an allocation that guarantees a constant-factor approximation to $ \OPT^{+} $.\footnote{We detail how the generalized scaling lemma differs from the one introduced by \citet{DuettingEFK23} in \cref{appx:scaling-lemma}.
} Specifically, after applying the rounding procedure in \cref{alg:rounding} to a fractional solution of \eqref{lp:lp1} that is computed by Algorithm~\ref{alg:solve-lp1} with $ \epsilon < \delta \min_{i,j} f_j(\{i\}) $, we obtain a (partial) allocation $ T_1, \dots, T_m $. By \cref{lemma:rounding} and \cref{claim:opt-sol-to-opt}, this allocation satisfies.
$$
\sum\nolimits_{j\in M} f_j(T_j) \geq \frac{1}{10}\sum\nolimits_{j\in M} f_j(S^{\star\star}_j).
$$
Furthermore, by \cref{claim:solve-lp1}, for all $ j \in M $, there exist sets $ S_j $ and values $ x_j $ such that $ T_j \subseteq S_j $, $ f_j(S_j) \leq (1+\delta) x_j $, $ f_j(\{i\}) \leq \delta x_j $, and $ f_j(i \mid S_j \setminus \{i\}) \geq {\sqrt{x_j c_{ij}}}/{2\sqrt{2}} $ for all $ i \in S_j $. We show that under these conditions an approximately optimal allocation can be computed in polynomial time. 

\begin{claim}\label{claim:apply-scaling}
Given an allocation of agents to projects $T_1, \dots, T_m$ and a constant $\kappa>0$, satisfying $$\sum\nolimits_{j\in M}f_j(T_j)\geq \kappa \sum\nolimits_{j\in M}f_j(S^{\star\star}_j),$$ and assuming that for every project $j\in M$ there exists $S_j$ and $x_j$ such that $T_j\subseteq S_j,$ $f_j(S_j)\leq (1+\delta) x_j$, and $f_j(i\mid S_j\setminus \{i\})\geq {\sqrt{x_jc_{ij}}}/{2\sqrt{2}}$ $\forall i\in S_j$.
An allocation $U = (U_1, \dots, U_m)$ satisfying $$\rev(U) \geq \frac{\kappa}{512} \rev(S^{\star \star}),$$
can be computed in polynomial time.
\end{claim}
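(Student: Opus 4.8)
The plan is to invoke the generalized scaling lemma (\cref{lemma:scalinglemma-new}) on each project in isolation, and then convert the resulting reward guarantee into a revenue guarantee through Lemma~3.4 of \cite{duetting2024multiagent} (\cref{lemma:lemma34-multi}). Fix a project $j\in M$ and feed the triple $(T_j,S_j,x_j)$ to the scaling lemma; by hypothesis $T_j\subseteq S_j$, $f_j(S_j)\le(1+\delta)x_j$, and $f_j(i\mid S_j\setminus\{i\})\ge \sqrt{x_j c_{ij}}/(2\sqrt 2)$ for every $i\in S_j$, which are exactly its preconditions. Instantiating the lemma so as to lose a factor of $256$ in value, it returns in polynomial time a set $U_j\subseteq T_j$ with (i) $f_j(U_j)\ge \tfrac{1}{256}f_j(T_j)$, and (ii) $f_j(i\mid U_j\setminus\{i\})\ge \sqrt{2\,f_j(U_j)\,c_{ij}}$ for every $i\in U_j$. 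Since $U_j\subseteq T_j$ for all $j$ and $(T_1,\dots,T_m)$ is a genuine allocation (a partial partition of $N$), the tuple $U=(U_1,\dots,U_m)$ is itself a feasible allocation; running this $m$ times and reading the contracts off \eqref{eq:payments} stays polynomial.

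Next I would pass property~(ii) to \cref{lemma:lemma34-multi}, whose marginal hypothesis it matches exactly, yielding $\rev_j(U_j)\ge \tfrac12 f_j(U_j)$ for each $j$. Summing over projects, then using (i), then the hypothesis $\sum_j f_j(T_j)\ge \kappa\sum_j f_j(S^{\star\star}_j)$, and finally that the revenue of any allocation never exceeds its reward (the payment fractions are non-negative):
\begin{align*}
\rev(U)=\sum_{j\in M}\rev_j(U_j)
&\ \ge\ \tfrac12\sum_{j\in M}f_j(U_j)
\ \ge\ \tfrac{1}{512}\sum_{j\in M}f_j(T_j)\\
&\ \ge\ \tfrac{\kappa}{512}\sum_{j\in M}f_j(S^{\star\star}_j)
\ \ge\ \tfrac{\kappa}{512}\,\rev(S^{\star\star}),
\end{align*}
which is the claimed bound (recall $\rev(S^{\star\star})=\OPT_+$).

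The real substance sits inside \cref{lemma:scalinglemma-new}, and the delicate point when applying it here is aligning the two ``price scales''. The input only guarantees marginals above $\sqrt{x_j c_{ij}}/(2\sqrt 2)=\sqrt{x_j c_{ij}/8}$ on the possibly larger set $S_j$, whereas \cref{lemma:lemma34-multi} needs marginals above $\sqrt{2\,f_j(U_j)\,c_{ij}}$ on the final set $U_j\subseteq T_j$; since $f_j(S_j)\le(1+\delta)x_j$, this forces $f_j(U_j)$ down to a $\Theta(1)$ fraction of $x_j$ (roughly $x_j/16$), and it is precisely this controlled value loss --- compounded with the overhead of the greedy agent-removal process underlying the scaling lemma and the $1+\delta$ slack --- that accumulates into the factor $256$ in (i), hence the $512$ in the statement. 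A secondary subtlety, also handled inside the scaling lemma rather than here, is that it must remove agents strictly from within $T_j$ (so the $U_j$'s stay disjoint and $U$ remains feasible) even though marginal control is only available on the superset $S_j$. Once \cref{lemma:scalinglemma-new} is in hand, everything else is the short chain of inequalities above; the only care needed is the constant bookkeeping.
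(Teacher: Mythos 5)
Your proposal matches the paper's proof: apply the generalized scaling lemma with the target value set so that $f_j(U_j)\ge \tfrac{1}{256}f_j(T_j)$, derive the marginal lower bound $f_j(i\mid U_j\setminus\{i\})\ge\sqrt{2\,f_j(U_j)\,c_{ij}}$ from the hypotheses on $(S_j,x_j)$ plus the upper bound on $f_j(U_j)$, invoke \cref{lemma:lemma34-multi} to convert this into $\rev_j(U_j)\ge\tfrac12 f_j(U_j)$, and sum. The only cosmetic remark is that property~(ii) is not literally ``returned'' by the scaling lemma but is a consequence of its marginal guarantee together with the upper bound $f_j(U_j)\le \Psi+\max_i f_j(\{i\})\le(\tfrac{1}{128}(1+\delta)+\delta)x_j$; you do acknowledge and correctly explain this derivation in your closing paragraph, so the argument is sound.
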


\begin{proof}
    We apply the generalized scaling lemma to each project individually (see Lemma~\ref{lemma:scalinglemma-new}). This lemma considers a set $T \subseteq S$ and shows that its value, $f(T)$, can be reduced to any desired level $\Psi$ by removing certain elements while ensuring that the remaining elements retain sufficiently high marginal values with respect to their original values in the superset $S$ of $T$. 
Setting $\Psi = \frac{1}{128} f_j(T_j)$ in \cref{lemma:scalinglemma-new} for each project $j\in M$, we obtain a subset $U_j \subseteq T_j$ satisfying
\begin{eqnarray}\label{eq:upp-bound-scaling}
\frac{1}{256}f_j(T_j)\leq f_j(U_j)\leq \frac{1}{128}f_j(T_j)+\max_{i\in T_j}f_j(\{i\}),
\end{eqnarray}
and 
\begin{eqnarray}\label{eq:bound-marginal}
f(i\mid U_j\setminus \{i\})\geq \frac{1}{2}f(i\mid S_j\setminus \{i\}) \quad \text{ for all } i \in U_j.    
\end{eqnarray}
Using the claim's assumptions on the existence of $S_j$ and $x_j$, we further bound Inequality~\eqref{eq:upp-bound-scaling} as \begin{eqnarray}\label{eq:u-bound}
f_j(U_j)\leq \frac{1}{128}f_j(T_j)+\max_{i \in T_j}f_j(\{i\})\leq \frac{1}{128}(1+\delta)x_j+\delta x_j,
\end{eqnarray}
where the second inequality follows from $T_j \subseteq S_j$, the assumptions $f_j(S_j) \leq (1+\delta) x_j$ and $f_j(\{i\}) \leq \delta x_j$ for all $i \in N$. Similarly, for every $i \in U_j$, by Inequality~\eqref{eq:bound-marginal} we have the following bound.
\begin{eqnarray*}
f(i\mid U_j \setminus \{i\})   \stackrel{\eqref{eq:bound-marginal}}{\geq} \frac{1}{2}f(i\mid S_j\setminus \{i\})
\geq \frac{\sqrt{x_j \cdot c_{ij}}}{4\sqrt{2}} \stackrel{\eqref{eq:u-bound}}{\geq} \frac{\sqrt{f_j(U_j) \cdot c_{ij}}}{4\sqrt{2(\frac{1}{128}(1+\delta )+\delta )}} \geq \sqrt{2 c_{ij} \cdot f_j(U_j)}
\end{eqnarray*}
where the second inequality follows from the assumption that $f_j(i \mid S_j \setminus \{i\}) \geq \sqrt{x_j c_{ij}}/2\sqrt{2}$ for all $i \in S_j$, and the last  inequality is since $\delta \leq \frac{1}{129}$. 

Thus, $U_j$ satisfies the requirements of  Lemma~\ref{lemma:lemma34-multi}. Therefore, by Lemma~\ref{lemma:lemma34-multi} it holds that $$\rev_j(U_j)\geq \frac{f_j(U_j)}{2} \stackrel{\eqref{eq:upp-bound-scaling}}{\geq} \frac{f_j(T_j)}{512}.$$ 
Summing over all projects, we get that  
\begin{eqnarray*}
\sum_{j \in M}\rev_j(U_j)& \geq & \sum_{j \in M}\frac{1}{512}f_j(T_j) \geq \sum_{j \in M}\frac{\kappa}{512} f_j(S^{\star \star}_j) 
 \geq  \sum_{j \in M}\frac{\kappa }{512}\rev_j(S^{\star \star}_j)  = \frac{\kappa}{512} \rev(S^{\star \star}),
\end{eqnarray*}
where the second inequality follows from the claim's assumption, the third inequality holds because agent payments are always nonnegative, implying that revenue is upper-bounded by the reward. This completes the proof. 
\end{proof}

\section{Putting it All Together}
We are now ready to conclude our arguments by consolidating the results from the previous sections into a single polynomial-time algorithm that computes an approximately optimal allocation as follows.


\begin{algorithm}[H]
\caption{Constant Factor Approximation for $\OPT$.\\
\textbf{Input:} Set of agents $N$, project success functions $f_1, \dots, f_m$, and agent costs $c_{ij}$ for every $i \in N,j \in M$\\
\textbf{Output:} Allocation $S^-$ that approximates $\OPT$
}\label{alg:appx-opt}
\begin{algorithmic}[1]
\State Run \cref{alg:appx-opt-minus} to compute an allocation $S^-$ that approximates $\OPT^-$ by a constant factor\label{state:apply-appx-opt-minus}
\State Formulate the linear program~\eqref{lp:lp1}
\State Run \cref{alg:solve-lp1} with $\epsilon<\delta \min_{j,i}f_j(\{i\})$ to compute an approximately optimal fractional solution $y$ to \eqref{lp:lp1}\label{state:apply-solve-lp1}
\State Construct distributions $\{F_j\}_{j\in M}$ over subsets of $N$ from $y$ as described in \cref{obs:frac-to-dist} \label{state:dist}
\State Run \cref{alg:rounding} to convert the random allocation defined by $\{F_j\}_{j\in M}$ into a deterministic allocation $T$ \label{state:derandomize}
\State Apply \cref{claim:apply-scaling} to $T$ to obtain an allocation $S^+$ that approximates $\OPT^+$ by a constant factor \label{state:do-scaling}
\State \Return $\arg\max_{S\in \{S^-,S^+\}}\rev(S)$ \label{state:return-best}
\end{algorithmic}
\end{algorithm}

\begin{proof}[Proof of Theorem~\ref{thm:const-diff-projects}]
By \cref{lem:opt-minus}, running Alg.~\ref{alg:appx-opt-minus} in line~\ref{state:apply-appx-opt-minus} computes an allocation $S^-$ s.t. $$\rev(S^-)\geq \delta \cdot \OPT^-$$
As established in \cref{claim:solve-lp1} and \cref{claim:opt-sol-to-opt}, running \cref{alg:solve-lp1} in line~\ref{state:apply-solve-lp1} with $\epsilon<\delta \min_{j,i}f_j(\{i\})$ computes in polynomial time a solution $y$ to \eqref{lp:lp1} with the following properties.
\begin{enumerate}
    \item[(1.1)] The number of positive variables $y_{j,x,S}$ is polynomially bounded. 
    \item[(1.2)] For every $y_{j,x,S}>0$, $f_j(S)\leq (1+\delta) x$, $f_j(\{i\})\leq \delta x$ and $f_j(i\mid S_j\setminus \{i\})\geq \frac{\sqrt{xc_{ij}}}{2\sqrt{2}}$ $\forall i\in S.$
    \item[(1.3)] The weighted reward according to $y$ is at least a constant fraction of the reward of $S^{\star \star}$. I.e., 
    \begin{eqnarray}
    \frac{1}{5}\sum_{j \in M}f_j(S^{\star \star}_j)\leq \sum_{j \in M} \sum_{x \in D_j} \sum_{S \subseteq A_{j,x}}y_{j,x,S}  {f_j(S)}. \label{eq:final-lp-sol}    
    \end{eqnarray}
\end{enumerate}
Since the number of positive variables $y_{j,x,S}$ is polynomially bounded, we can construct in polynomial time, in line~\ref{state:dist}, the distributions $\{F_j\}_{j\in M}$ as described in \cref{obs:frac-to-dist}. These distributions satisfy the following properties.
\begin{enumerate}
\item[(2.1)] For every agent $i \in N$,  the total probability of being selected across all projects is bounded by $1$, i.e., $\sum_{j\in [m]} \Pr_{\tilde{S}_j \sim F_j}[i \in \tilde{S}_j] \leq 1.$
\item[(2.2)] For every project $j \in M$, the expected function value under $F_j$ is equal to the weighted sum over the fractional solution 
\begin{eqnarray}\label{eq:final-dist}
\sum_{x\in D_j}\sum_{S\subseteq A_{j,x}} y_{j,x,S} f_j(S) = \mathbb{E}_{\tilde{S} \sim F_j}[f_j(\tilde{S})].
\end{eqnarray} 
\item[(2.3)] Every set $S\neq \emptyset$ in the support of $F_j$ corresponds to some $x \in D_j$ with $y_{j,x,S} > 0$.
\label{item:every-set}
\end{enumerate}
Next, in line~\ref{state:derandomize}, we apply \cref{alg:rounding} on $\{F_j\}_{j\in M}$, by \cref{lemma:rounding}, it runs in polynomial time in $M,N$ and the size of the (polynomial) support. Since the distributions satisfy the conditions of \cref{lemma:rounding}, we obtain an allocation $T=(T_1,\ldots,T_m)$ such that for all $j \in M$, there exists $S_j$ in the support of $F_j$ such that $T_j \subseteq S_j$. This implies, by properties (2.3) and (1.2), that for every project $j\in M$, there exists $x_j$ such that $f_j(S)\leq (1+\delta) x_j$, $f_j(\{i\})\leq \delta x_j$ and $f_j(i\mid S_j\setminus \{i\})\geq \frac{\sqrt{x_jc_{ij}}}{2\sqrt{2}}$ $\forall i\in S$. Furthermore, we have $\sum_{j\in M} f_j(T_j) 
\geq \frac{1}{2}\sum_{j\in M}\E_{S_j \sim F_j}[f_j(S_j)].$   
Combining this with Inequality~\eqref{eq:final-lp-sol} and Equality~\eqref{eq:final-dist}, we have an allocation $T$ that satisfy
\begin{eqnarray}
\sum_{j\in M} f_j(T_j) 
\geq \frac{1}{10}f_j(S^{\star \star}_j).    
\end{eqnarray} 
Let $\kappa=\frac{1}{10}$. We can apply \cref{claim:apply-scaling} in line~\ref{state:do-scaling} since all the conditions of the claim are satisfied. As a result, we obtain an allocation $S^+$ such that $\rev(S^+)\geq \frac{1}{512}\frac{1}{10}\rev(S^{\star \star})=OPT^+.$
By picking the best of these two solutions in line~\ref{state:return-best}, the expected revenue of the returned allocation satisfies
$$
\max_{S\in \{S^+,S^-\}}\rev(S) \geq \frac{\rev(S^+)+\rev(S^-)}{2}=\Theta(1) \cdot (\OPT^+ +\OPT^-)=\Theta(1) \cdot \OPT.
$$
This completes the proof.
\end{proof}



\section*{Acknowledgments}
T.~Alon and I.~Talgam-Cohen are supported by the European Research Council (ERC) under the EU Horizon program (grant No.~101077862, project ALGOCONTRACT), by ISF grant No.~3331/24, and by a Google Research Scholar Award.
M.~Castiglioni is supported by the FAIR project PE0000013 and the EU Horizon project ELIAS (No.~101120237).
J.~Chen is supported by JST ERATO Grant No.~JPMJER2301, Japan. 
T.~Ezra is supported by the Harvard University Center of Mathematical Sciences and Applications. 
Y.~Li is supported by the NUS Start-up Grant.

\newpage
\bibliography{bibliography}
\bibliographystyle{ACM-Reference-Format}

\newpage
\appendix

\section{Generalized Scaling Lemma}\label{appx:scaling-lemma}
In this appendix, we present a generalized version of the scaling lemma developed in \cite{DuettingEFK23}. \citet{DuettingEFK23} proved that given an XOS function $f:2^N\rightarrow \R_{\geq 0}$ and set $T\subseteq A$, one can find a set $U\subseteq T$ with a desired value $\psi$ (up to a multiplicative and an additive constants), while not decreasing the marginals more than a constant factor. We next develop a new version of the scaling lemma in which the guarantee of the marginals is not with respect to the given set $T$ but with respect to an arbitrary set $S$ containing $T$. This modification is necessary for our analysis since we apply the scaling lemma to the set devised by our rounding scheme rather than directly use it on the set found by the \eqref{lp:lp1}.
\begin{lemma}[Generalized Scaling Lemma]
\label{lemma:scalinglemma-new}
Let $f : 2^{[n]} \to \mathbb{R}_{\geq 0}$ be an XOS function. Given inclusion minimal sets $T \subseteq S \subseteq [n]$,
and parameters $\delta \in (0, 1]$ and $0 \leq \Psi < f(T)$, Algorithm~\ref{alg:new-scaling} runs in polynomial time with value oracle access to $f$ and finds a set $U \subseteq T$ such that
\begin{align}\label{eq:scaling-first}
    (1-\delta)\Psi \leq f(U) \leq \Psi + \max_{i \in T} f(\{i\}), 
\end{align}
and
\begin{align}\label{eq:scaling-sec}
    f(i \mid U \setminus \{i\}) \geq \delta f(i \mid S \setminus \{i\}) \quad \text{for all } i \in U. 
\end{align}
\end{lemma}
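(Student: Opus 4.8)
The plan is to mimic the original scaling lemma of \citet{DuettingEFK23} but carefully track marginals against the fixed superset $S$ rather than the current working set. I would process the elements of $T$ in a suitable order and greedily remove elements whose marginal contribution becomes too small, stopping once $f(U)$ drops near the target $\Psi$. The key structural fact that makes XOS work here is the following: for any $U' \subseteq S$ and any $i \in U'$, monotonicity of $f$ and the XOS/subadditive-type inequalities give $f(i \mid U' \setminus \{i\}) \le f(i \mid (U' \cap S) \setminus \{i\})$ only in one direction, so the delicate point is that removing elements from $T$ should not \emph{decrease} an element's marginal with respect to the \emph{larger} reference set $S$. Since $S$ is fixed, $f(i \mid S \setminus \{i\})$ is a constant for each $i$; the job is to ensure the surviving set $U$ has $f(i \mid U \setminus \{i\}) \ge \delta f(i \mid S \setminus \{i\})$ for all $i \in U$, which we get by only ever keeping elements that pass this test.

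\textbf{Key steps.} First I would set up the algorithm (Algorithm~\ref{alg:new-scaling}): start with $U = T$; while $f(U) > \Psi + \max_{i\in T} f(\{i\})$ (or some equivalent stopping rule), pick an element $i \in U$ with $f(i \mid U \setminus \{i\}) < \delta f(i \mid S \setminus \{i\})$ if one exists and remove it, otherwise remove an arbitrary element. Second, I would argue the upper bound $f(U) \le \Psi + \max_{i\in T} f(\{i\})$: this is the standard ``one step before stopping'' argument — just before the final removal, $f$ was above the threshold, and a single removal changes $f$ by at most $\max_i f(\{i\})$ by subadditivity/monotonicity (XOS functions are subadditive), hence the post-removal value is within an additive $\max_i f(\{i\})$ of the threshold. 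Third, for the lower bound $f(U) \ge (1-\delta)\Psi$: here I would use the invariant that every ``forced'' removal (of an element with small marginal) decreases $f(U)$ by less than $\delta f(i \mid S\setminus\{i\}) \le \delta f(\{i\})$, and sum a telescoping/charging bound over all forced removals, combined with the fact that we never remove below $\Psi$ via a forced removal once $f(U) \le \Psi + \max_i f(\{i\})$. The accounting needs the inclusion-minimality of $T$ and $S$ to control $\sum_i f(\{i\})$ or the marginals along the way, paralleling \citet{DuettingEFK23}. Fourth, the marginal guarantee \eqref{eq:scaling-sec}: at termination no element of $U$ has $f(i \mid U \setminus\{i\}) < \delta f(i \mid S\setminus\{i\})$, because such an element would have been removed — but I must check that removing \emph{other} elements cannot later violate the condition for a surviving element. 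Since $U$ only shrinks and $U \subseteq S$ throughout, and by XOS supermodular-like behavior of marginals is not monotone in general, so the cleanest fix is to re-examine: actually for XOS, marginals can go either way, so the termination condition alone is not enough — one must argue that once an element passes the test it keeps passing, which for XOS follows because we can fix a supporting additive function and the marginal of $i$ in $U$ is at least its additive weight $a(i)$, while the marginal in $S$ is at most $f(i \mid S \setminus \{i\})$; pinning down this XOS-specific argument is the crux.

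\textbf{Main obstacle.} The hard part will be the marginal-preservation argument \eqref{eq:scaling-sec} relative to the fixed superset $S$ rather than $T$: in the original lemma the reference set equals the working set, so monotonicity of the removal process directly gives the bound, whereas here I must show that shrinking $U$ inside $S$ never pushes a surviving element's marginal (measured against $S$) below the $\delta$ threshold. I expect to handle this by fixing an XOS supporting additive function $a_S$ for $f$ at $S$, noting $f(i \mid S \setminus \{i\}) \le f(\{i\})$ and $f(i \mid U \setminus \{i\}) \ge a_U(i)$ for a supporting function at $U$, and then structuring the removal order (e.g., process in increasing order of $f(i\mid S\setminus\{i\})$, or maintain the invariant explicitly as a loop guard) so that the test $f(i\mid U\setminus\{i\}) \ge \delta f(i\mid S\setminus\{i\})$ is re-checked for all of $U$ after each removal — this is what Algorithm~\ref{alg:new-scaling} should do, and the polynomial running time follows since each iteration removes one element. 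I would close by verifying the two displayed inequalities hold at termination and that at most $|T|$ iterations occur.
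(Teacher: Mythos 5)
Your proposal identifies the right tension but does not resolve it, and the greedy stopping algorithm you sketch is not a valid replacement for the paper's construction. Concretely, your loop terminates when $f(U) \le \Psi + \max_i f(\{i\})$, but at that moment there is no guarantee that every surviving $i \in U$ satisfies $f(i \mid U \setminus \{i\}) \ge \delta\, f(i \mid S \setminus \{i\})$: the stopping condition and the marginal condition are two different predicates, and since (as you correctly observe) XOS marginals are not monotone under shrinking the set, you cannot argue ``once an element passes, it keeps passing.'' If instead you keep removing violators until none remain, you lose control of the lower bound $f(U) \ge (1-\delta)\Psi$. Your proposed patch --- fixing supporting additive functions at $S$ and at $U$ --- is not developed into a working argument, and the inequalities you cite ($f(i \mid S\setminus\{i\}) \le f(\{i\})$, $f(i \mid U\setminus\{i\}) \ge a_U(i)$) do not by themselves close the gap, because the supporting additive function at $U$ changes every time $U$ changes.

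The paper avoids this tension entirely by \emph{not} stopping greedily. Algorithm~\ref{alg:new-scaling} removes elements one at a time in a specific order --- at each step it removes the element $i_s$ minimizing the ratio $f(i \mid T_{s-1}\setminus\{i\}) / f(i \mid S\setminus\{i\})$ --- and records $\delta_s$, the realized ratio of the removed element. Because $i_s$ is a \emph{minimizer}, every surviving element of $T_{s-1}$ has ratio at least $\delta_s$; so the marginal condition for $U = T_{s-1}$ is guaranteed for free once one exhibits a step $s$ with $\delta_s \ge \delta$. That is done by a pigeonhole argument over the window $\{j^\star,\dots,k^\star\}$ where $j^\star$ is the first index with $f(T_{j^\star}) \le \Psi$ and $k^\star$ the first with $f(T_{k^\star}) \le (1-\delta) f(T_{j^\star-1})$: telescoping gives $f(T_{j^\star-1}) - f(T_{k^\star}) = \sum_{s} \delta_s f(i_s \mid S\setminus\{i_s\})$, and the XOS bound $\sum_i f(i \mid T \setminus\{i\}) \le f(T)$ (Lemma~\ref{lemma:sum-of-marginals}) caps the right side by $\delta_{s^\star} f(T_{j^\star-1})$, forcing $\delta_{s^\star} \ge \delta$. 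The value bounds for $U = T_{s^\star - 1}$ then fall out of the definitions of $j^\star, k^\star$ and subadditivity. This ``compute the whole removal sequence, then select the right prefix by averaging'' idea is the missing ingredient in your proposal; without it, the greedy stopping rule does not simultaneously deliver both required properties.
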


We present Algorithm~\ref{alg:scalinglemma-new}, which essentially follows the approach of the scaling lemma but adjusts by removing elements based on the marginals of the superset $S$ instead of $T$.

\begin{algorithm}[H]
\caption{New Scaling Lemma for XOS\\
\textbf{Input:} XOS function $f$, inclusion minimal sets $T \subseteq S$, parameters $\delta \in (0, 1]$ and $0 \leq \Psi < f(T)$\\
\textbf{Output:} A set $U\subseteq T$ that satisfies \eqref{eq:scaling-first} and \eqref{eq:scaling-sec}}\label{alg:new-scaling}
\begin{algorithmic}[1]
\State Let $S_0=S, T_0=T$
\For{$s = 1, \dots, |T_0|$}
    \State Let $i_s = \arg\min_{i \in T_{s-1}} \frac{f(i \mid T_{s-1} \setminus \{i\})}{f(i \mid S_0 \setminus \{i\})}$
    \State Let $T_s = T_{s-1} \setminus \{i_s\}$
    \State Let $\delta_s = \frac{f(i_s \mid T_s)}{f(i_s \mid S_0 \setminus \{i_s\})}$
\EndFor
\State Let $j^{\star} = \min\{j \mid f(T_j) \leq \Psi\}$
\State Let $k^{\star} = \min\{k \mid f(T_k) \leq (1 - \delta) f(T_{j^{\star}-1})\}$
\State Let $s^{\star} = \arg\max_{t \in \{j^{\star}, \dots, k^{\star}\}} \delta_s$
\State \textbf{return} $U = T_{s^{\star}-1}$
\end{algorithmic}
\label{alg:scalinglemma-new}
\end{algorithm}

\begin{proof}
Observe that $f(T) = f(T_0) \geq \cdots \geq f(T_{|T_0|}) = 0.$ There must exist a smallest $0 \leq j \leq |T_0|$ such that $f(T_j) \leq \Psi$. This is $j^{\star}$. Furthermore, $j^{\star} \geq 1$ since $\Psi < f(T)$. Similarly, there must exist a smallest $j^{\star} - 1 \leq k \leq |T_0|$ such that $f(T_k) \leq (1 - \delta) f(T_{j^{\star} - 1}).$ This is $k^{\star}$. Since $f(T_{j^{\star} - 1}) > \Psi$ and $\delta >0$, it holds that $k^{\star} \geq j^{\star}$.

Now we show that $U=T_{s^{\star}-1}$ fulfills properties (1) and (2).
By definition, we have $T_s=T_{s-1}\setminus \{i_s\}$, therefore $f(i_{s} \mid T_{s})=f(T_{s-1})-f(T_s)$ which by taking a telescopic sum implies
$$
f(T_{j^{\star}-1})=f(T_{k^{\star}})+\sum_{s\in \{j^{\star},\hdots,k^{\star}\}}f(i_s\mid T_s).
$$
By rearranging and replacing $f(i_s\mid T_s)$ by $\delta_s f(i_s\mid S_0\setminus\{i_s\})$, we have
\begin{align*}
f(T_{j^{\star}-1})-f(T_{k^{\star}})=\sum_{s\in \{j^{\star},\hdots,k^{\star}\}} \delta_s f(i_s\mid S_0 \setminus \{i_s\})\leq  \sum_{s\in \{j^{\star},\hdots,k^{\star}\}}  \delta_{s^{\star}} f(i_s\mid S_0 \setminus \{i_s\}) \\
\leq \delta_{s^{\star}}f(\{i_{j^{\star}},\hdots, i_{k^{\star}}\})\leq  \delta_{s^{\star}}f(\{i_{j^{\star}},\hdots, i_{|T_0|}\})=\delta_{s^{\star}}\cdot f(T_{j^{\star}-1})    
\end{align*}
where the first inequality is by definition of $\delta_{s^{\star}}$, the second inequality is by Lemma~\ref{lemma:sum-of-marginals}, and the third inequality is by monotonicity. Therefore, $\delta_{s^{\star}}\geq \frac{f(T_{j^{\star}-1})-f(T_{k^{\star}})}{f(T_{j^{\star}-1})}.$

This implies for all $i\in U=T_{s^{\star}-1}$,
\begin{eqnarray*}
\frac{f(i\mid T_{s^{\star}-1}\setminus \{i\})}{f(i\mid S_0\setminus \{i\})}\geq \delta_{s^{\star}} \geq \frac{f(T_{j^{\star}-1})-f(T_{k^{\star}})}{f(T_{j^{\star}-1})} 
= 1-\frac{f(T_{k^{\star}})}{f(T_{j^{\star}-1})} \geq 1-\frac{(1-\delta)f(T_{j^{\star}-1})}{f(T_{j^{\star}-1})} 
\geq \delta,
\end{eqnarray*}
where the first inequality holds by the definitions of $\delta_{s^{\star}}$ and $i_{s^{\star}}$. 

To prove property \eqref{eq:scaling-first}, observe that $f(S_{s^{\star}-1})\geq f(S_{k^{\star}-1})\geq (1-\delta)f(S_{j^{\star}-1})>(1-\delta)\Psi,$
and that by sub-additivity $f(S_{s^{\star}-1})\leq f(S_{j^{\star}-1})\leq f(S_{j^{\star}})+\max_{i\in T}f(\{i\}),$ which is at most $\Psi+\max_{i\in T}f(\{i\})$ by the definition of $j^{\star}.$
\end{proof}









\section{Auxiliary Results from Previous Work}

\begin{lemma}[Lemma 3.4 in \cite{DuettingEFK23}]\label{lemma:lemma34-multi}
Fixing a project $j\in M$.   Any set $S$ that fulfills $f_j(S)>0$ and 
\begin{eqnarray*}
f_j(i\mid S\setminus \{i\})\geq \sqrt{2c_{ij} f_j(S)} && \forall i\in S,
\end{eqnarray*}
satisfies that $\rev_j(S)\geq f_j(S)/2$. 
\end{lemma}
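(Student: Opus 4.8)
The plan is to unfold the definition of $\rev_j(S)$ and reduce the claim to a single scalar inequality. Since
\[
\rev_j(S) = \Bigl(1 - \sum\nolimits_{i \in S} \tfrac{c_{ij}}{f_j(i \mid S \setminus \{i\})}\Bigr)\cdot f_j(S),
\]
and $f_j(S) > 0$, it suffices to prove that the total payment fraction satisfies $\sum_{i \in S} \tfrac{c_{ij}}{f_j(i \mid S \setminus \{i\})} \le \tfrac12$. First I would observe that terms with $c_{ij} = 0$ contribute nothing, and that for any $i$ with $c_{ij} > 0$ the hypothesis $f_j(i \mid S \setminus \{i\}) \ge \sqrt{2 c_{ij} f_j(S)}$ together with $f_j(S) > 0$ forces $f_j(i \mid S \setminus \{i\}) > 0$, so every term is well defined.

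The key manipulation is to square the hypothesis into $c_{ij} \le \tfrac{(f_j(i \mid S \setminus \{i\}))^2}{2 f_j(S)}$ and divide by $f_j(i \mid S \setminus \{i\})$, obtaining $\tfrac{c_{ij}}{f_j(i \mid S \setminus \{i\})} \le \tfrac{f_j(i \mid S \setminus \{i\})}{2 f_j(S)}$ for every $i \in S$. Summing over $i \in S$, the desired bound reduces to the ``sum of marginals'' inequality $\sum_{i \in S} f_j(i \mid S \setminus \{i\}) \le f_j(S)$.

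To close this last inequality I would use the XOS structure of $f_j$, which holds throughout this setting: fix an additive function $a$ supporting $f_j$ at $S$, so $f_j(S) = \sum_{i \in S} a(i)$ while $f_j(T) \ge \sum_{i \in T} a(i)$ for all $T \subseteq N$. Then $f_j(i \mid S \setminus \{i\}) = f_j(S) - f_j(S \setminus \{i\}) \le \sum_{i' \in S} a(i') - \sum_{i' \in S \setminus \{i\}} a(i') = a(i)$, and summing over $i \in S$ yields $\sum_{i \in S} f_j(i \mid S \setminus \{i\}) \le \sum_{i \in S} a(i) = f_j(S)$. Chaining the three estimates gives $\sum_{i \in S} \tfrac{c_{ij}}{f_j(i \mid S \setminus \{i\})} \le \tfrac{1}{2 f_j(S)}\cdot f_j(S) = \tfrac12$, hence $\rev_j(S) \ge \tfrac12 f_j(S)$, as claimed.

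I do not expect a genuine obstacle here; the one place that needs attention is the sum-of-marginals step, which is exactly where the XOS assumption (or the submodular assumption, in the submodular version of the theorem) is used --- for merely subadditive $f_j$ the marginals can sum to strictly more than $f_j(S)$. For submodular $f_j$ one could instead order $S = \{i_1,\dots,i_k\}$ and telescope, $f_j(S) = \sum_{t=1}^{k} f_j(i_t \mid \{i_1,\dots,i_{t-1}\}) \ge \sum_{t=1}^{k} f_j(i_t \mid S \setminus \{i_t\})$, but the supporting-additive-function argument handles XOS directly in one line.
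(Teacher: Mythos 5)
Your proof is correct. The paper does not re-prove this lemma --- it is imported verbatim as Lemma~3.4 of the cited work --- but the argument you give (square the hypothesis to bound each payment fraction by $f_j(i\mid S\setminus\{i\})/(2 f_j(S))$, then invoke the sum-of-marginals bound $\sum_{i\in S} f_j(i\mid S\setminus\{i\})\leq f_j(S)$ for XOS functions, which the paper also imports as Lemma~\ref{lemma:sum-of-marginals}) is the standard one and matches the cited source.
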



\begin{lemma}[Lemma 2.1. \cite{DuettingEFK23}]\label{lemma:sum-of-marginals}
For any XOS function $f$ and any sets $S\subseteq T$, 
$$
\sum_{i\in S} f(i\mid T\setminus \{i\}) \leq f(S).
$$
\end{lemma}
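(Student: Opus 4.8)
# Proof Proposal for the Generalized Scaling Lemma

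\textbf{Approach.} The plan is to invoke the XOS representation of $f$ directly. Write $f(R) = \max_{k \in [\ell]} a_k(R)$ for a collection of additive (clause) functions $\{a_1,\dots,a_\ell\}$, and let $a := a_{k^\star}$ be a clause that attains the maximum at the \emph{specific} set $T$, so that $f(T) = a(T) = \sum_{i \in T} a(i)$, while $f(R) \geq a(R)$ for every $R \subseteq [n]$. The whole argument will run through this single clause $a$, which is simultaneously tight at $T$ and a pointwise lower bound on $f$ everywhere else.

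\textbf{Step 1: bound each marginal by the clause value.} Fix $i \in S$; since $S \subseteq T$ we also have $i \in T$. Using $f(T \setminus \{i\}) \geq a(T \setminus \{i\}) = a(T) - a(i) = f(T) - a(i)$, I would derive
\[
f(i \mid T \setminus \{i\}) = f(T) - f(T \setminus \{i\}) \leq f(T) - \bigl(f(T) - a(i)\bigr) = a(i).
\]

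\textbf{Step 2: sum and apply the lower-bound property at $S$.} Summing the previous inequality over all $i \in S$ and then using that $a$ is a clause of the XOS representation (so $a(S) \leq f(S)$) gives
\[
\sum_{i \in S} f(i \mid T \setminus \{i\}) \;\leq\; \sum_{i \in S} a(i) \;=\; a(S) \;\leq\; f(S),
\]
which is exactly the claimed inequality.

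\textbf{Main obstacle.} There is essentially no obstacle here: the only thing to be careful about is that the \emph{same} additive function $a$ is used throughout—it must be chosen to be maximizing at $T$ (which yields the equality $f(T) = a(T)$ needed in Step 1) and then exploited purely as a lower bound at $T \setminus \{i\}$ and at $S$. Monotonicity of $f$ is not even required. The only reason to state this as a standalone lemma (it is Lemma~2.1 of \cite{DuettingEFK23}) is that it is used as a black box in the scaling argument, e.g.\ in the telescoping bound inside the proof of \cref{lemma:scalinglemma-new}.
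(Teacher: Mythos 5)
Your argument is correct and is the standard proof of this fact: take the additive clause $a$ of the XOS representation that is tight at $T$, bound each marginal $f(i\mid T\setminus\{i\})$ by $a(i)$ using additivity of $a$ and the pointwise lower-bound property at $T\setminus\{i\}$, then sum over $S$ and use $a(S)\le f(S)$. The paper itself does not reprove this; it imports the lemma from \cite{DuettingEFK23}, and your argument matches the proof given there. (Minor nit: the heading of your write-up says ``Generalized Scaling Lemma,'' but the content correctly addresses the sum-of-marginals lemma.)
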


\begin{lemma}[Lemma 3.3 \cite{DuettingEFK23}]\label{lemma:lemma33-multi} For every set $S\subseteq S^{\star}$ we have \begin{eqnarray*}
\sum_{i\in S}\sqrt{c_{i}}\leq \sqrt{f(S)}.
\end{eqnarray*}
\end{lemma}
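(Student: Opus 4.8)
The plan is to extract a single scalar inequality from the optimality of $S^\star$ and then combine it with the XOS ``sum of marginals'' bound (Lemma~\ref{lemma:sum-of-marginals}) via Cauchy--Schwarz. Write $m_i := f(i \mid S^\star \setminus \{i\})$ for the marginal contribution of agent $i$ to the optimal set. First I would observe that the only property of $S^\star$ we need is $\rev(S^\star) \ge \rev(\emptyset) = 0$, which follows from optimality (and, in the multi-project reading where $S^\star$ is one project's team, from the fact that the agents could instead be left unassigned). Since $\rev(S^\star)=\bigl(1-\sum_{i\in S^\star} c_i/m_i\bigr)\,f(S^\star)$, in the main case $f(S^\star)>0$ this rearranges to the key inequality $\sum_{i\in S^\star} c_i/m_i \le 1$. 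Along the way one notes that $m_i>0$ whenever $c_i>0$ for $i\in S^\star$, since otherwise the contract cost, and hence $\rev(S^\star)$, would be $-\infty$, contradicting $\rev(S^\star)\ge 0$; agents with $c_i=0$ contribute $0$ to the left-hand side of the lemma and can simply be dropped (with the convention $0/0=0$).

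Next, for an arbitrary $S\subseteq S^\star$ I would split each term as $\sqrt{c_i}=\sqrt{c_i/m_i}\cdot\sqrt{m_i}$ and apply Cauchy--Schwarz:
\[
\sum_{i\in S}\sqrt{c_i}\ \le\ \Bigl(\sum_{i\in S} c_i/m_i\Bigr)^{1/2}\Bigl(\sum_{i\in S} m_i\Bigr)^{1/2}\ \le\ \Bigl(\sum_{i\in S} m_i\Bigr)^{1/2},
\]
where the second step uses $S\subseteq S^\star$, nonnegativity of each $c_i/m_i$ (which holds by monotonicity of $f$), and the key inequality. It then remains to bound $\sum_{i\in S} m_i$ by $f(S)$, which is exactly Lemma~\ref{lemma:sum-of-marginals} applied to the nested pair $S\subseteq S^\star$ (taking $T=S^\star$). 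Chaining the two bounds gives $\sum_{i\in S}\sqrt{c_i}\le\sqrt{f(S)}$.

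The degenerate case $f(S^\star)=0$ is handled directly: by monotonicity $f(S)=0$ as well, and any $i\in S^\star$ with $c_i>0$ would make $\rev(S^\star)<0=\rev(\emptyset)$ (or $-\infty$), contradicting optimality, so all relevant costs vanish and both sides of the claimed inequality are $0$. I expect the only genuinely delicate ingredient to be the XOS-specific marginal bound $\sum_{i\in S} m_i\le f(S)$, which fails for merely subadditive $f$ and relies on a supporting additive function; but since we are entitled to invoke Lemma~\ref{lemma:sum-of-marginals}, the remaining work is just the bookkeeping around $\rev(\emptyset)=0$ and the division-by-zero edge cases, plus recognizing that Cauchy--Schwarz with weights $m_i$ is the right way to pass from $\sqrt{c_i}$ to $\sqrt{f(S)}$.
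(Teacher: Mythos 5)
The paper does not prove this lemma; it is quoted verbatim from \citet{DuettingEFK23} in Appendix~B (``Auxiliary Results from Previous Work''), with only a remark that it transfers to the multi-project setting because $S^\star_j$ is optimal for project $j$ among the agents not claimed by other projects. Your proof is correct and is exactly the standard argument behind the cited lemma: extract $\sum_{i\in S^\star} c_i/m_i \le 1$ from $\rev(S^\star)\ge \rev(\emptyset)=0$, split $\sqrt{c_i}=\sqrt{c_i/m_i}\cdot\sqrt{m_i}$ and apply Cauchy--Schwarz, then invoke the XOS marginal bound (Lemma~\ref{lemma:sum-of-marginals}) to get $\sum_{i\in S} m_i \le f(S)$; your treatment of the $f(S^\star)=0$ and $m_i=0$ edge cases is also sound.
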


We observe that the same result applies in our setting for $f_j$ and $S^{\star}_j$. Specifically, $S^{\star}_j$ is the optimal set for project $j$ when considering only the agents who are not assigned to other projects, i.e., the set $N \setminus \bigcup_{j' \neq j} S_{j'}$. Consequently, the same inequality holds for each project and its corresponding set from the optimal allocation.


\section{Submodular Functions with Value Queries}\label{appx:sub-mod}

In this section, we show how to extend our results when the functions $f_j(\cdot)$ are submodular using only value queries.
Indeed, if the functions are submodular, a demand oracle is not required as an approximate demand oracle can be implemented using only value queries.
This follows from two observations. First, the function $\min\{f_j(S),x\}$ is monotone submodular for any fixed $x$. Second, it is well known that the sum of a monotone submodular and a linear function can be approximate efficiently~\citep{sviridenko2017optimal,harshaw2019submodular}. Formally:

\begin{lemma}[\citep{sviridenko2017optimal,harshaw2019submodular}] \label{lemma:oracleSubmodular}
    Given a submodular function $f$, a threshold $x$, a set of prices $p_i$, and a value oracle for $f$, there exists a poly-time algorithm that finds a set $S$ such that
    \[\min\{f(S),x\}-\sum_{i\in S} p_i\ge \left(1-\frac{1}{e}\right) \min\{f(T),x\}- \sum_{i \in T} p_i \quad \forall T. \]
\end{lemma}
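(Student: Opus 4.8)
The plan is to reduce the statement to a known result on regularized monotone submodular maximization, after checking that the capped objective retains the structure needed to invoke it. The only genuine content is verifying that for a monotone submodular $f$ and any fixed threshold $x \ge 0$, the capped function $g(S) := \min\{f(S), x\}$ is again monotone submodular; everything else is essentially a black-box invocation.

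First I would establish monotonicity of $g$, which is immediate since $f$ is monotone and $t \mapsto \min\{t,x\}$ is nondecreasing. For submodularity I would verify the diminishing-returns inequality $g(S \cup \{i\}) - g(S) \ge g(T \cup \{i\}) - g(T)$ for all $S \subseteq T$ and $i \notin T$ by a short case analysis on where $f(T)$ and $f(T\cup\{i\})$ sit relative to $x$. If $f(T) \ge x$, the right-hand side is $0$ and we are done by nonnegativity of the left-hand side. If $f(T \cup \{i\}) \le x$, then $g$ coincides with $f$ on all four relevant sets and the inequality is exactly submodularity of $f$. In the remaining case $f(T) < x < f(T\cup\{i\})$, monotonicity gives $f(S) \le f(T) < x$, so $g(S) = f(S)$ and $g(T\cup\{i\}) - g(T) = x - f(T)$; then, whether or not $f(S\cup\{i\})$ exceeds $x$, submodularity of $f$ together with $f(T\cup\{i\}) \ge x$ yields $g(S\cup\{i\}) - g(S) \ge \min\{\,f(S\cup\{i\}) - f(S),\ x - f(S)\,\} \ge x - f(T)$, completing the case analysis.

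Next I would observe that a value oracle for $f$ immediately yields a value oracle for $g$ (one extra $\min$ per query), and that $\ell(S) := \sum_{i \in S} p_i$ is a nonnegative modular function. Hence maximizing $g(S) - \ell(S)$ is precisely an instance of maximizing the difference between a monotone nonnegative submodular function and a nonnegative modular function. Applying the distorted-greedy / continuous-greedy algorithm of \citet{sviridenko2017optimal} (equivalently \citet{harshaw2019submodular}), which runs in polynomial time with value-oracle access, returns a set $S$ with $g(S) - \ell(S) \ge (1 - \tfrac1e)\, g(T) - \ell(T)$ for every $T$; substituting back $g(\cdot) = \min\{f(\cdot),x\}$ and $\ell(\cdot) = \sum_{i \in \cdot} p_i$ gives exactly the claimed guarantee.

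The main (and essentially the only) obstacle is the submodularity check for the capped function; once that is in place the statement follows by citing the regularized submodular maximization literature, with no further technical work. The one subtlety to flag is that the cited $(1-\tfrac1e)$ factor applies to the submodular part only, while the modular term is handled exactly (it may be subtracted without loss), which is precisely the asymmetric form of the inequality in the statement.
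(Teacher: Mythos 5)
Your proposal is correct and follows essentially the same route the paper takes: the paper justifies the lemma in the surrounding text via precisely the two observations you make — that capping a monotone submodular function at a threshold preserves monotone submodularity, and that maximizing a monotone submodular function minus a nonnegative modular function admits a $(1-1/e)$-approximation via \citet{sviridenko2017optimal,harshaw2019submodular}. The only difference is that you spell out the (routine, and correct) case analysis verifying submodularity of the capped function, which the paper leaves implicit.
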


This approximate demand oracle can be used to replace the approximate demand oracle for capped function that we designed in Lemma~\ref{lemma:appx-demand} (without requiring demand queries).

Since the approximation provided in Lemma~\ref{lemma:oracleSubmodular} is slightly different from the one in Lemma~\ref{lemma:appx-demand}, we have to modify some components of the proof.
In particular, we start from a slightly different LP than \eqref{lp:lp1}, and prove a slightly different version of  Claim~\ref{claim:appx-separation}, Claim~\ref{claim:solve-lp1}, and Claim~\ref{claim:opt-sol-to-opt}. From there, the proof extends without any additional modifications.

We start from the following linear program:

\begin{align}
	\label{lp:lp1Submodular}\tag{LP4}
	\max & \sum_{j\in [m],x\in D_j,\scriptscriptstyle S\subseteq A_{j,x}} y_{j,x,\scriptscriptstyle S} \left(\left(1-\frac{1}{e}\right)\min\{f_j(S),x\}-\frac{\sqrt{x}}{2 \sqrt{2}}\sum_{i\in S}\sqrt{c_{ij}}\right)&\\
	\text{s.t.} & \sum_{x\in D_j,\scriptscriptstyle S\subseteq A_{j,x}} y_{j,x,\scriptscriptstyle S}\leq 1 &\quad \forall j\in [m],\nonumber\\
	&\sum_{j\in [m],x\in D_j, \scriptscriptstyle S \subseteq A_{j,x}\mid i\in S}y_{j,x,\scriptscriptstyle S}\leq 1  &\quad \forall i\in [n],\nonumber
\end{align}

and its dual

\begin{equation}\label{LP:lp2Submodular}\tag{LP5}
\begin{array}{ll@{}ll}
\text{min} & \sum_{j \in [m]} \alpha_j + \sum_{i \in [n]} \beta_i & \\
\text{s.t.} & \alpha_j + \sum_{i \in S} \beta_i \geq \left(1-\frac{1}{e}\right)\min\{f_j(S),x\} -\frac{\sqrt{x}}{2\sqrt{2}} \sum_{i \in S} \sqrt{c_{ij}} \quad \forall j \in [m], x \in D_j, S \subseteq A_{j,x},\\
& \alpha_j \geq 0, \quad \forall j \in [m],\\
& \beta_i \geq 0, \quad \forall i \in [n].
\end{array}.
\end{equation}

We can design the following approximate separation oracle for \eqref{LP:lp2Submodular}.
\begin{claim}[Approximate Separation Oracle]\label{claim:appx-separationSubmodular}
Given any assignment of the dual variables $ \alpha_j $ for all $ j \in M $ and $ \beta_i $ for all $ i \in N $ corresponding to \eqref{LP:lp2}, there exists a polynomial-time approximate separation oracle that either:  
\begin{enumerate}  
    \item Verifies that all constraints of \eqref{LP:lp2} are satisfied, i.e.,
   $$
    \alpha_j + \sum_{i \in S} \beta_i \geq \left(1-\frac{1}{e}\right) \min\{f_j(S), x\} - \sum_{i \in S} \frac{\sqrt{c_{ij}x}}{2\sqrt{2}}  
    \quad \forall j \in [m], x \in D_j, S \subseteq A_{j,x}.
    $$
    \item Otherwise, it identifies an \emph{approximately} violated constraint. Specifically,
    $$
    \exists j \in M, x \in D_j, S \subseteq A_{j,x}\; \text{such that} \; 
     \alpha_j + \sum_{i \in S} \beta_i <  \min\{f_j(S), x\} 
    - \sum_{i \in S} \frac{\sqrt{c_{ij}x}}{2\sqrt{2}}.
    $$    
\end{enumerate}  
\end{claim}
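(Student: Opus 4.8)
The plan is to mirror the proof of Claim~\ref{claim:appx-separation} almost verbatim, substituting the approximate demand oracle of Lemma~\ref{lemma:appx-demand} with the value-query-only approximate oracle of Lemma~\ref{lemma:oracleSubmodular}. Concretely, given a candidate dual assignment $\{\alpha_j\}_{j\in M}, \{\beta_i\}_{i\in N}$, I would iterate over every project $j\in M$ and every estimate $x\in D_j$ (polynomially many pairs), and for each such pair define prices $p_i = \frac{\sqrt{x}}{2\sqrt{2}}\sqrt{c_{ij}} + \beta_i$ when $f_j(\{i\})\le \delta x$, and $p_i=\infty$ otherwise. The infinite prices guarantee that the set returned by the oracle lies inside $A_{j,x}$, exactly as in the original claim. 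Then I would invoke Lemma~\ref{lemma:oracleSubmodular} on the monotone submodular function $\min\{f_j(\cdot), x\}$ (submodularity and monotonicity of the capped function for fixed $x$ being the first of the two observations already noted in the text) to obtain a set $\bar S \subseteq A_{j,x}$ with
$$\min\{f_j(\bar S),x\} - \sum\nolimits_{i\in\bar S} p_i \;\ge\; \Big(1-\tfrac1e\Big)\min\{f_j(S),x\} - \sum\nolimits_{i\in S} p_i \qquad \forall S \subseteq A_{j,x}.$$

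Next I would do the case split. If for every pair $(j,x)$ we have $\alpha_j \ge \min\{f_j(\bar S),x\} - \sum_{i\in\bar S} p_i$, then combining with the oracle inequality and substituting the definition of $p_i$ (splitting off the $\beta_i$ terms), one gets $\alpha_j + \sum_{i\in S}\beta_i \ge (1-\tfrac1e)\min\{f_j(S),x\} - \sum_{i\in S}\frac{\sqrt{c_{ij}x}}{2\sqrt2}$ for all $j,x,S\subseteq A_{j,x}$, i.e.\ all constraints of \eqref{LP:lp2Submodular} hold — this is case~1. Otherwise there is some $(j,x)$ with $\alpha_j < \min\{f_j(\bar S),x\} - \sum_{i\in\bar S} p_i$; rearranging and substituting $p_i$ yields $\alpha_j + \sum_{i\in\bar S}\beta_i < \min\{f_j(\bar S),x\} - \sum_{i\in\bar S}\frac{\sqrt{c_{ij}x}}{2\sqrt2}$, which is the approximately violated constraint of case~2 witnessed by $\bar S$. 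Since there are only polynomially many $(j,x)$ pairs and each oracle call runs in polynomial time using value queries only, the whole procedure runs in polynomial time.

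The one point requiring slightly more care than a pure copy-paste is bookkeeping the $(1-\tfrac1e)$ factor: unlike Lemma~\ref{lemma:appx-demand}, whose slack appears multiplicatively as $\frac{1}{1+1/(1-\delta)}$ in front of $\alpha_j$ and additively as $\delta x$, here the loss sits as a $(1-\tfrac1e)$ factor multiplying the $\min\{f_j(S),x\}$ term with no additive error term. This is exactly why the LP is restated as \eqref{lp:lp1Submodular} (with the $(1-\tfrac1e)$ inside the objective and no $\delta(1+\tfrac1{1-\delta})x$ term) and its dual as \eqref{LP:lp2Submodular}; the separation oracle then certifies the $(1-\tfrac1e)$-relaxed constraints rather than the $\delta x$-relaxed ones. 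I do not anticipate a genuine obstacle — the main (very mild) subtlety is simply making sure the relaxed constraint form in case~2 matches what the downstream modified versions of Claim~\ref{claim:solve-lp1} and Claim~\ref{claim:opt-sol-to-opt} expect, so that the rest of the argument (ellipsoid on the dual, reconstructing a poly-support primal solution, rounding via Algorithm~\ref{alg:rounding}, and scaling via Claim~\ref{claim:apply-scaling}) carries over unchanged.
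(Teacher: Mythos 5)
Your proof is correct and mirrors the paper's own argument essentially verbatim: define the same prices (with $\infty$ for elements outside $A_{j,x}$), invoke the submodular-with-cap oracle of Lemma~\ref{lemma:oracleSubmodular} in place of the capped-demand oracle of Lemma~\ref{lemma:appx-demand}, and split into cases on whether $\alpha_j$ dominates the oracle's objective value. The only substantive observation you add — that the slack is now a pure multiplicative $(1-1/e)$ factor instead of a multiplicative factor plus an additive $\delta x$ term, which is why \eqref{lp:lp1Submodular} and \eqref{LP:lp2Submodular} are formulated differently — is correct and is exactly the bookkeeping the paper does implicitly.
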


\begin{proof}
We construct the approximate demand oracle using the approximate capped demand in \cref{lemma:oracleSubmodular}.

Given an assignment $\{\alpha_j\}_{i\in M},\{\beta_i\}_{i\in N}$ for \eqref{LP:lp2Submodular}, we define the prices as in \Cref{claim:appx-separation} and compute an approximate demand of function $f_j$ with cap $x$ and prices $p_i$ using \cref{lemma:oracleSubmodular}.
Our approximate demand  yields a set $\bar{S} \subseteq A_{j,x}$ that satisfies 
\[\min\{f_j(\bar{S}), x\} - \sum_{i \in \bar{S}} p_i \geq \max_{S} \left(\left(1-\frac{1}{e}\right)\min\{f_j(S), x\} - \sum_{i \in S} p_i \right).\]

If $\alpha_j \geq   \min\{f_j(\bar{S}), x\} - \sum_{i \in \bar{S}} p_i$ for all $j\in M$ and $x\in D_j$,
all the constraints of \eqref{LP:lp2Submodular} are satisfied. This is because 
\begin{align*}
    \alpha_j  \geq  \min\{f_j(\bar{S}), x\} - \sum_{i \in \bar{S}} p_i\ge \left(1-\frac{1}{e}\right)\min\{f_j(S), x\} - \sum_{i \in S} p_i \quad \forall j\in M, x\in D_j, S \subseteq  A_{j,x}.
\end{align*}

Otherwise, there exists $j\in M$ and $x\in D_j$ such that $\left( \min\{f_j(\bar{S}), x\} - \sum_{i \in \bar{S}} p_i \right) > \alpha_j$. In this case, the constraint for $j$ and $x$ is approximately violated for $\bar{S}$. To see that, it is sufficient to replace the definition of the prices.
Since there is only a polynomial number of choices for $ j \in M$ and $x \in D_j$, this process terminates in polynomial time. 
\end{proof}

Let $P^{\star}_S = D^{\star}_S$ be the optimal solutions to the primal \eqref{lp:lp1Submodular} and the dual \eqref{LP:lp2Submodular}, respectively.
Then, running \Cref{alg:appx-demand} with the approximation oracle designed in \cref{claim:appx-separationSubmodular} and \eqref{lp:lp1Submodular} and \eqref{LP:lp2Submodular} as primal and dual problem, respectively, we get the following:

\begin{claim}\label{claim:solve-lp1Submodular} 
For every $\epsilon>0$, \cref{alg:solve-lp1} computes a feasible solution $y$ to \eqref{lp:lp1Submodular} in polynomial time that satisfies
\begin{enumerate}
   \item The number of strictly positive variables $y_{j,x,S}$ is polynomially bounded.\label{item:poly-manySubmodular}

    \item $\sum_{j \in M} \sum_{x \in D_j} \sum_{S \subseteq A_{j,x}}y_{j,x,S}  \left(\min\{f_j(S),x\} - \frac{\sqrt{x}}{2\sqrt{2}} \sum_{i \in S} \sqrt{c_{ij}} \right)\geq (P^{\star}_S-\epsilon).$\label{item:opt-boundSubmodular}
    
    \item For every $y_{j,x,S}>0$, $f_j(S)\leq (1+\delta) x$, $f_j(\{i\})\leq \delta x$ and $f_j(i\mid S_j\setminus \{i\})\geq \frac{\sqrt{xc_{ij}}}{2\sqrt{2}}$ $\forall i\in S.$\label{item:item-condSubmodular} 
\end{enumerate} 
\end{claim}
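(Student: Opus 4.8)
The plan is to follow the proof of \cref{claim:solve-lp1} essentially line by line, replacing the subadditive approximate separation oracle of \cref{claim:appx-separation} by the submodular one of \cref{claim:appx-separationSubmodular} and running the analogue of \cref{alg:solve-lp1} on the pair \eqref{lp:lp1Submodular}/\eqref{LP:lp2Submodular}. The one conceptual difference is where the approximation loss lives. In \cref{claim:appx-separation} the factor $(1+\tfrac{1}{1-\delta})$ multiplies the dual variable $\alpha_j$, which is why \cref{alg:solve-lp1} has to rescale the recovered primal by $\tfrac{1}{1+1/(1-\delta)}$; here the loss $(1-\tfrac1e)$ appears inside the \emph{objective} of \eqref{lp:lp1Submodular}, and the ``approximately violated'' constraint of \cref{claim:appx-separationSubmodular} is $\alpha_j+\sum_{i\in S}\beta_i<\min\{f_j(S),x\}-\sum_{i\in S}\tfrac{\sqrt{c_{ij}x}}{2\sqrt2}$ with no factor on $\alpha_j$. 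Consequently the recovered primal will already be feasible for \eqref{lp:lp1Submodular} without any rescaling, which is exactly why \eqref{item:opt-boundSubmodular} carries no $\tfrac{1}{1+1/(1-\delta)}$ term.

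First I would run the binary search of \cref{alg:solve-lp1} for $\gamma\in[0,\,m\cdot n\cdot\max_{i,j}f_j(\{i\})]$, testing feasibility of \eqref{LP:lp2Submodular} augmented with $\sum_j\alpha_j+\sum_i\beta_i\le\gamma$ via the ellipsoid method and \cref{claim:appx-separationSubmodular}. It terminates with $\gamma_-,\gamma_+$, $\gamma_+-\gamma_-<\epsilon$, where the approximate-constraint system is declared infeasible below $\gamma_-$, and where the oracle certifies (by item~1 of \cref{claim:appx-separationSubmodular}) that the true constraints of \eqref{LP:lp2Submodular} hold at $\gamma_+$; hence $\gamma_+\ge D^\star_S=P^\star_S$. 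During the $\gamma_-$ run the ellipsoid method outputs polynomially many separating hyperplanes, i.e.\ polynomially many triples $(j,x,S)$ with $S\subseteq A_{j,x}$, and these form the tight constraints of the ``small'' LP $\min\{\sum_j\alpha_j+\sum_i\beta_i:\alpha_j+\sum_{i\in S}\beta_i\ge\min\{f_j(S),x\}-\sum_{i\in S}\tfrac{\sqrt{c_{ij}x}}{2\sqrt2}\}$ over just those triples. Since this small LP is infeasible for objective $<\gamma_-$, strong LP duality yields a feasible solution $y$ to its dual — which is precisely the restriction of \eqref{lp:lp1Submodular} to those triples (the same two packing constraints, objective $\sum y_{j,x,S}(\min\{f_j(S),x\}-\tfrac{\sqrt x}{2\sqrt2}\sum_{i\in S}\sqrt{c_{ij}})$) — with objective value $\ge\gamma_-\ge\gamma_+-\epsilon\ge P^\star_S-\epsilon$. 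The small LP has polynomially many constraints, so its dual has polynomially many variables, which is \eqref{item:poly-manySubmodular}. Feasibility for \eqref{lp:lp1Submodular} and \eqref{item:opt-boundSubmodular} are then immediate: the packing constraints of the reduced primal are literally those of \eqref{lp:lp1Submodular}, and since $(1-\tfrac1e)\min\{f_j(S),x\}\le\min\{f_j(S),x\}$, the \eqref{lp:lp1Submodular}-objective of $y$ is at most $\sum y_{j,x,S}(\min\{f_j(S),x\}-\tfrac{\sqrt x}{2\sqrt2}\sum_{i\in S}\sqrt{c_{ij}})\ge P^\star_S-\epsilon$, with no rescaling step required.

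For \eqref{item:item-condSubmodular} I would invoke the two cleanup while-loops of \cref{alg:solve-lp1}, exactly as in \cref{claim:solve-lp1}. That $f_j(\{i\})\le\delta x$ for every $i\in S$ with $y_{j,x,S}>0$ is immediate from $S\subseteq A_{j,x}$. Removing an agent when $f_j(S)>(1+\delta)x$ leaves $\min\{f_j(S),x\}$ unchanged — by subadditivity of the normalized monotone submodular $f_j$ together with $f_j(\{i\})\le\delta x$, the post-removal value stays $\ge x$ — while strictly lowering the payment term, so the objective does not decrease and the loop ends with $f_j(S)\le(1+\delta)x$. Removing an agent $i$ with $f_j(i\mid S\setminus\{i\})<\sqrt{xc_{ij}}/2\sqrt2$ strictly increases the objective: since $t\mapsto\min\{t,x\}$ is nondecreasing and $1$-Lipschitz, the decrease in $(1-\tfrac1e)\min\{f_j(S),x\}$ is at most $(1-\tfrac1e)f_j(i\mid S\setminus\{i\})<\sqrt{xc_{ij}}/2\sqrt2$, which is exactly the payment saved. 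Both loops run for polynomially many steps, preserve feasibility, and do not enlarge the support, so all three conclusions hold simultaneously. The main obstacle is not any single computation but getting the bookkeeping right — specifically, the standard ``ellipsoid with an approximate oracle'' reduction (already done in \cref{claim:solve-lp1}), together with verifying that placing the $(1-\tfrac1e)$ factor inside the objective rather than on $\alpha_j$ makes the rescaling step of \cref{alg:solve-lp1} unnecessary and still leaves the cleanup loops objective-improving; everything else is a direct transcription of \cref{claim:solve-lp1}.
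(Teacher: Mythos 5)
Your proposal is correct and follows essentially the same route as the paper: run the ellipsoid-based binary search with the approximate separation oracle of \cref{claim:appx-separationSubmodular}, recover a polynomial-support primal from the constraints the ellipsoid method produces at $\gamma_-$, and then apply the two cleanup while-loops to enforce item (3). You correctly identify the one structural difference from \cref{claim:solve-lp1}: because the $\bigl(1-\tfrac1e\bigr)$ loss in \cref{claim:appx-separationSubmodular} sits on the right-hand side of the approximately-violated constraint rather than as a factor on $\alpha_j$, the recovered primal's packing constraints coincide with those of \eqref{lp:lp1Submodular}, so the rescaling step of \cref{alg:solve-lp1} is dropped and item (2) holds without a $\tfrac{1}{1+1/(1-\delta)}$ prefactor. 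The paper leaves this observation implicit (its proof is terse, merely pointing to \cref{claim:solve-lp1} and to the unscaled \eqref{LP:dual-dualSubmodular}), whereas you spell it out; the underlying argument is the same.
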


\begin{proof}

The proof proceeds similarly to the one of \Cref{claim:solve-lp1}.
During the ellipsoid method execution for \eqref{LP:lp2} with the additional constraint that the objective is less than $\gamma_-$, the algorithm identifies polynomially many \emph{approximately} violating constraints (or separating hyperplanes). These constraints correspond to tight constraints in the following program, which is similar to \eqref{LP:lp2Submodular} but expressed in the form of the approximate constraints.
\begin{equation*}
\begin{array}{ll@{}ll}
\text{min} & \sum_{j \in M} \alpha_j + \sum_{i \in N} \beta_i & \\
\text{s.t.} & \alpha_j \geq  \min\{f_j(S),x\} - \frac{\sqrt{x}}{2\sqrt{2}} \sum_{i \in S} (\sqrt{c_{ij}} + \beta_i )&& \forall j \in M, x \in D_j, S \subseteq A_{j,x},\\
& \alpha_j \geq 0, && \forall j \in M,\\
& \beta_i \geq 0, && \forall i \in N.
\end{array}
\end{equation*}

\noindent
Since there are only polynomially many tight constraints, the dual of this program contains only polynomially many non-slack variables. This results in the following ``small'' problem with polynomially many variables and constraints, which can be solved in polynomial time.
\begin{equation*}\label{LP:dual-dualSubmodular}\tag{LP6}
\begin{array}{ll@{}ll}
\text{max} &  \sum_{j \in M} \sum_{x \in D_j} \sum_{S \subseteq A_{j,x}} y_{j,x,S}  \left(\min\{f_j(S),x\} - \frac{\sqrt{x}}{2\sqrt{2}} \sum_{i \in S} \sqrt{c_{ij}} \right), \\
\text{s.t.} & \sum_{x\in D_j}\sum_{ S\subseteq A_{j,x}} y_{j,x, S}\leq 1 \quad \forall j\in M,\nonumber\\
&\sum_{j\in M}\sum_{x\in D_j}\sum_{  S \subseteq A_{j,x} \mid i\in S}y_{j,x, S}\leq 1  \quad \forall i\in N,\nonumber
\end{array}
\end{equation*}
By duality, since the dual is infeasible for $\gamma_-$, the reduced primal LP must have a feasible solution with an objective value of at least $\gamma_-$, which can be computed in polynomial time. This proves \eqref{item:poly-many} and \eqref{item:opt-bound}. 
\eqref{item:item-condSubmodular} can be proved similarly to
\Cref{claim:solve-lp1}.
\end{proof}

\begin{claim}\label{claim:opt-sol-to-optSubmodular}
For any $\epsilon<\delta \min_{j,i}f_j(\{i\})$, the solution computed by Algorithm~\ref{alg:solve-lp1} satisfies
$$\frac{1}{10}\sum_{j \in M}f_j(S^{\star \star}_j)\leq \sum_{j \in M} \sum_{x \in D_j} \sum_{S \subseteq A_{j,x}}y_{j,x,S}  {f_j(S)}.$$
\end{claim}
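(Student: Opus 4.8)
The plan is to follow the proof of Claim~\ref{claim:opt-sol-to-opt} almost verbatim, the only structural change being that the multiplicative loss is the $(1-\tfrac1e)$ factor that is already folded into the objective of \eqref{lp:lp1Submodular}, rather than the capped-demand factor $\tfrac{1}{1+1/(1-\delta)}$. First I would observe that, since the costs and the cap are non-negative,
\[
\sum_{j\in M}\sum_{x\in D_j}\sum_{S\subseteq A_{j,x}}y_{j,x,S}\,f_j(S)\;\geq\;\sum_{j\in M}\sum_{x\in D_j}\sum_{S\subseteq A_{j,x}}y_{j,x,S}\Bigl(\min\{f_j(S),x\}-\tfrac{\sqrt{x}}{2\sqrt{2}}\sum_{i\in S}\sqrt{c_{ij}}\Bigr),
\]
and by Claim~\ref{claim:solve-lp1Submodular}\eqref{item:opt-boundSubmodular} the right-hand side is at least $P^\star_S-\epsilon$. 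So it suffices to lower bound $P^\star_S$, the optimum of \eqref{lp:lp1Submodular}, and this I do by exhibiting a feasible point.

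The feasible point is $S^{\star\star}$. For $j\in I^\star$ let $i_j\in\argmax_{i\in S^{\star\star}_j}f_j(\{i\})$; since a monotone submodular function is subadditive, $f_j(\{i_j\})\leq f_j(S^{\star\star}_j)\leq n\,f_j(\{i_j\})$, so there is $x'_j\in D_j$ with $x'_j/2\leq f_j(S^{\star\star}_j)\leq x'_j$, and, by definition of $I^\star$ and $S^{\star\star}$, every $i\in S^{\star\star}_j$ satisfies $f_j(\{i\})\leq\delta f_j(S^{\star\star}_j)\leq\delta x'_j$, that is, $S^{\star\star}_j\subseteq A_{j,x'_j}$; for $j\notin I^\star$ take $x'_j=0$ and $S^{\star\star}_j=\emptyset$. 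As $S^{\star\star}$ is a partial partition, the assignment $y_{j,x'_j,S^{\star\star}_j}:=1$ (all other variables $0$) is feasible for \eqref{lp:lp1Submodular}. Using $\min\{f_j(S^{\star\star}_j),x'_j\}=f_j(S^{\star\star}_j)$, then $x'_j\leq 2f_j(S^{\star\star}_j)$, and finally Lemma~\ref{lemma:lemma33-multi} applied to each $f_j$ with $S^{\star\star}_j\subseteq S^\star_j$ (which, as noted after that lemma, transfers to each project once restricted to the agents not used elsewhere, giving $\sum_{i\in S^{\star\star}_j}\sqrt{c_{ij}}\leq\sqrt{f_j(S^{\star\star}_j)}$), I obtain
\[
P^\star_S\;\geq\;\sum_{j\in M}\Bigl((1-\tfrac1e)f_j(S^{\star\star}_j)-\tfrac{\sqrt{x'_j}}{2\sqrt{2}}\sum_{i\in S^{\star\star}_j}\sqrt{c_{ij}}\Bigr)\;\geq\;\sum_{j\in M}\Bigl(1-\tfrac1e-\tfrac12\Bigr)f_j(S^{\star\star}_j)=\Bigl(\tfrac12-\tfrac1e\Bigr)\sum_{j\in M}f_j(S^{\star\star}_j).
\]

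Then I would absorb the additive error exactly as in the proof of Claim~\ref{claim:opt-sol-to-opt}: if $\sum_{j\in M}f_j(S^{\star\star}_j)=0$ the inequality is trivial because Algorithm~\ref{alg:solve-lp1} always returns a non-negative solution (line~\ref{state:neg}); otherwise monotonicity forces $\sum_{j\in M}f_j(S^{\star\star}_j)>\min_{j,i}f_j(\{i\})$, so with $\epsilon<\delta\min_{j,i}f_j(\{i\})$ we get $P^\star_S-\epsilon\geq(\tfrac12-\tfrac1e-\delta)\sum_{j\in M}f_j(S^{\star\star}_j)$. Finally, since $\delta=\tfrac1{129}$ one checks $\tfrac12-\tfrac1e-\delta>\tfrac1{10}$, so chaining with the first display yields $\sum_{j}\sum_{x}\sum_{S}y_{j,x,S}f_j(S)\geq P^\star_S-\epsilon\geq\tfrac1{10}\sum_{j\in M}f_j(S^{\star\star}_j)$, which is the claim.

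I do not expect a genuine obstacle; the argument is constant-chasing. The two points to watch are (i) that the $(1-\tfrac1e)$ loss does not incur a second multiplicative factor — unlike the capped-demand route, the submodular approximate oracle of Lemma~\ref{lemma:oracleSubmodular} produces unscaled constraints, which is why Claim~\ref{claim:solve-lp1Submodular}\eqref{item:opt-boundSubmodular} has no leading $\tfrac{1}{1+1/(1-\delta)}$ and \eqref{lp:lp1Submodular} carries no $-\delta(1+\tfrac1{1-\delta})x$ penalty term; and (ii) that the surviving constant $\tfrac12-\tfrac1e\approx0.132$ still exceeds $\tfrac1{10}$ after subtracting the $\delta$ slack, which is precisely why the statement weakens the $\tfrac15$ of Claim~\ref{claim:opt-sol-to-opt} to $\tfrac1{10}$ in the submodular case.
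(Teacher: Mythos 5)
Your proof is correct and follows the same route as the paper's: exhibit $S^{\star\star}$ as a feasible point of \eqref{lp:lp1Submodular} via the discretized caps $x'_j$, invoke Lemma~\ref{lemma:lemma33-multi} to bound the price term by $\tfrac12 f_j(S^{\star\star}_j)$, combine with Claim~\ref{claim:solve-lp1Submodular}\eqref{item:opt-boundSubmodular}, and absorb $\epsilon$ using $\epsilon<\delta\min_{j,i}f_j(\{i\})$ and monotonicity. The only cosmetic difference is that you make the choice $x'_j=0$, $S^{\star\star}_j=\emptyset$ for $j\notin I^\star$ explicit; your remarks (i)--(ii) accurately explain why the submodular case drops both the $\tfrac{1}{1+1/(1-\delta)}$ scaling and the $-\delta(1+\tfrac{1}{1-\delta})x$ penalty, and why the constant weakens from $\tfrac15$ to $\tfrac1{10}$.
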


The constant in this case differs from that in the main result; however, this discrepancy does not change the result - it merely leads to a slightly smaller approximation constant.

\begin{proof}
First,
\begin{eqnarray*}
    \sum_{j \in M} \sum_{x \in D_j} \sum_{S \subseteq A_{j,x}}y_{j,x,S}  {f_j(S)}\geq\sum_{j \in M} \sum_{x \in D_j} \sum_{S \subseteq A_{j,x}}y_{j,x,S}  \left(\min\{f_j(S),x\} -  \sum_{i \in S}\frac{\sqrt{c_{ij} x}}{2\sqrt{2}}  \right),
\end{eqnarray*}
since $c_{ij}\geq 0$ and $x\geq 0$ for all $i\in N, j\in M$ and $x\in D_j$.
By \cref{claim:solve-lp1Submodular}, we have 
$$\sum_{j \in M} \sum_{x \in D_j} \sum_{S \subseteq A_{j,x}}y_{j,x,S}  {f_j(S)}\geq (P^{\star}-\epsilon).$$
As in \Cref{claim:opt-sol-to-opt}, we can show that $S^{\star \star}$ induces a feasible solution to \eqref{lp:lp1Submodular}. 
Hence, similarly:
\begin{align*}
P^{\star}&\geq \sum_{j \in M}\bigg(\left(1-\frac{1}{e}\right)\min\{f_j(S^{\star\star}_j),x'_j\}-\sum_{i\in S^{\star \star}_j}\frac{\sqrt{c_{ij}x'_j}}{2 \sqrt{2}}\bigg)\\
&\geq \sum_{j \in M}\bigg(\bigg(\left(1-\frac{1}{e}\right)f_j(S^{\star\star}_j)-\frac{1}{2}\sum_{i\in S^{\star \star}_j}\sqrt{c_{ij}f_j(S^{\star\star}_j)}\bigg)\\
&\geq\sum_{j \in M}\left(\frac{1}{2}-\frac{1}{e}\right)f_j(S^{\star\star}_j).
\end{align*}
The proof is completed following a similar argument to \Cref{claim:opt-sol-to-opt}. In particular, it holds
\begin{align*}
    \sum_{j \in M} \sum_{x \in D_j} \sum_{S \subseteq A_{j,x}}y_{j,x,S} f_j(S) 
    &\geq \sum_{j \in M} \sum_{x \in D_j} \sum_{S \subseteq A_{j,x}}y_{j,x,S}  \left(\min\{f_j(S),x\} - \frac{\sqrt{x}}{2\sqrt{2}} \sum_{i \in S} \sqrt{c_{ij}} \right)\\
    &\geq (P^{\star}_S-\epsilon)\\
    &\geq \left(\frac{1}{2}-\frac{1}{e}\right) \sum_{j \in M}f_j(S^{\star\star}_j)-\epsilon\\
    & \ge  \left(\frac{1}{2}-\frac{1}{e}-\delta \right)  \sum_{j \in M}f_j(S^{\star\star}_j).
\end{align*}
Since $\delta \leq \frac{1}{129}$, it holds  $\sum_{j \in M} \sum_{x \in D_j} \sum_{S \subseteq A_{j,x}}y_{j,x,S} f_j(S)\ge \frac{1}{10} \sum_{j \in M}f_j(S^{\star\star}_j)$.
This completes the proof. 
\end{proof}

\end{document}